\documentclass[a4paper,10pt]{article}
\usepackage{graphicx}
\usepackage{amsmath}
\usepackage{citesort}
\usepackage{amsfonts}
\usepackage{amssymb}
\usepackage{amsmath}
\usepackage{latexsym}
\usepackage{srcltx}
\usepackage{color}
\usepackage{ntheorem}
\usepackage{braket}


%


\newcommand{\coneD}{{\mathrm D}}

%
%
%
%
%

%

%
%



\newcounter{mnotecount}[section]

\renewcommand{\themnotecount}{\thesection.\arabic{mnotecount}}

\newcommand{\tim}[1]{\mnote{{\bf tim:}#1}}

\newtheorem{theorem}{Theorem}[section]

\newtheorem{lemma}[theorem]{\sc Lemma\rm}

\newtheorem{proposition}[theorem]{\sc Proposition\rm}

\newtheorem{remark}[theorem]{\sc Remark\rm}

\newcommand{\ol}[1]{\overline{#1}{}}

%

%
%
%
%
%
%



%



%

%
%

%
%

%
%


%

%
%
%
%
%
%

%
%
%
%
%
%
%
%




\newcommand{\jlcax}[1]{}
%



\newcommand{\eean}{\nonumber\end{eqnarray}}




























%
%

%

%
%
%
%
%
%
%
%

%

\newcommand{\kk}[1]{}



\newcommand{\beq}{\begin{equation}}

\newcommand{\oY}{\overline Y}





%












\newcommand{\FS}       
                  {F}

\newcommand{\HS} 
       {H_{\mbox{\scriptsize volume}}}

{\ptc{this should be removed in the oberwolfach version}}%

\newcommand{\eeal}[1]{\label{#1}\end{eqnarray}}
\newcommand{\bed}{\begin{deqarr}}
\newcommand{\eed}{\end{deqarr}}
\newcommand{\bedl}[1]{\begin{deqarr}\label{#1}}
\newcommand{\eedl}[2]{\arrlabel{#1}\label{#2}\end{deqarr}}



\newcommand{\bel}[1]{\begin{equation}\label{#1}}
\newcommand{\bea}{\begin{eqnarray}}
\newcommand{\bean}{\begin{eqnarray}\nonumber}
\newcommand{\beal}[1]{\begin{eqnarray}\label{#1}}
\newcommand{\eea}{\end{eqnarray}}


\def\typeout{:<+ #.tex}\include{#}\typeout{:<-}1{\typeout{:<+ #1.tex}\include{#1}\typeout{:<-}}
\newcommand{\qed}{\hfill $\Box$ \medskip}

\newcommand{\be}{\begin{equation}}
\newcommand{\eeq}{\end{equation}}
\newcommand{\ee}{\end{equation}}
\newcommand{\beqa}{\begin{eqnarray}}
\newcommand{\eeqa}{\end{eqnarray}}
\newcommand{\beqan}{\begin{eqnarray*}}
\newcommand{\eeqan}{\end{eqnarray*}}
\newcommand{\ba}{\begin{array}}
\newcommand{\ea}{\end{array}}


\newcommand{\mcM}{{\mycal M}}
\newcommand{\mcD}{{\mycal D}}



\newcommand{\scri}{{\mycal I}}%

\newcommand{\mnote}[1]
{\protect{\stepcounter{mnotecount}}$^{\mbox{\footnotesize
$
\bullet$\themnotecount}}$ \marginpar{
\raggedright\tiny\em
$\!\!\!\!\!\!\,\bullet$\themnotecount: #1} }

\newcommand{\warn}[1]
{\protect{\stepcounter{mnotecount}}$^{\mbox{\footnotesize
$
\bullet$\themnotecount}}$ \marginpar{
\raggedright\tiny\em
$\!\!\!\!\!\!\,\bullet$\themnotecount: {\bf Warning:} #1} }

\newcommand{\eq}[1]{(\ref{#1})}




\newcommand{\ptc}[1]{\mnote{{\bf ptc:}#1}}


\newcommand{\mcL}{{\mycal L}}

\newcommand{\beqar}{\begin{deqarr}}
\newcommand{\eeqar}{\end{deqarr}}

\newcommand{\beaa}{\begin{eqnarray*}}
\newcommand{\eeaa}{\end{eqnarray*}}

\DeclareFontFamily{OT1}{rsfs}{}
\DeclareFontShape{OT1}{rsfs}{m}{n}{ <-7> rsfs5 <7-10> rsfs7 <10-> rsfs10}{}
\DeclareMathAlphabet{\mycal}{OT1}{rsfs}{m}{n}

{\catcode `\@=11 \global\let\AddToReset=\@addtoreset}
\AddToReset{equation}{section}

{\catcode `\@=11 \global\let\AddToReset=\@addtoreset}
\AddToReset{figure}{section}

{\catcode `\@=11 \global\let\AddToReset=\@addtoreset}
\AddToReset{table}{section}

\begin{document}


\title{%
KIDs prefer special cones%
\thanks{Preprint UWThPh-2013-31. }
}
\author{Tim-Torben Paetz%
\thanks{E-mail:  Tim-Torben.Paetz@univie.ac.at}  \vspace{0.5em}\\  \textit{Gravitational Physics, University of Vienna}  \\ \textit{Boltzmanngasse 5, 1090 Vienna, Austria }}

\maketitle

\vspace{-0.2em}

\begin{abstract}
As complement to \textit{Class.\ Quantum Grav.}\ \textbf{30} (2013) 235036
we analyze Killing initial data on characteristic Cauchy surfaces in conformally
rescaled vacuum spacetimes satisfying Friedrich's conformal field equations.
As an application, we derive and discuss the KID equations
on a light-cone with vertex at past timelike infinity.
\end{abstract}

\noindent
\hspace{2.1em} PACs number: 04.20.Ex, 04.20.Ha

\tableofcontents

\section{Introduction}

Gaining a better insight into properties and peculiarities of spacetimes which represent (physically meaningful) solutions to Einstein's field equations belongs to the core of the analysis of general relativity.
One question of interest concerns the existence of spacetimes which possess certain symmetry groups, mathematically expressed via
a Lie algebra of Killing vector fields on that spacetime.
A fundamental issue in this context is to systematically construct such spacetimes in terms of an initial value problem.
By that it is meant to supplement the usual constraint equations, which need to be satisfied by a suitably
 specified set of initial data, by some further equations
which make sure that the emerging spacetime contains one or several Killing vector fields.
In vacuum, such Killing Initial Data (KIDs) are well-understood in the spacelike case as well as in the characteristic case, cf.\ \cite{beig,moncrief,CP1} and references therein.
In this article we would like to complement the analysis of the characteristic case given in \cite{CP1} to
spacetimes satisfying Friedrich's conformal field equations, and in particular to analyse
the case where the initial surface is a light-cone with vertex
at past timelike infinity.

In a first step, Section~\ref{sec_KID_unphys}, we translate the Killing equation into the unphysical, conformally rescaled spacetime.
The so-obtained ``unphysical Killing equations'' constitute the main focus of our subsequent analysis.
Assuming the validity of the conformal field equations, recalled in Section~\ref{sec_setting},
we shall derive necessary-and-sufficient conditions on a characteristic initial surface which guarantee
the existence of a vector field satisfying the unphysical Killing equations, cf.\ Theorem~\ref{KID_eqns_main}.%
\footnote{This issue has already been analysed in \cite{kannar2}. However, it is claimed there that regularity of the principal part of a wave equation suffices to
guarantee uniqueness of solutions, and counter-examples of this assertion can be easily constructed.
For instance, let $\Theta$ be the unique solution of the wave-equation $\Box_g \Theta =1$ which vanishes on the initial surface which we assume to be a light-cone $C_O$, i.e.\ $\Theta|_{C_O}=0$. Then $\Theta|_{I^+(O)}>0$, at least sufficiently close to $O$. Consider the non-regular wave-equation $\Box_g f - \frac{1}{\Theta}f=0$. For given initial data $f|_{C_O}=0$ there exist at least 3 solutions: $f=0,\pm \Theta$.
}
In Section~\ref{KID_4dim} we then restrict attention to four spacetime dimensions (it will be indicated
that the higher dimensional case is more intricate). As in \cite{CP1} we shall see that many of the conditions
obtained in Section~\ref{sec_KID_unphys} are automatically satisfied.
The remaining ``KID equations'' are collected in Theorem~\ref{KID_eqns_main_cone2} (cf.\ Proposition~\ref{KID_eqns_main_cone3}) for a light-cone,
and 
in Theorem~\ref{KID_eqns_main_hzpe2} for two characteristic hypersurfaces intersecting transversally.

In Section~\ref{sec_special_cone} we then apply Theorem~\ref{KID_eqns_main_cone2} to the ``special cone'' $C_{i^-}$
whose vertex is located at past timelike infinity (assuming the cosmological constant to be zero). As for ``ordinary cones'' treated in \cite{CP1} it turns out
that some of the KID equations determine a class of candidate fields on the initial surface while the remaining
``reduced KID equations'' provide restrictions on the initial data to make sure that one of these candidate fields does
indeed extend to a spacetime vector field satisfying the unphysical Killing equations.
However, contrary to the ``ordinary case'', and this explains our title,
 on $C_{i^-}$ the candidate fields can be explicitly computed, and, besides, the
reduced KID equations can be given in terms of explicitly known quantities.
The main result for the $C_{i^-}$-cone is the contents of Theorem~\ref{KID_eqns_main_cone_infinity}.

Finally, in Appendix~\ref{app_Fuchsian} we recall a result on Fuchsian ODEs which will be of importance in the main part,
in Appendix~\ref{app_conformal} we review conformal Killing vector fields on the round 2-sphere.

\section{Setting}
\label{sec_setting}

Our analysis will be carried out in the so-called unphysical spacetime $(\mcM,g,\Theta)$, related to the physical spacetime $(\tilde{\mcM\enspace}\hspace{-0.5em},\tilde g)$, $\tilde g$ being a solution to  Einstein's field equations, via a conformal rescaling,
\begin{equation*}
 \tilde g \overset{\phi}{\mapsto} g:=\Theta^2 \tilde g\;, \quad  \tilde{\mcM\enspace}\hspace{-0.5em} \overset{\phi}{\hookrightarrow} \mcM\;,
  \quad \Theta|_{\phi( \tilde{\mcM\enspace}\hspace{-0.5em})}>0
 \;.
\end{equation*}
The part of $\partial \phi( \tilde{\mcM\enspace}\hspace{-0.5em})$ on which the conformal factor $\Theta$ vanishes represents
``infinity'' in the physical spacetime.

In $(\mcM,g,\Theta)$ Einstein's vacuum field equations with cosmological constant $\lambda$ are replaced by Friedrich's conformal field equations (cf.\ e.g.\ \cite{F3}),
which read  in $d\geq 4$ spacetime dimensions
\begin{eqnarray}
 && \nabla_{\rho} d_{\mu\nu\sigma}{}^{\rho} =0\;,
 \label{conf1}
\\
 && \nabla_{\mu} L_{\nu\sigma} - \nabla_{\nu}L_{\mu\sigma} = \Theta^{d-4}\nabla_{\rho}\Theta \, d_{\nu\mu\sigma}{}^{\rho}\;,
 \label{conf2}
\\
 && \nabla_{\mu}\nabla_{\nu}\Theta = -\Theta L_{\mu\nu} + s g_{\mu\nu}\;,
 \label{conf3}
\\
 && \nabla_{\mu} s = -L_{\mu\nu}\nabla^{\nu}\Theta\;,
 \label{conf4}
\\
 && (d-1)(2\Theta s - \nabla_{\mu}\Theta\nabla^{\mu}\Theta )= \lambda \;,
 \label{conf5}
\\
 && R_{\mu\nu\sigma}{}^{\kappa}[ g] = \Theta^{d-3} d_{\mu\nu\sigma}{}^{\kappa} + 2(g_{\sigma[\mu} L_{\nu]}{}^{\kappa}  - \delta_{[\mu}{}^{\kappa}L_{\nu]\sigma} )
 \label{conf6}
\;,
\end{eqnarray}
with $g_{\mu\nu}$, $\Theta$, $s$, $ L_{\mu\nu}$ and $d_{\mu\nu\sigma}{}^{\rho}$ regarded as unknowns.
The trace of \eq{conf3} can be read as the definition of the function $s$,
\begin{equation}
s := \frac{1}{d}g^{\mu\nu}\nabla_{\mu}\nabla_{\nu}\Theta + \frac{1}{2d(d-1)} R\Theta
\label{dfn_s}
 \;.
\end{equation}
The tensor field $L_{\mu\nu}$ is the Schouten tensor
\begin{equation}
  L_{\mu\nu} := \frac{1}{d-2}R_{\mu\nu} - \frac{1}{2(d-1)(d-2)} R g_{\mu\nu}
 \;,
\label{dfn_L}
\end{equation}
while
\begin{equation}
d_{\mu\nu\sigma}{}^{\rho} := \Theta^{3-d}C_{\mu\nu\sigma}{}^{\rho}
\label{dfn_d}
\end{equation}
 is a rescaling of the conformal Weyl tensor $C_{\mu\nu\sigma}{}^{\rho}$.

The conformal field equations are equivalent to the vacuum Einstein equations where $\Theta$ is positive, but remain regular even where $\Theta$ vanishes.
The Ricci scalar $R$ turns out to be a conformal gauge source function which reflects the freedom to choose the conformal factor $\Theta$.
It can be prescribed arbitrarily.

In \eq{conf1}-\eq{conf6} the fields $s$, $ L_{\mu\nu}$ and $d_{\mu\nu\sigma}{}^{\rho}$ are treated as independent of   $g_{\mu\nu}$ and $\Theta$.
However, once a solution has been constructed they are related to   $g_{\mu\nu}$ and $\Theta$ via \eq{dfn_s}-\eq{dfn_d}.
When talking about a solution of the conformal field equations we therefore just need to specify the pair $(g_{\mu\nu},\Theta)$.

The conformal field equations imply a wave equation for the Schouten tensor, which will be of importance later on.
It can be derived as follows: One starts by taking the divergence of \eq{conf2}. Using then \eq{conf1}, \eq{conf3}, \eq{conf6} and the tracelessness of the rescaled
Weyl tensor one finds  (cf.\ \cite{ttp} where the $4$-dimensional case is treated in detail),
\begin{eqnarray*}
 \Box_g L_{\mu\nu} &=&  - 2 R_{\mu}{}^{\alpha}{}_{\nu}{}^{\beta}L_{\alpha\beta} +g_{\mu\nu}|L|^2
 +\frac{1}{2(d-1)} \nabla_{\mu}\nabla_{\nu}R
  + \frac{1}{d-1}RL_{\mu\nu}
\\
&&
 + (d-4)\Big[L_{\mu}{}^{\alpha}L_{\nu\alpha}
+ \Theta^{d-5}\nabla_{\alpha}\Theta\nabla_{\beta}\Theta
 d_{\mu}{}^{\alpha}{}_{\nu}{}^{\beta}\Big]
\;,
\end{eqnarray*}
with $|L|^2 := L_{\mu}{}^{\nu}L_{\nu}{}^{\mu}$.
Supposing that $\Theta$ has no zeros, or that we are in the 4-dimensional case, we can use \eq{conf2} to rewrite this as
\begin{eqnarray}
 \Box_g L_{\mu\nu} &=&   - 2 R_{\mu}{}^{\alpha}{}_{\nu}{}^{\beta}L_{\alpha\beta} +g_{\mu\nu}|L|^2
 +\frac{1}{2(d-1)} \nabla_{\mu}\nabla_{\nu}R
  + \frac{1}{d-1}RL_{\mu\nu}
\nonumber
\\
&&
 + (d-4)\Big[L_{\mu}{}^{\alpha}L_{\nu\alpha}
+2\Theta^{-1}\nabla^{\alpha}\Theta \nabla_{[\alpha}L_{\mu]\nu}\Big]
 \;.
\label{wave_schouten}
\end{eqnarray}

\section{KID equations in the unphysical spacetime}
\label{sec_KID_unphys}

\subsection{The Killing equation in terms of a conformally rescaled metric}

\begin{lemma}
A vector field $\tilde X$ is a Killing vector field
in the physical spacetime $(\tilde{\mcM\enspace}\hspace{-0.5em},\tilde g)$ if and only if
its push-forward $X:= \phi_*\tilde X$ is a
conformal Killing vector field in the unphysical spacetime $(\mcM,g ,\Theta )$
 and satisfies there the equation $ X^{\kappa}\nabla_{\kappa}\Theta =\frac{1}{d}\Theta \nabla_{\kappa} X^{\kappa}$.
\end{lemma}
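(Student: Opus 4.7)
The plan is to translate the Killing condition $\mathcal{L}_{\tilde X}\tilde g=0$ directly under the conformal rescaling $g=\Theta^2\tilde g$ and read off the two equivalent conditions on $X$.

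First I would compute the Lie derivative of $g$ along $X$ using the product rule:
\begin{equation*}
\mathcal{L}_X g \;=\; \mathcal{L}_X(\Theta^2\tilde g) \;=\; 2\Theta(X^{\kappa}\nabla_{\kappa}\Theta)\,\tilde g \;+\; \Theta^2\,\mathcal{L}_X\tilde g\;.
\end{equation*}
Since $X=\phi_*\tilde X$, under the identification provided by $\phi$ one has $\mathcal{L}_X\tilde g=\mathcal{L}_{\tilde X}\tilde g$. Substituting $\tilde g=\Theta^{-2}g$ in the first term gives the key identity
\begin{equation*}
\mathcal{L}_X g \;=\; 2\Theta^{-1}(X^{\kappa}\nabla_{\kappa}\Theta)\,g \;+\; \Theta^2\,\mathcal{L}_{\tilde X}\tilde g\;.
\end{equation*}

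Next I would argue both directions. If $\tilde X$ is Killing, then the last term vanishes and $\mathcal{L}_X g=2\Theta^{-1}(X^{\kappa}\nabla_{\kappa}\Theta)g$ is pure trace, which exhibits $X$ as conformal Killing. Tracing this relation with $g^{\mu\nu}$ yields $2\nabla_{\kappa}X^{\kappa}=2d\Theta^{-1}X^{\kappa}\nabla_{\kappa}\Theta$, i.e.\ $X^{\kappa}\nabla_{\kappa}\Theta=\tfrac{1}{d}\Theta\nabla_{\kappa}X^{\kappa}$. Conversely, if $X$ is a conformal Killing field of $g$, then $\mathcal{L}_X g=\tfrac{2}{d}(\nabla_{\kappa}X^{\kappa})g$; combining this with the extra scalar condition $X^{\kappa}\nabla_{\kappa}\Theta=\tfrac{1}{d}\Theta\nabla_{\kappa}X^{\kappa}$ gives $\mathcal{L}_X g=2\Theta^{-1}(X^{\kappa}\nabla_{\kappa}\Theta)g$. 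Plugging this into the displayed identity forces $\Theta^2\mathcal{L}_{\tilde X}\tilde g=0$, and since $\Theta>0$ on $\phi(\tilde{\mcM\enspace}\hspace{-0.5em})$ we conclude $\mathcal{L}_{\tilde X}\tilde g=0$.

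The only subtlety is the algebraic step identifying ``$\mathcal{L}_Xg$ is proportional to $g$'' with ``$X$ is conformal Killing with divergence-determined conformal factor''; this is routine once one takes the $g$-trace, so I do not expect a serious obstacle. The entire argument is purely local and avoids any use of the conformal field equations, so it applies in every spacetime dimension $d\geq 2$.
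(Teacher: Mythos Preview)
Your proof is correct and is essentially the same argument as the paper's, just phrased via the Leibniz rule for the Lie derivative rather than via the conformal transformation law for the Levi-Civita connection. The paper computes $\tilde\nabla_{(\mu}\tilde X_{\nu)}$ in terms of $\nabla_{(\mu}X_{\nu)}$ using the Christoffel shift under $g=\Theta^2\tilde g$ and arrives at the identical intermediate relation $\nabla_{(\mu}X_{\nu)}=g_{\mu\nu}\Theta^{-1}X^{\kappa}\nabla_{\kappa}\Theta$, which is precisely your identity $\mathcal{L}_X g=2\Theta^{-1}(X^{\kappa}\nabla_{\kappa}\Theta)g+\Theta^2\mathcal{L}_{\tilde X}\tilde g$ specialized to $\mathcal{L}_{\tilde X}\tilde g=0$; your route is marginally cleaner because the Lie derivative is connection-independent, but there is no substantive difference.
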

\begin{proof}
 By definition $\tilde X^{\mu}$ is a Killing field if and only if (set $\tilde X_{\mu}:=\tilde  g_{\mu\nu}\tilde X^{\nu}$ and $X_{\mu}:= g_{\mu\nu}X^{\nu}$)
 \begin{eqnarray}
  &&\tilde\nabla_{(\mu}\tilde  X_{\nu)}=0
   \nonumber
  \\
    \Longleftrightarrow &&\tilde\nabla_{(\mu}(\Theta^{-2} X_{\nu)}) =0
   \nonumber
  \\
  \Longleftrightarrow &&  \nabla_{(\mu}(\Theta^{-2} X_{\nu)})   + 2\Theta^{-2}X_{(\mu}\nabla_{\nu)}\log\Theta
   = g_{\mu\nu}\Theta^{-2}X_{\kappa}\nabla^{\kappa}\log\Theta
      \nonumber
    \\
  \Longleftrightarrow &&\nabla_{(\mu}X_{\nu)}
   = g_{\mu\nu}\Theta^{-1} X_{\kappa}\nabla^{\kappa}\Theta
      \nonumber
     \\
  \Longleftrightarrow &&  \nabla_{(\mu}X_{\nu)} =\frac{1}{d}\nabla_{\kappa} X^{\kappa}\,g_{\mu\nu}
   \quad \& \quad X^{\kappa}\nabla_{\kappa}\Theta =\frac{1}{d}\Theta \nabla_{\kappa} X^{\kappa}
   \label{2_conditions}
 \end{eqnarray}
(note that $\Theta|_{\phi(\tilde{\mcM\enspace}\hspace{-0.5em})}>0$).
 \qed
\end{proof}

\begin{remark}
{\rm
The conditions \eq{2_conditions}, which replace the Killing equation in the unphysical spacetime, make sense also where $\Theta$ is vanishing, supposing that
$g$  can be smoothly extended across $\{ \Theta=0\}$ (note that the conformal Killing equation induces a linear symmetric hyperbolic system of propagation equations for $\phi_*\tilde X$ which implies that $\phi_*\tilde X$ is smoothly extendable
across the conformal boundary \cite{F_lambda}).
}
\end{remark}

\begin{remark}
{\rm
We will refer to \eq{2_conditions} as the \textit{unphysical Killing equations}.
}
\end{remark}

The main object of this work is to extract necessary-and-sufficient conditions on a characteristic initial surface which ensure the existence
of some vector field $X$ which fulfills the unphysical Killing equations, so that its pull-back is a Killing vector field of the physical spacetime.

\subsection{Necessary conditions for the existence of Killing vector fields}

Let us first derive some implications of the unphysical Killing equations \eq{2_conditions} under the hypothesis that the conformal field
equations \eq{conf1}-\eq{conf6} are satisfied.

From the conformal Killing equation we first derive a system of wave equations for $X$ and for the function
\begin{equation}
Y:=\frac{1}{d}\nabla_{\kappa}X^{\kappa}
\end{equation}
(set $\Box_g:=g^{\mu\nu}\nabla_{\mu}\nabla_{\nu}$):
\begin{eqnarray}
\Box_g X_{\mu} + R_{\mu}{}^{\nu}X_{\nu}  + (d-2)\nabla_{\mu}Y &=&0
 \label{wave_X0}
\;,
\\
 \Box_g Y+ \frac{1}{d-1}\Big(\frac{1}{2}X^{\mu}\nabla_{\mu}R+ R Y\Big)   &=&0
 \label{wave_Y0}
 \;.
\end{eqnarray}
With \eq{2_conditions} and \eq{dfn_s}
we find
\begin{eqnarray}
 0 &=& \Box_g(X^{\mu}\nabla_{\mu}\Theta -\Theta Y)
 \nonumber
 \\
 &\equiv& \Box_g X^{\mu} \nabla_{\mu}\Theta
  + X^{\mu}\nabla_{\mu}\Box_g \Theta
   + X^{\mu} R_{\mu}{}^{\nu}\nabla_{\nu}\Theta
    +2\nabla_{\nu}X_{\mu}\nabla^{\mu}\nabla^{\nu}\Theta
\nonumber
    \\
 && - Y \Box_g \Theta-\Theta \Box_g Y-2\nabla_{\mu}\Theta \nabla^{\mu}Y
 \nonumber
 \\
 &=&  d(X^{\mu}\nabla_{\mu}s    + sY - \nabla_{\mu}\Theta \nabla^{\mu}Y)
 \;.
  \label{waveeqn_XnablaTheta0}
\end{eqnarray}
We set
\begin{equation}
 A_{\mu\nu} :=  \nabla_{\mu}X_{\nu} + \nabla_{\nu} X_{\mu} - 2 Y g_{\mu\nu}
 \;.
\end{equation}
Using the second Bianchi identity, \eq{dfn_L}, \eq{wave_X0} and \eq{wave_Y0} we obtain
\begin{eqnarray}
 \hspace{-1em}\Box_gA_{\mu\nu} &\equiv & 2\nabla_{(\mu}\Box_g X_{\nu)}
  + 2R_{\kappa(\mu}\nabla^{\kappa}X_{\nu)}
   - 2R_{\mu}{}^{\alpha}{}_{\nu}{}^{\beta}A_{\alpha\beta}
- 4 R_{\mu\nu}Y
 \nonumber
\\
 && +  2X^{\kappa} \nabla_{(\mu}R_{\nu)\kappa} - 2X^{\kappa}\nabla_{\kappa}R_{\mu\nu}
 -2\Box_g Y g_{\mu\nu}
 \nonumber
\\
 &=& 2R_{(\mu}{}^{\kappa}A_{\nu)\kappa} - 2R_{\mu}{}^{\alpha}{}_{\nu}{}^{\beta}A_{\alpha\beta}
  -2(d-2)( \mcL_XL_{\mu\nu} + \nabla_{\mu}\nabla_{\nu}Y)\;,
      \label{waveeqn_A0}
\end{eqnarray}
(Recall that
\begin{eqnarray*}
 \mcL_X L_{\mu\nu} \equiv X^{\kappa}\nabla_{\kappa}L_{\mu\nu} +2 L_{\kappa(\mu}\nabla_{\nu)}X^{\kappa}
  \;.\;)
 \end{eqnarray*}
Hence the conformal Killing equation for $X$, which is $A_{\mu\nu}=0$,  implies
\begin{eqnarray}
 B_{\mu\nu} := \mcL_XL_{\mu\nu} +   \nabla_{\mu}\nabla_{\nu}Y  =0
 \;.
\end{eqnarray}

\subsection{KID equations on a characteristic initial surface}

\subsubsection{First main result}

We are now in a position to formulate our first main result.
Here and in the following we use an overbar to denote restriction to the initial surface.
\begin{theorem}
 \label{KID_eqns_main}
Assume we have been given,
in dimension $d\geq 4$,  an``unphysical'' spacetime $(\mcM, g, \Theta)$, with ($g, \Theta$) a smooth solution of the conformal field equations \eq{conf1}-\eq{conf6}.
Assume further that $\Theta$ is bounded away from zero if $d \geq 5$.
Consider some characteristic initial surface $N\subset \mcM$ (for definiteness we think of  a light-cone or two transversally intersecting null hypersurfaces).
 Then there exists a vector field 
satisfying the unphysical Killing equations \eq{2_conditions} on $\mathrm{D}^+(N)$
(i.e.\  representing a Killing field of the physical spacetime)
if and only if there exists a vector field $X$ and a function $Y$   which fulfill the following equations (recall the definitions \eq{dfn_s} and \eq{dfn_L} for $s$ and $L_{\mu\nu}$, respectively)
 \begin{enumerate}
  \item[(i)] $\Box_g X_{\mu} + R_{\mu}{}^{\nu}X_{\nu}  + (d-2)\nabla_{\mu}Y= 0$,
  \item[(ii)] $ \Box_g Y+ \frac{1}{d-1}\big(\frac{1}{2}X^{\mu}\nabla_{\mu}R+ R Y\big) =0$,
   \item[(iii)] $ \ol\phi = 0$ with $ \phi :=  X^{\mu}\nabla_{\mu}\Theta - \Theta Y  $,
  \item[(iv)] $\ol\psi=0$ with
 $\psi := X^{\mu}\nabla_{\mu}s    +sY -\nabla_{\mu}\Theta \nabla^{\mu}Y $,
  \item[(v)] $\ol A_{\mu\nu} =0$ with
 $A_{\mu\nu} \equiv \nabla_{\mu}X_{\nu} + \nabla_{\nu} X_{\mu} - 2Y g_{\mu\nu}$,
  \item[(vi)] $\breve{\ol B}_{\mu\nu}:=\ol B_{\mu\nu} - \frac{1}{d}\ol g_{\mu\nu}\ol B_{\alpha}{}^{\alpha}=0$ with
$B_{\mu\nu}\equiv \mcL_XL_{\mu\nu} +   \nabla_{\mu}\nabla_{\nu}Y$.
 \end{enumerate}
\end{theorem}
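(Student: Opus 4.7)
My plan is to prove the two directions separately. The necessity direction essentially collects the derivations already displayed in the excerpt; the substance lies in sufficiency, where one must propagate a candidate field $X$ and auxiliary function $Y$ off the characteristic surface $N$ and show that the constraints $A_{\mu\nu}=0$ and $\phi=0$ are preserved throughout $\mathrm D^+(N)$.

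For necessity, suppose $\tilde X$ is a Killing field of $(\tilde{\mcM\enspace}\hspace{-0.5em},\tilde g)$. Then $X:=\phi_*\tilde X$ satisfies \eqref{2_conditions}, so (iii) and (v) hold globally and a fortiori on $N$. Differentiating the conformal Killing equation, commuting covariant derivatives, and invoking \eqref{conf6} yields (i) and (ii); the computation displayed in \eqref{waveeqn_XnablaTheta0} shows $\Box_g\phi=d\psi$, so $\phi\equiv 0$ forces $\psi\equiv 0$, giving (iv); finally, \eqref{waveeqn_A0} forces $B_{\mu\nu}=0$ once $A_{\mu\nu}=0$, which in particular gives (vi).

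For sufficiency, I would first solve the linear wave system (i)--(ii) for $(X,Y)$ on $\mathrm D^+(N)$ with characteristic data on $N$ chosen so that (iii)--(vi) hold; well-posedness is guaranteed by the standard theory of characteristic initial value problems for linear wave equations (Rendall, Cagnac--Dossa) on a light-cone or on two transversally intersecting null hypersurfaces. Define $A_{\mu\nu}$, $B_{\mu\nu}$, $\phi$, $\psi$ by the formulas in (iii)--(vi). The heart of the proof is to derive a linear homogeneous closed system of wave equations for the quadruple $(A_{\mu\nu},\breve B_{\mu\nu},\phi,\psi)$. The wave equation \eqref{waveeqn_A0} already gives $\Box_gA$ in terms of $A$ and $B$; tracing the definition of $B$, using \eqref{dfn_L}, (i) and (ii), one checks that $g^{\mu\nu}B_{\mu\nu}=L_{\mu\nu}A^{\mu\nu}$, so $B$ itself is expressed algebraically through $(A,\breve B)$. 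A wave equation for $\breve B_{\mu\nu}$ is then obtained by applying $\Box_g$ to $B_{\mu\nu}$, commuting $\Box_g$ with $\mcL_X$, and substituting the wave equation \eqref{wave_schouten} for the Schouten tensor together with (i), (ii), and the conformal field equations. Finally, wave equations for $\phi$ and $\psi$ are produced by applying $\Box_g$ to their definitions and using \eqref{conf3}--\eqref{conf4}; the calculation \eqref{waveeqn_XnablaTheta0} already shows $\Box_g\phi=d\psi$. The conditions (iii)--(vi) ensure that the characteristic initial data for the resulting linear homogeneous system vanish on $N$, and uniqueness for the characteristic problem then yields $A\equiv\breve B\equiv\phi\equiv\psi\equiv 0$ on $\mathrm D^+(N)$; combined with $g^{\mu\nu}B_{\mu\nu}=L_{\mu\nu}A^{\mu\nu}$, this shows the unphysical Killing equations \eqref{2_conditions} hold on $\mathrm D^+(N)$.

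The main obstacle is to verify that the wave system for $(A,\breve B,\phi,\psi)$ genuinely closes, with no forcing by uncontrolled quantities. This is precisely where the dimensional hypothesis enters: the derivation of \eqref{wave_schouten} required either $\Theta$ nowhere-vanishing or $d=4$, because in higher dimensions the term $\Theta^{d-5}\nabla_\alpha\Theta\nabla_\beta\Theta\, d_{\mu}{}^{\alpha}{}_{\nu}{}^{\beta}$ cannot be rewritten through \eqref{conf2} when $\Theta$ has zeros. A second delicate point is that (vi) only prescribes the trace-free part of $B$ on $N$; its trace part is controlled only via the identity $g^{\mu\nu}B_{\mu\nu}=L_{\mu\nu}A^{\mu\nu}$, so the vanishing of $B$ on $N$ requires (v) and (vi) together with (i) and (ii), which is why the theorem states the condition in trace-free form.
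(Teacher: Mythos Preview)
Your necessity argument is fine, and your overall plan for sufficiency---derive a closed homogeneous wave system and invoke uniqueness for the characteristic problem---is exactly the paper's strategy. The identity $g^{\mu\nu}B_{\mu\nu}=L^{\mu\nu}A_{\mu\nu}$ you quote is correct and is used in the paper as well. However, there is a genuine gap in the step where you claim the system for $(A,\breve B,\phi,\psi)$ closes.

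When you compute $\Box_g B_{\mu\nu}$ (equivalently $\Box_g\breve B_{\mu\nu}$), the result unavoidably contains \emph{second} covariant derivatives of $A$, of the schematic form $L^{\alpha\beta}\nabla_{\beta}\nabla_{[\alpha}A_{\nu]\mu}$; see \eqref{waveeqn_B}. These do not reduce algebraically to $(A,\breve B,\phi,\psi)$ and first derivatives thereof, so your quadruple is not a closed hyperbolic system in the sense required for the standard uniqueness results. The paper fixes this by enlarging the system: one treats $\nabla_{\sigma}A_{\mu\nu}$ as an additional unknown and derives a wave equation for it, \eqref{waveeqn_nabla_A}, obtained by differentiating \eqref{waveeqn_A}. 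The resulting quintuple $(A,\nabla A,B,\phi,\psi)$ does close.

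This enlargement is not free: you now need vanishing initial data for $\nabla_{\sigma}A_{\mu\nu}$, and conditions (iii)--(vi) give only $\ol A_{\mu\nu}=0$ and $\ol B_{\mu\nu}=0$. The tangential derivatives of $A$ vanish trivially, but the transverse derivative $\ol{\nabla_0 A_{\mu\nu}}$ does not follow directly. The paper supplies this as a separate lemma (Lemma~\ref{vanishing_nabla_A}): evaluating the wave equation for $A$ on $N$ in adapted null coordinates produces a hierarchical system of linear transport ODEs along the null generators for the components $\ol{\nabla_0 A_{\mu\nu}}$, with coefficients governed by the null expansion $\tau$. On a light-cone these are Fuchsian at the vertex and regularity forces the trivial solution; on two transversally intersecting null hypersurfaces they are regular ODEs with vanishing data on the intersection $S$. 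Without this step the uniqueness argument does not go through, and your proposal omits it entirely.
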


\begin{proof}
``$\Longrightarrow$'': Follows from the considerations above if one takes $X=\hat X$ and $Y=\frac{1}{d}\nabla_{\kappa}\hat X^{\kappa}$.

``$\Longleftarrow$'': We will derive a homogeneous system of wave equations from which we conclude the vanishing of $A_{\mu\nu}$ and $\phi$ as well as the relation $Y=\frac{1}{d}\nabla_{\kappa}X^{\kappa}$, and thus the validity of \eq{2_conditions}.
Since by assumption \eq{wave_X0} and \eq{wave_Y0} hold we can repeat the steps which led us to \eq{waveeqn_A0},
\begin{eqnarray}
 \Box_gA_{\mu\nu}  &=& 2R_{(\mu}{}^{\kappa}A_{\nu)\kappa} - 2R_{\mu}{}^{\alpha}{}_{\nu}{}^{\beta}A_{\alpha\beta}
  -2(d-2)B_{\mu\nu}
 \;.
      \label{waveeqn_A}
\end{eqnarray}
With (i), (ii) and the definition \eq{dfn_s} of $s$ we find
\begin{eqnarray}
 \Box_g\phi &\equiv& \Box_g X_{\mu} \nabla^{\mu}\Theta
  + X^{\mu}\nabla_{\mu}\Box_g \Theta
   + X^{\mu} R_{\mu}{}^{\nu}\nabla_{\nu}\Theta
    +2\nabla_{\mu}X_{\nu}\nabla^{\mu}\nabla^{\nu}\Theta
\nonumber
    \\
 && - Y \Box_g \Theta-\Theta \Box_g Y-2\nabla_{\mu}\Theta \nabla^{\mu}Y
 \nonumber
 \\
 &=& d\psi - \frac{1}{2(d-1)}R\phi+ A_{\mu\nu} \nabla^{\mu}\nabla^{\nu}\Theta
 \;.
  \label{waveeqn_XnablaTheta}
\end{eqnarray}
We use (i), (ii), \eq{dfn_s}, \eq{dfn_L} as well as the conformal field equations \eq{conf3} and \eq{conf4}
(which imply 
$\Box_g s = \Theta|L|^2 - \frac{1}{2(d-1)}(sR + \nabla^{\mu}\Theta\nabla_{\mu}R)$) to obtain
\begin{eqnarray}
 \Box_g\psi &\equiv&
 \Box_gX_{\mu}\nabla^{\mu}s
  + X^{\mu}\nabla_{\mu}\Box_gs +  X^{\mu}R_{\mu}{}^{\nu}\nabla_{\nu}s
 + A_{\mu\nu}\nabla^{\mu}\nabla^{\nu}s
 \nonumber
\\
 &&    +3Y\Box_gs +s\Box_gY+ 2\nabla_{\nu}s\nabla^{\nu}Y
  -\nabla_{\mu}\Box_g\Theta \nabla^{\mu}Y - 2R_{\mu}{}^{\nu}\nabla_{\nu}\Theta \nabla^{\mu}Y
 \nonumber
\\
&&  -\nabla^{\mu}\Theta\nabla_{\mu}\Box_gY
 - 2\nabla_{\mu}\nabla_{\nu}\Theta \nabla^{\mu}\nabla^{\nu}Y
 \nonumber
\\
  &=& |L|^2 \phi
 + A_{\mu\nu}(\nabla^{\mu}\nabla^{\nu}s - 2\Theta  L_{\kappa}{}^{\mu} L^{\nu\kappa}) + 2\Theta L^{\mu\nu} B_{\mu\nu}
\nonumber
\\
&& + \frac{1}{2(d-1)}\big(  A_{\mu\nu}\nabla^{\mu}R  \nabla^{\nu}\Theta   - \nabla^{\mu}R \nabla_{\mu}\phi - R\psi   \big)
 \;.
  \label{waveeqn_Xnabla_s}
\end{eqnarray}
%
%
As a straightforward consequence of the first Bianchi identity we observe the identity
\begin{eqnarray}
 \frac{1}{2}\nabla_{\mu}A_{\nu\kappa}+ \nabla_{[\nu}A_{\kappa]\mu}
\,\equiv\, \nabla_{\mu} \nabla_{\nu} X_{\kappa}+  R_{\nu\kappa\mu}{}^{\alpha}X_{\alpha}  - 2\nabla_{(\mu}Y g_{\nu)\kappa}
+   \nabla_{\kappa}Y g_{\mu\nu}
 \;.
\label{relation_nabla_A}
\end{eqnarray}
Using this as well as the Bianchi identities a lengthy computation reveals that
\begin{eqnarray}
\Box_g B_{\mu\nu} &\equiv&
(\nabla_{(\mu}A_{|\alpha\beta|}  +  2\nabla_{[\alpha}A_{\beta](\mu})(2\nabla^{\alpha}L_{\nu)}{}^{\beta}
-\frac{1}{4(d-1)}\delta_{\nu}{}^{\alpha}\nabla^{\beta}R)
\nonumber
\\
&&\hspace{-3em}
+A^{\alpha\beta} (\nabla_{\alpha}\nabla_{\beta}L_{\mu\nu}
- 2 L_{(\mu}{}^{\kappa} R_{\nu)\alpha\kappa\beta}
+ \frac{1}{d-2}R_{\mu\alpha} R_{\nu\beta})
\nonumber
\\
&&\hspace{-3em}
- \frac{R^2}{4(d-1)^2(d-2)}A_{\mu\nu}
 + \nabla_{\mu}\nabla_{\nu}[\Box_g Y    + \frac{1}{d-1}(\frac{1}{2}X^{\kappa}\nabla_{\kappa}R+ R Y)]
\nonumber
\\
&&\hspace{-3em}
+(2 L_{(\mu}{}^{\kappa}\nabla_{\nu)} + \nabla^{\kappa}L_{\mu\nu})  [\Box_g X_{\kappa} + R_{\kappa}{}^{\alpha}X_{\alpha}  + (d-2)\nabla_{\kappa}Y ]
\nonumber
\\
&&\hspace{-3em}
+(X^{\kappa}\nabla_{\kappa}+2Y)\Box_g  L_{\mu\nu}
+2\Box_gL_{\kappa(\mu}\nabla_{\nu)}X^{\kappa}
  - \frac{1}{2(d-1)}X^{\kappa}\nabla_{\kappa}\nabla_{(\mu}\nabla_{\nu)}R
\nonumber
\\
&&\hspace{-3em}
- 2 L_{\kappa}{}^{\alpha}R_{(\mu}{}^{\kappa}\nabla_{\nu)} X_{\alpha}
  - 2X^{\beta} L_{(\mu}{}^{\kappa} \nabla_{\beta}R_{\nu)\kappa}
 -  2L_{\mu}{}^{\kappa} R_{\nu\kappa}Y
- 2R_{\mu}{}^{\alpha}{}_{\nu}{}^{\beta}\nabla_{\alpha}\nabla_{\beta}Y
\nonumber
\\
&&\hspace{-3em}
+ \frac{1}{d-1}[R L_{\mu\nu} Y-  RL_{\beta(\mu}\nabla^{\beta}X_{\nu)}
-Y\nabla_{\mu}\nabla_{\nu} R
 -\nabla_{(\mu}X^{\kappa}\nabla_{\nu)} \nabla_{\kappa}R]
\nonumber
\\
 &&\hspace{-3em}
 +2 (d-4)\nabla^{\alpha}Y (\nabla_{(\mu}L_{\nu)\alpha}  -\nabla_{\alpha}L_{\mu\nu} )
 \;.
\label{first_wave_B}
\end{eqnarray}
There is a second useful relation satisfied  by $A_{\mu\nu}$ which follows from the Bianchi identities,
\begin{eqnarray}
  &&\hspace{-3em}  2L^{\alpha\beta}(\nabla_{\beta}\nabla_{[\alpha}A_{\nu]\mu}  - \nabla_{\mu}\nabla_{[\alpha}A_{\nu]\beta})
 \nonumber
\\
&=&  2 L^{\alpha\beta}X^{\kappa}\nabla_{\kappa}R_{\alpha(\mu\nu)\beta}
+ L^{\alpha\beta}( R_{\beta\nu\mu}{}^{\kappa}A_{\alpha\kappa}
-  R_{\mu\beta\alpha}{}^{\kappa} A_{\nu\kappa}  )
\nonumber
\\
&& + 4L^{\alpha\beta}( R_{\alpha(\mu\nu)}{}^{\kappa}\nabla_{[\beta}X_{\kappa]}
+ R_{(\mu|\beta \alpha|}{}^{\kappa}\nabla_{\nu)} X_{\kappa} )
\nonumber
\\
&&  - 2g_{\mu\nu} L^{\alpha\beta}\nabla_{\alpha}\nabla_{\beta}Y
+ 4L_{(\mu}{}^{\beta}\nabla_{\nu)}\nabla_{\beta}Y
- \frac{1}{d-1}R\nabla_{\mu}\nabla_{\nu}Y
\:.
\label{relation_nabla_A2}
\end{eqnarray}
Employing (i), (ii), the conformal field equations, the wave equation for the Schouten tensor \eq{wave_schouten} as well as the identities \eq{relation_nabla_A} and \eq{relation_nabla_A2} we deduce from \eq{first_wave_B} in another tedious computation that
\begin{eqnarray}
\Box_g B_{\mu\nu} &\equiv&
 2(g_{\mu\nu}L^{\alpha \beta}-   R_{\mu}{}^{\alpha}{}_{\nu}{}^{\beta})B_{\alpha\beta}
 - 2 R_{(\mu}{}^{\kappa} B_{\nu)\kappa}
+  \frac{2}{d-1}RB_{\mu\nu}
\nonumber
\\
&&\hspace{-3em}
   + 2L^{\alpha\beta}(\nabla_{\beta}\nabla_{[\alpha}A_{\nu]\mu}  -  \nabla_{\mu}\nabla_{[\alpha}A_{\nu]\beta})
\nonumber
\\
&&\hspace{-3em}
+ (\nabla_{(\mu}A_{|\alpha\beta|}  +  2\nabla_{[\alpha}A_{\beta](\mu})(2\nabla^{\alpha}L_{\nu)}{}^{\beta}
-\frac{1}{4(d-1)}\delta_{\nu)}{}^{\alpha}\nabla^{\beta}R)
\nonumber
\\
&&\hspace{-3em}
+A^{\alpha\beta} [\nabla_{\alpha}\nabla_{\beta}L_{\mu\nu}
- 2 L_{(\mu}{}^{\kappa} R_{\nu)\alpha\kappa\beta}
+ 2L_{\mu\alpha} R_{\nu\beta}
+ L_{\alpha}{}^{\kappa} (2R_{\mu\beta\nu\kappa}+  R_{\nu\beta\mu\kappa})
\nonumber
\\
&&\hspace{-3em}
-  2g_{\mu\nu} L_{\alpha\kappa}L_{\beta}{}^{\kappa}]
 + |L|^2 A_{\mu\nu}
+  L^{\alpha\beta} R_{\mu\alpha\beta}{}^{\kappa} A_{\nu\kappa}
-  \frac{1}{d-1}RL_{(\mu}{}^{\kappa}A_{\nu)\kappa}
\nonumber
\\
&&\hspace{-3em}
+  (d-4)[2L_{(\mu}{}^{\beta} B_{\nu)\beta} - L_{\mu}{}^{\alpha}L_{\nu}{}^{\beta}A_{\alpha\beta}
-\Theta^{d-5}  \nabla_{\alpha}\Theta d_{\kappa\mu\nu}{}^{\alpha}    ( \nabla^{\kappa}\phi
-    \nabla_{\beta}\Theta A^{\kappa\beta} )]
\nonumber
\\
&&\hspace{-3em}
 -(d-4) \Theta^{d-6} \phi \nabla_{\alpha}\Theta \nabla_{\beta}\Theta d_{\mu}{}^{\alpha}{}_{\nu}{}^{\beta}
\nonumber
\\
&&\hspace{-3em}
+  (d-4)\Theta^{-1}\nabla^{\alpha}\Theta [  2\nabla_{[\alpha} B_{\mu]\nu} + (\nabla_{[\mu}A_{|\nu\kappa|}+2  \nabla_{[\nu}A_{\kappa][\mu} )L_{\alpha]}{}^{\kappa}]
 \;.
\label{waveeqn_B}
\end{eqnarray}
%
%
While  the before-last line contains negative powers of $\Theta $  merely in 5 dimensions, the  last line contains such powers in any
dimension $d\geq 5$. This is the point where our assumption enters that  $\Theta$ is bounded away from zero for  $d\geq 5$, since this ensures that
\eq{waveeqn_B} is a regular equation also in higher dimensions.

Note that the right-hand side of \eq{waveeqn_B} involves second-order derivatives of $A_{\mu\nu}$, which is why we regard $\nabla_{\sigma}A_{\mu\nu}$
as another unknown for which we derive a wave equation. However, since the right-hand side of \eq{waveeqn_A} does not involve derivatives,
such a wave equation is easily obtained by differentiation (and, once again, the second Bianchi identity),
\begin{eqnarray}
\hspace{-2em}\Box_g\nabla_{\sigma}A_{\mu\nu}  &=&
  2\nabla_{\sigma}(R_{(\mu}{}^{\kappa}A_{\nu)\kappa}
 - R_{\mu}{}^{\alpha}{}_{\nu}{}^{\kappa}A_{\alpha\kappa}  )
   + 2A_{\alpha(\mu}(\nabla_{\nu)}R_{\sigma}{}^{\alpha} - \nabla^{\alpha}R_{\nu)\sigma} )
\nonumber
\\
 &&
  -4R_{\sigma\kappa(\mu}{}^{\alpha}\nabla^{\kappa}A_{\nu)\alpha}
+ R_{\alpha\sigma}\nabla^{\alpha}A_{\mu\nu}
  - 2(d-2)\nabla_{\sigma}B_{\mu\nu}.
      \label{waveeqn_nabla_A}
\end{eqnarray}
In the current setting the equations \eq{waveeqn_A}-\eq{waveeqn_Xnabla_s}, \eq{waveeqn_B} and \eq{waveeqn_nabla_A}
 form a closed homogeneous system of regular wave equations for $ A_{\mu\nu}$, $\phi$,  $\psi$, $B_{\mu\nu}$ and $\nabla_{\sigma}A_{\mu\nu}$.
The assumptions (iii)-(v) assure that the first three fields vanish initially.
By (ii) and (v) we have
\begin{equation*}
  \ol B_{\alpha}{}^{\alpha} = \frac{1}{2(d-1)} \ol{X^{\kappa}\nabla_{\kappa}R }+  2 \ol{L^{\mu\nu}\nabla_{\mu}X_{\nu}} + \ol{\Box_g Y} = \ol L^{\mu\nu}\ol A_{\mu\nu} =0
 \;,
\end{equation*}
which, together with (vi), implies
\begin{equation}
\ol B_{\mu\nu}=0
\;.
\label{vanishing_olB}
\end{equation}
It remains to verify the vanishing of $\ol{\nabla_{\sigma}A_{\mu\nu}}$.
This follows from Lemma~\ref{vanishing_nabla_A} below, together with (i), (ii), (v) and \eq{vanishing_olB}.
We thus have vanishing initial data for the homogeneous system of wave equations  \eq{waveeqn_A}-\eq{waveeqn_Xnabla_s}, \eq{waveeqn_B} and \eq{waveeqn_nabla_A}.
Due to standard uniqueness results for wave equations all the fields
involved need to vanish identically.

It is important to note that we had to treat $X$ and $Y$ as independent unknowns so far. The vanishing of $A_{\mu\nu}$ and $\phi$ implies that the unphysical Killing equations  \eq{2_conditions} hold for $X$ only once we have shown that $Y=\frac{1}{d}\nabla_{\kappa}X^{\kappa}$.
Fortunately we have
\begin{equation}
 0 = A_{\alpha}{}^{\alpha} = 2\nabla_{\kappa}X^{\kappa}-2dY
\;,
\end{equation}
and the theorem is proved.
\qed
\end{proof}

\subsubsection{Adapted null coordinates}

Before we state and prove Lemma~\ref{vanishing_nabla_A}, which is needed to complete the proof of Theorem~\ref{KID_eqns_main}, it is useful, also with regard to later purposes, to
introduce \textit{adapted null coordinates} on light-cones and on transversally intersecting null hypersurfaces. We will be rather sketchy here, the details can be found e.g.\ in \cite{CCM2,rendall}.

First let us consider a light-cone $C_O\subset \mcM$ with vertex $O\in \mcM$ in a $d=n+1$-dimensional spacetime $(\mcM,g)$.
 We use coordinates $(x^0=0,x^1=r,x^A)$, $A=2,\dots,n$, adapted to $C_O$ in the sense that $C_O\setminus\{O\} = \{ x^0=0\}$,
$r$ parameterizes the null geodesics generating the initial surface, and the $x^A$'s are local coordinates on the level sets $\{x^0=0,r=\mathrm{const}\}\cong S^{n-1}$.
On $C_O$ the metric then reads
\begin{equation*}
  \ol g = \ol g_{00} ( \mathrm{d}x^{0})^2 + 2\nu_0  \mathrm{d}x^{0} \mathrm{d}x^{1}
 + 2\nu_A  \mathrm{d}x^{0} \mathrm{d}x^{A} + \ol g_{AB}  \mathrm{d}x^{A} \mathrm{d}x^{B}
\;.
\end{equation*}
Note that these coordinates are singular at the vertex of the cone. Moreover, we stress that we do not impose any gauge condition off the cone.
The inverse metric takes the form
\begin{equation*}
 \ol g^{\sharp} = 2 \nu^0 \partial_{0} \partial_1   + \overline{g}^{11}\partial^{2}_{1} + 2 \overline{g}^{1A} \partial_1  \partial_{A}
+ \overline{g}^{AB}\partial_{A} \partial_{B}
\;,
\end{equation*}
with
\begin{equation*}
 \nu^0=(\nu_0)^{-1}\;, \quad \ol g^{1A} = -\nu^0\ol g^{AB}\nu_B\;, \quad \ol g^{11} = (\nu^0)^2(\ol g^{AB}\nu_A\nu_B - \ol g_{00})
 \;.
\end{equation*}
It is customary to introduce the following quantities:
\begin{eqnarray*}
 \chi_{A}{}^B &:=& \frac{1}{2}\ol g^{BC}\partial_1\ol g_{AC} \quad \text{null second fundamental form}
\;,
\\
 \tau &:=& \chi_A{}^A \quad \text{expansion}
\;,
\\
 \sigma_{A}{}^B &:=& \chi_A{}^B - \frac{1}{n-1}\tau \delta_A{}^B \quad \text{shear tensor}
\;.
\end{eqnarray*}

Next, let us consider two smooth hypersurfaces $N_a$, $a=1,2$, with transverse intersection along a smooth submanifold $S$.
Then, near the $N_a$'s one can introduce coordinates $(x^1,x^2,x^A)$, $A=3,\dots,n+1$, such that $N_a=\{ x^a=0\}$.
On $N_1$ the coordinate $x^2$ parameterizes the null geodesics
$\{x^1=0,x^A=\mathrm{const}^A\}$ generating $N_1$ and vice versa.
Since the hypersurfaces are required to be characteristic the metric takes there the specific form, on $N_1$ say,
\begin{equation*}
  g|_{N_1}= \ol g_{11} ( \mathrm{d}x^{1})^2 + 2\ol g_{12}  \mathrm{d}x^{1} \mathrm{d}x^{2}
 + 2\ol g_{1A}  \mathrm{d}x^{1} \mathrm{d}x^{A} + \ol g_{AB}  \mathrm{d}x^{A} \mathrm{d}x^{B}
 \;,
\end{equation*}
similarly on $N_2$. The quantities $\tau$, $\sigma_A{}^B$, $\chi_{A}{}^B$  are defined on $N_1$ and $N_2$ analogous to the light-cone-case.

\subsubsection{Some useful relations}

In this section we consider a light-cone. However, we note that exactly the same relations hold in
the case of two intersecting null hypersurfaces.

Recall that the wave equations for $X$ and $Y$, \eq{wave_X0} and \eq{wave_Y0}, imply a wave equation which is
satisfied by $A_{\mu\nu}$, namely \eq{waveeqn_A},
\begin{eqnarray}
 \Box_gA_{\mu\nu}  &=& 2 R_{(\mu}{}^{\kappa} A_{\nu)\kappa}
 - 2 R_{\mu}{}^{\alpha}{}_{\nu}{}^{\beta} A_{\alpha\beta}
  -2(d-2) B_{\mu\nu}
 \;.
   \label{waveeqn_Agen}
\end{eqnarray}
Furthermore, one straightforwardly verifies that in adapted null coordinates
\begin{eqnarray}
 \ol{\Box_g A_{\mu\nu}} &=& 2\nu^0(\ol{\nabla_1\nabla_0A_{\mu\nu}} + \ol R_{01(\mu}{}^{\alpha}\ol  A_{\nu)\alpha})
  + \ol g^{11} \ol{\nabla_1\nabla_1A_{\mu\nu}}
   \nonumber
  \\
&&  + 2\ol g^{1A} (\ol{\nabla_1\nabla_AA_{\mu\nu}}
 + \ol R_{A1(\mu}{}^{\alpha} \ol A_{\nu)\alpha})
  + \ol g^{AB}\ol{\nabla_A\nabla_BA_{\mu\nu}}
  \;.
  \label{identity_wave_Agen}
\end{eqnarray}
We equate the trace of \eq{waveeqn_Agen} on the initial surface with \eq{identity_wave_Agen}.
Making use of the formulae for the Christoffel symbols in adapted null coordinates in \cite[Appendix~A]{CCM2}, an elementary calculation yields the following set of equations where $f$, $f_A$ and $f_{AB}$ denote generic (multi-linear) functions which vanish whenever their arguments vanish.

\vspace{1em}

\noindent
\underline{$(\mu\nu)=(11)$:}
\begin{eqnarray}
 (\partial_1 + \frac{\tau}{2}- \ol\Gamma^0_{01} - 2\ol \Gamma^1_{11})\ol{\nabla_0A_{11}}
  = (\ol R_{11}+  |\chi|^2 )\ol A_{01}
  -(d-2)\nu_0\ol B_{11} + f(\ol A_{ij})
\label{ODE_system_A_1}
\end{eqnarray}
\underline{$(\mu\nu)=(1A)$:}
\begin{eqnarray}
 \lefteqn{(\partial_1+\frac{d-4}{2(d-2)}\tau -\nu^0\partial_1\nu_0)\ol{\nabla_0A_{1A}}
 - \sigma_A{}^B\ol{\nabla_0A_{1B}}
= \frac{1}{2}(\ol R_{11}+ |\chi|^2)\ol A_{0A}}
\nonumber
\\
 &&+ (\ol R_{1A1}{}^{B} + \chi_C{}^B\chi_{A}{}^C)\ol A_{0B}
  -(d-2)\nu_0\ol B_{1A} + f_A(\ol A_{ij}, \ol A_{01}, \ol{\nabla_0 A_{11}}) \hspace{1em}
\label{ODE_system_A_2}
\end{eqnarray}
\underline{$(\mu\nu)=(AB)$:}
\begin{eqnarray}
  (\partial_1 + \frac{d-6}{2(d-2)}\tau-\ol \Gamma^0_{01})\ol{\nabla_0A_{AB}}
 -2\sigma_{(A}{}^C\ol {\nabla_0A_{B)C}} \hspace{10em}
\nonumber
\\
=  -\nu^0(\ol R_{1A1B} + \chi_{AC}\chi_{B}{}^C)\ol A_{00}
  -(d-2)\nu_0\ol B_{AB}
+ f_{AB}(\ol A_{ij}, \ol A_{0i}, \ol{\nabla_0 A_{1i}})
\label{ODE_system_A_3}
\end{eqnarray}
\underline{$(\mu\nu)=(01)$:}
\begin{eqnarray}
(\partial_1 + \frac{\tau}{2}+ \ol\Gamma^1_{11}-2\nu^0\partial_1\nu_0)\ol{\nabla_0A_{01}}
  =  -(d-2)\nu_0\ol B_{01}
+ f(\ol A_{\mu\nu}, \ol{\nabla_0 A_{ij}})
\label{ODE_system_A_4}
\end{eqnarray}
\underline{$(\mu\nu)=(0A)$:}
\begin{eqnarray}
(\partial_1 +\frac{d-4}{2(d-2)}\tau - 2\ol\Gamma^0_{01})\ol{\nabla_0A_{0A}}
-\sigma_A{}^B\ol{\nabla_0A_{0B}}
\nonumber
\\
  = -(d-2)\nu_0\ol B_{0A}
+ f_A(\ol A_{\mu\nu}, \ol{\nabla_0 A_{ij}}, \ol{\nabla_0 A_{01}})
\label{ODE_system_A_5}
\end{eqnarray}
\underline{$(\mu\nu)=(00)$:}
\begin{eqnarray}
(\partial_1 + \frac{\tau}{2} -3 \ol \Gamma^0_{01})\ol{\nabla_0A_{00}}
= -(d-2)\nu_0\ol B_{00}
+ f(\ol A_{\mu\nu}, \ol{\nabla_0 A_{ij}}, \ol{\nabla_0 A_{0i}})
\label{ODE_system_A_6}
\end{eqnarray}

\subsubsection{An auxiliary lemma}

\begin{lemma}
\label{vanishing_nabla_A}
Assume that the wave equations for $X$ and $Y$, \eq{wave_X0} and \eq{wave_Y0}, are fulfilled.
Assume further that $\ol A_{\mu\nu} =0 = \ol B_{\mu\nu}$
 on either a light-cone or two transversally intersecting null hypersurfaces.
Then $\ol{\nabla_{\sigma} A_{\mu\nu}}=0$.
\end{lemma}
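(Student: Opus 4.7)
The plan is to reduce the claim to the vanishing of the single transverse component $\ol{\nabla_0 A_{\mu\nu}}$, and then to exploit the ODE system derived just above the lemma, which becomes a triangular homogeneous linear system once the hypotheses $\ol A_{\mu\nu}=0=\ol B_{\mu\nu}$ are inserted.

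\textbf{Step 1: Tangential derivatives vanish for free.} In adapted null coordinates on a light-cone $C_O$, the surface is $\{x^0=0\}$, so the directions $\partial_1,\partial_A$ are tangential; on two transversally intersecting null hypersurfaces, the tangential directions to $N_1$ are $\partial_2,\partial_A$ (analogously for $N_2$). For any tangential index $a$,
\begin{equation*}
\nabla_a A_{\mu\nu}=\partial_a A_{\mu\nu}-\Gamma^\rho_{a\mu}A_{\rho\nu}-\Gamma^\rho_{a\nu}A_{\mu\rho}\,.
\end{equation*}
Restricted to the initial surface, $\partial_a A_{\mu\nu}|_N$ is a derivative along the surface of a function that vanishes identically there, hence zero; the Christoffel terms involve $\ol A_{\mu\nu}$ and are thus also zero. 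So only $\ol{\nabla_0 A_{\mu\nu}}$ (respectively $\ol{\nabla_1 A_{\mu\nu}}$ on $N_2$) remains to be killed.

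\textbf{Step 2: The ODE system collapses to a triangular homogeneous cascade.} Plug $\ol A_{\mu\nu}=0$ and $\ol B_{\mu\nu}=0$ into equations \eq{ODE_system_A_1}--\eq{ODE_system_A_6}. Every explicit algebraic $\ol A$- or $\ol B$-source vanishes, and each generic function $f$, $f_A$, $f_{AB}$ vanishes because all its listed $\ol A_{\mu\nu}$-arguments are zero. What is left is a linear first-order transport system along $\partial_1$ for the six components $\ol{\nabla_0 A_{11}}$, $\ol{\nabla_0 A_{1A}}$, $\ol{\nabla_0 A_{AB}}$, $\ol{\nabla_0 A_{01}}$, $\ol{\nabla_0 A_{0A}}$, $\ol{\nabla_0 A_{00}}$ with strictly triangular dependence in the order indicated: the first equation is fully decoupled; each subsequent one is sourced solely by components already solved for. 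Integrating this cascade reduces the lemma to a uniqueness statement for each single ODE.

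\textbf{Step 3: Initial conditions and uniqueness.} On a light-cone, the transport coefficients $\tfrac12\tau$, $\tfrac{d-4}{2(d-2)}\tau$, $\tfrac{d-6}{2(d-2)}\tau$ and the Christoffel combinations each behave like $O(1/r)$ as $r\to 0$, so every equation in the cascade is Fuchsian at the vertex $O$. Appendix~\ref{app_Fuchsian} then singles out a unique solution compatible with smoothness at $O$; since zero solves each equation and is smooth at $O$, it is \emph{the} solution, and the cascade propagates vanishing from $\ol{\nabla_0 A_{11}}$ downward. In the two-hypersurface case, integration constants along generators of $N_1$ are fixed at the intersection $S=N_1\cap N_2$, where $\ol{\nabla_0 A_{\mu\nu}}|_S$ is a derivative along $N_2$ of $A_{\mu\nu}|_{N_2}\equiv 0$ after the tangential-derivative argument of Step~1 is applied to $N_2$; standard linear ODE uniqueness then delivers vanishing along each generator of $N_1$, and symmetrically for $N_2$.

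\textbf{Main obstacle.} The delicate point is the Fuchsian behaviour at the vertex: one has to verify that the indicial structure of each ODE in the cascade is such that ``smooth at $O$ plus source-free'' forces the solution to be zero rather than to pick up a homogeneous mode with non-trivial angular dependence. This is exactly what the result recalled in Appendix~\ref{app_Fuchsian} guarantees, so the proof boils down to reading off the leading $1/r$-coefficients from \eq{ODE_system_A_1}--\eq{ODE_system_A_6} and checking that they fall in the range where that appendix applies.
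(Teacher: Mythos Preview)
Your proposal is correct and follows essentially the same route as the paper: reduce to the transverse component, feed the hypotheses into \eq{ODE_system_A_1}--\eq{ODE_system_A_6} to obtain a triangular homogeneous Fuchsian cascade, and conclude by the vertex regularity (light-cone) or by vanishing data on $S$ (two hypersurfaces). The paper is only more explicit about the admissible small-$r$ behaviour of each component (its equations (3.32)--(3.33)) and about the metric asymptotics at the tip from \cite{CCM2} needed to read off the indicial coefficients; your ``Main obstacle'' paragraph correctly identifies this as the one nontrivial verification left to carry out.
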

\begin{proof}
We start with the light-cone case.
By assumption the equations \eq{ODE_system_A_1}-\eq{ODE_system_A_6} hold.
Invoking  $\ol A_{\mu\nu} =0 = \ol B_{\mu\nu}$ they become
%
%
\begin{eqnarray}
 \label{31VIII13.1}
 (\partial_1 + \frac{\tau}{2}- \ol\Gamma^0_{01} - 2\ol \Gamma^1_{11})\ol{\nabla_0A_{11}}  &=& 0
 \;,
\\
  (\partial_1+\frac{d-4}{2(d-2)}\tau -\nu^0\partial_1\nu_0)\ol{\nabla_0A_{1A}}
 - \sigma_A{}^B\ol{\nabla_0A_{1B}}
 &=& f_A(\ol{\nabla_0A_{11}})
 \;,
\\
    (\partial_1 + \frac{d-6}{2(d-2)}\tau-\ol \Gamma^0_{01})\ol{\nabla_0A_{AB}}
 -2\sigma_{(A}{}^C\ol {\nabla_0A_{B)C}}
 &=& f_{AB}(\ol{\nabla_0A_{1i}})
 \;,
\\
  (\partial_1 + \frac{\tau}{2}+ \ol\Gamma^1_{11}-2\nu^0\partial_1\nu_0)\ol{\nabla_0A_{01}}
 &=& f(\ol{\nabla_0A_{ij}})
 \;,
\\
   (\partial_1 +\frac{d-4}{2(d-2)}\tau - 2\ol\Gamma^0_{01})\ol{\nabla_0A_{0A}}
-\sigma_A{}^B\ol{\nabla_0A_{0B}}
 &=& f_A(\ol{\nabla_0A_{ij}}, \ol{\nabla_0A_{01}})
 \;,
  \phantom{xxxx}
\\
   (\partial_1 + \frac{\tau}{2} -3 \ol \Gamma^0_{01})\ol{\nabla_0A_{00}}
 &=& f(\ol{\nabla_0A_{ij}}, \ol{\nabla_0A_{0i}})
 \;.
  \label{31VIII13.2}
\end{eqnarray}
This is a hierarchical system of  Fuchsian ODEs which can be solved step-by-step.
Regularity requires the following behaviour of $\ol{\nabla_0 A_{\mu\nu}}$ near the vertex:
\begin{eqnarray}
&  \ol{\nabla_0 A_{11}} = O(1)
\;, \quad
  \ol{\nabla_0 A_{1A}} = O(r)
\;, \quad
  \ol{\nabla_0 A_{AB}} = O(r^2)
\;,&
\\
&  \ol{\nabla_0 A_{01}} = O(1)
\;, \quad
  \ol{\nabla_0 A_{0A}} = O(r)
\;, \quad
  \ol{\nabla_0 A_{00}} = O(1)
\;.&
 \label{31VIII13.3}
\end{eqnarray}
Taking the behaviour of the metric components at the tip of the cone into account, cf.\ the formulae (4.41)-(4.51) in \cite{CCM2},
which hold in any sufficiently regular gauge,
 \eq{31VIII13.1}-\eq{31VIII13.3} and standard results on Fuchsian ODEs (cf.\ e.g.\ \cite[Appendix A]{CP1}) imply
$\ol{\nabla_0 A_{\mu\nu}}=0$.

In the case of two transversally intersecting null hypersurfaces one can derive the same hierarchical system of ODEs on $N_1$ and $N_2$, respectively, which now is a system of regular ODEs.
The assumption $\ol A_{\mu\nu}=0$ implies $\nabla_{\sigma} A_{\mu\nu}|_S=0$. We thus have vanishing initial data for the ODEs and
the unique solutions are $\ol{\nabla_1 A_{\mu\nu}}|_{N_1}=0$ and $\ol{\nabla_2 A_{\mu\nu}}|_{N_2}=0$.
\qed
\end{proof}

\subsection{A special case: $\Theta =1$}
\label{sec_special_case}

Let us briefly analyse the implications of Theorem~\ref{KID_eqns_main} in the special case where the conformal factor $\Theta$ is identical to one,
\begin{equation*}
 \Theta =1
\end{equation*}
 (note that thereby the gauge freedom
to prescribe the Ricci scalar is lost).
 Then the unphysical spacetime can be identified with the physical spacetime.
The conformal field equations \eq{conf1}-\eq{conf6} imply the equations
\begin{eqnarray*}
 &s= \frac{1}{2(d-1)} \lambda\;,&
 \\
& L_{\mu\nu} = s g_{\mu\nu} \quad \Longleftrightarrow \quad  R_{\mu\nu} = \lambda g_{\mu\nu}
 \;,
\end{eqnarray*}
i.e.\ in particular the vacuum Einstein equations hold.

Let us analyse the conditions (i)-(vi) of Theorem~\ref{KID_eqns_main} in this setting:
Condition (iii) gives $\ol Y=0$, which we take as initial data for the wave equation (ii) which then implies $Y=0$,
i.e.\ $X$ needs to be divergence-free, as desired.
We observe that (iv) is automatically satisfied. Moreover,
\begin{equation*}
  B_{\mu\nu} = \mcL_X L_{\mu\nu}  = s\mcL_X g_{\mu\nu} = 2s\nabla_{(\mu}X_{\nu)}
   \;,
\end{equation*}
so (vi) follows from (v). To sum it up, the hypotheses of Theorem~\ref{KID_eqns_main} are satisfied if and only if there
is a vector field $X$ which satisfies
\begin{eqnarray*}
 \Box_g X_{\mu} + \lambda X_{\mu} &=& 0
 \;,
\\
 \ol{\nabla_{(\mu}X_{\nu)}}&=& 0
 \;.
\end{eqnarray*}
This was the starting point of the analysis in \cite{CP1}.

\section{KID equations in four dimensions}
\label{KID_4dim}

Theorem~\ref{KID_eqns_main} can be applied to dimensions $d\geq 5$ only when the conformal factor $\Theta$ is bounded away from zero.
In fact, this situation is rather uninteresting since then there is no need to pass to a conformally rescaled spacetime
(or to put it differently, it is just a matter of gauge to set $\Theta=1$).
One reason why we included this case, though, was to emphasize that their arise difficulties when one tries to go from four to higher dimensions (which is in line with the observation that the conformal field equations provide a good evolution
system only in four spacetime dimensions).
Another reason was to be able to consider the limiting case $\Theta=1$ in any dimension $d\geq 4$ where the unphysical spacetime can be identified with the
physical spacetime, and to compare the resulting equations with those in \cite{CP1}.
This is also a reason why we avoided to make the common gauge choice $R=0$:
$\Theta=1$ is compatible with $R=0$ solely when the cosmological constant vanishes.
Henceforth we restrict attention to $d=4$ spacetime dimensions.

\subsection{A stronger version  of Theorem~\ref{KID_eqns_main} for light-cones}

A more careful analysis  of the computations made in the proof of Lemma~\ref{vanishing_nabla_A}
will lead us to a refinement of Theorem~\ref{KID_eqns_main}. We first treat the light-cone case.
We will assume the vanishing of $\ol A_{ij}$ and $\ol B_{ij}$ as well as the validity of the wave equations  \eq{wave_X0} and \eq{wave_Y0} for $X$ and $Y$,  and explore the consequences concerning the vanishing of other components
of these tensors, including certain transverse derivatives thereof.

\subsubsection{Vanishing of $\ol A_{0\mu}$}

We start with the identity
\begin{eqnarray}
   \nabla_{\nu}A_{\mu}{}^{\nu}
   -\frac{1}{2}\nabla_{\mu}A_{\nu}{}^{\nu}
\,\equiv \,
 \Box_g  X_{\mu}  + R_{\mu}{}^{\nu}X_{\nu}+ 2\nabla_{\mu} Y
   \;,
   \label{waveX_divergenceA}
\end{eqnarray}
which together with the wave equation \eq{wave_X0} for $X$ implies
\begin{eqnarray}
   \nabla_{\nu}A_{\mu}{}^{\nu}
   -\frac{1}{2}\nabla_{\mu}A_{\nu}{}^{\nu}
\,=\,0
   \;.
   \label{waveX_divergenceA2}
\end{eqnarray}
On the initial surface that yields in adapted null coordinates,
\begin{eqnarray}
0&=&
  \nu^0( 2\ol{\nabla_{(0}A_{1)\mu }} - \ol{\nabla_{\mu}A_{01}})
  +  \ol g^{11}( \nabla_{1}\ol A_{\mu 1} -\frac{1}{2}\ol{\nabla_{\mu}A_{11}} )
  \nonumber
  \\
  && + \ol g^{1A}(2 \nabla_{(1}\ol A_{A)\mu}- \ol{\nabla_{\mu}A_{1A}})
  +  \ol g^{AB} (\nabla_{A}\ol A_{\mu B}  -\frac{1}{2} \ol{\nabla_{\mu}A_{AB}})
   \;.
   \label{wave_X_comp}
\end{eqnarray}
In the following we shall always assume that \eq{wave_X0} and \eq{wave_Y0} hold (and thereby in particular \eq{wave_X_comp} and
\eq{ODE_system_A_1}-\eq{ODE_system_A_3}).

With the assumptions $\ol A_{ij}=0$ and $\ol B_{11}=0$
equation \eq{ODE_system_A_1} becomes
\begin{equation*}
  (\partial_1+ \frac{1}{2}\tau - \ol\Gamma^1_{11} -\nu^0\partial_1\nu_0)\ol{ \nabla_0 A_{11}  }
 \,=\, -\ol A_{01} ( \partial_1 - \ol \Gamma^1_{11} )\tau
 \;,
\end{equation*}
where we have fallen back on the identity \cite{CCM2}
\begin{equation}
\ol R_{11}\equiv -(\partial_1 - \ol \Gamma^1_{11})\tau - |\chi|^2
\;.
\label{id_R11}
\end{equation}
Moreover, we deduce from the $\mu=1$-component of \eq{wave_X_comp} that
\begin{equation}
 \tau \ol A_{01} + \ol{\nabla_0A_{11}} \,=\, 0
\;,
\label{relation_for_A01}
\end{equation}
which leads us to an ODE satisfied by $\ol A_{01}$,
\begin{equation}
  \tau ( \partial_1 + \frac{1}{2}\tau-\nu^0\partial_1\nu_0) \ol A_{01}
 \,=\,   0
 \;.
 \label{ODE_A01}
\end{equation}
 Since $\tau$ has no zeros near the vertex it follows from regularity (which requires $\ol A_{01}$ to be bounded near the vertex) that there $\ol A_{01}=0$ (and thus $\ol{\nabla_0A_{11}}=0$).
Even more, $\ol A_{01}$ will automatically vanish on the closure of those sets on which $\tau$ is non-zero.

Next we assume   $\ol A_{ij}=0$, $\ol A_{01}=0$, $\ol{\nabla_0A_{11}}=0$, $\ol B_{1A}=0$.
Then, due to \eq{ODE_system_A_2}, \eq{id_R11} and the identity
\begin{eqnarray}
 \ol R_{1A1}{}^B &\equiv& -(\partial_1- \ol \Gamma^{1}_{11})\chi_{A}{}^B - \chi_A{}^C\chi_C{}^B
\label{ident_Riem_comp}
 \end{eqnarray}
we have
\begin{eqnarray*}
    (\partial_1 -\nu^0\partial_1\nu_0 )\ol{\nabla_{0}A_{1A}}
 -\sigma_A{}^B\ol{\nabla_0A_{1B}}
=
    - \ol A_{0A}(\partial_1 - \ol \Gamma^1_{11} )\tau
     -\ol A_{0B}(\partial_1- \ol \Gamma^{1}_{11} ) \sigma_A{}^B
    \;.
\end{eqnarray*}
With the current assumptions the $\mu=A$-component of \eq{wave_X_comp} can be written as
\begin{eqnarray}
  \ol{\nabla_{0}A_{1A }} + ( \partial_1 +  \tau - \ol\Gamma^0_{01}  )  \ol A_{0A}
   &=& 0
 \;,
\label{rln_nabla0_A1A}
\end{eqnarray}
whence
\begin{eqnarray}
  (\partial_1 +\tau -\nu^0\partial_1\nu_0 )[( \partial_1 -\ol\Gamma^0_{01} )  \ol A_{0A}
     -  \sigma_A{}^B \ol A_{0B}] =0
     \label{ODE_A0A}
  \;.
\end{eqnarray}
Regularity requires $\ol A_{0A}=O(r)$, and $\ol A_{0A}=0$ is the only solution with this property.
Then, of course, $\ol{\nabla_{0}A_{1A }}$ vanishes as well.

In the final step we assume
(in addition to the validity of \eq{wave_X0} and \eq{wave_Y0})
$\ol A_{ij}=0$, $\ol A_{0i}=0$, $\ol{\nabla_0A_{1i}}=0$ and $\ol g^{AB}\ol B_{AB}=0$.
Taking the $\ol g_{AB}$-trace of \eq{ODE_system_A_3} and using again \eq{id_R11} we find
\begin{eqnarray*}
 ( \partial_1 +\frac{1}{2}\tau+\ol\Gamma^1_{11}- \nu^0\partial_1\nu_0)(\ol{g^{AB}\nabla_0 A_{AB}})
 &=& \nu^0\ol A_{00} (\partial_1 - \ol \Gamma^1_{11})\tau
  \;.
\end{eqnarray*}
From the $\mu=0$-component of \eq{wave_X_comp} we derive the equation
\begin{eqnarray*}
  \nu^0(   \partial_1+\tau +2\ol\Gamma^1_{11}-2\nu^0\partial_1\nu_0)\ol A_{00}
  -\frac{1}{2} \ol{ g^{AB}\nabla_{0}A_{AB}} &=& 0
   \;,
\end{eqnarray*}
and end up with an ODE satisfied by $\ol A_{00}$,
\begin{eqnarray}
 ( \partial_1 +\tau+\ol\Gamma^1_{11}- \nu^0\partial_1\nu_0)[ \nu^0(   \partial_1+\frac{1}{2}\tau +2\ol\Gamma^1_{11}-2\nu^0\partial_1\nu_0)\ol A_{00} ] &=& 0
  \;.
  \label{ODE_A00}
\end{eqnarray}
Regularity requires $\ol A_{00} =O(1)$, which leads to $\ol A_{00}=0$,
which in turn implies $\ol{ g^{AB}\nabla_{0}A_{AB}}=0$.
Assuming $\ol B_{AB}=0$ we further have $\ol{\nabla_0 A_{AB}}=0$ due to \eq{ODE_system_A_3}.

Altogether we have proved the lemma
\begin{lemma}
\label{lemma_A0mu}
 Assume that \eq{wave_X0} and \eq{wave_Y0} hold, and that $\ol A_{ij} = \ol A_{01}= 0 = \ol B_{ij}$.
Then $\ol A_{0\mu} = 0 $ and $\ol{\nabla_0 A_{ij}}=0 $.
On the closure of those sets where $\tau$ is non-zero, in particular sufficiently close to the vertex of the cone, the assumption
$\ol A_{01}= 0$ is \textit{not} needed, but follows from the remaining hypotheses.
\end{lemma}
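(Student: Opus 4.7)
The plan is to combine the wave equation \eqref{wave_X0} for $X$ with the divergence identity \eqref{waveX_divergenceA} to obtain the constraint \eqref{waveX_divergenceA2}. Expanding this in adapted null coordinates via the inverse-metric decomposition yields \eqref{wave_X_comp}, which provides algebraic/first-order relations linking the transverse derivatives $\ol{\nabla_0 A_{1\mu}}$ to the tangential components $\ol A_{0\mu}$. Feeding these relations into the hierarchical system \eqref{ODE_system_A_1}--\eqref{ODE_system_A_3} (which is nothing but the components of \eqref{waveeqn_Agen} rewritten in adapted null coordinates) decouples the problem into three successive Fuchsian ODEs for $\ol A_{01}$, $\ol A_{0A}$ and $\ol A_{00}$, each fed by the vanishing of the corresponding component of $\ol B_{\mu\nu}$ and by the conclusions of the previous step.

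First I would look at the $\mu=1$ component of \eqref{wave_X_comp} under the hypothesis $\ol A_{ij}=0$, which collapses to the algebraic identity \eqref{relation_for_A01}, $\tau \ol A_{01}+\ol{\nabla_0 A_{11}}=0$. Substituting into \eqref{ODE_system_A_1} with $\ol B_{11}=0$, and using the curvature identity \eqref{id_R11} for $\ol R_{11}$, produces the ODE \eqref{ODE_A01} for $\ol A_{01}$. Here $\tau$ appears as an overall prefactor, so on open sets where $\tau\neq 0$ one recovers a genuine Fuchsian equation, and boundedness of $\ol A_{01}$ at the vertex (where $\tau$ is automatically non-zero) forces $\ol A_{01}=0$ on the component of $\{\tau\neq 0\}$ containing the vertex, and by continuity on its closure; the algebraic relation then yields $\ol{\nabla_0 A_{11}}=0$ there. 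This is the content of the last clause of the lemma, and it is the one place where the hypothesis $\ol A_{01}=0$ can be dropped in favour of a consequence.

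Next I would use the $\mu=A$ component of \eqref{wave_X_comp}, now with $\ol A_{01}=\ol{\nabla_0 A_{11}}=0$, to obtain \eqref{rln_nabla0_A1A}, expressing $\ol{\nabla_0 A_{1A}}$ algebraically in terms of $\ol A_{0A}$. Plugging this into \eqref{ODE_system_A_2} with $\ol B_{1A}=0$ and the curvature identity \eqref{ident_Riem_comp} for $\ol R_{1A1}{}^B$ gives the regular Fuchsian ODE \eqref{ODE_A0A} for $\ol A_{0A}$; the requirement $\ol A_{0A}=O(r)$ selects the trivial solution, whence $\ol{\nabla_0 A_{1A}}=0$. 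In the third step, under the accumulated hypotheses together with $\ol g^{AB}\ol B_{AB}=0$, the angular trace of \eqref{ODE_system_A_3} combined with the $\mu=0$ component of \eqref{wave_X_comp} yields the ODE \eqref{ODE_A00} for $\ol A_{00}$; regularity forces $\ol A_{00}=0$, hence $\ol{g^{AB}\nabla_0 A_{AB}}=0$, and using $\ol B_{AB}=0$ in \eqref{ODE_system_A_3} finally delivers $\ol{\nabla_0 A_{AB}}=0$.

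The main obstacle is conceptual rather than computational: the first ODE \eqref{ODE_A01} degenerates on the zero set of $\tau$, so one cannot conclude $\ol A_{01}=0$ globally from it alone. It is precisely the non-vanishing of $\tau$ near the vertex and the regularity there that seeds the bootstrap; propagation to regions where $\tau$ may vanish proceeds by continuity, and this asymmetry is exactly what the lemma records in its final sentence. Once this first step is settled, the remaining two reductions are routine applications of the Fuchsian machinery recalled in Appendix~\ref{app_Fuchsian}, feeding on the algebraic consequences of \eqref{wave_X_comp} and the successive vanishing of components of $\ol B_{\mu\nu}$.
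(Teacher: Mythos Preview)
Your proposal is correct and follows essentially the same route as the paper: the divergence identity \eqref{waveX_divergenceA2} is expanded in adapted null coordinates to supply the algebraic relations \eqref{relation_for_A01}, \eqref{rln_nabla0_A1A}, and the $\mu=0$ analogue, which are then fed into \eqref{ODE_system_A_1}--\eqref{ODE_system_A_3} to produce the successive Fuchsian ODEs \eqref{ODE_A01}, \eqref{ODE_A0A}, \eqref{ODE_A00}, with regularity at the vertex forcing the trivial solution at each stage. Your identification of the degeneracy in \eqref{ODE_A01} at zeros of $\tau$ as the reason for the asymmetric hypothesis on $\ol A_{01}$ is exactly the point the paper makes.
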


\subsubsection{Vanishing of $\ol B_{0\mu}$}

By the second Bianchi identity we have
\begin{eqnarray*}
 \lefteqn{\nabla_{\nu} B_{\mu}{}^{\nu}- \frac{1}{2}\nabla_{\mu}B_{\nu}{}^{\nu} \equiv
  A_{\alpha\beta}(\nabla^{\alpha}L_{\mu}{}^{\beta} - \frac{1}{2}\nabla_{\mu}L^{\alpha\beta})}
\\
&&+L_{\mu}{}^{\kappa}(\Box_g X_{\kappa} + R_{\kappa}{}^{\alpha}X_{\alpha}+ 2\nabla_{\kappa}Y)
  + \frac{1}{2} \nabla_{\mu}(\Box_g Y+ \frac{1}{6}X^{\nu}\nabla_{\nu}R+ \frac{1}{3} R Y)
   \;.
\end{eqnarray*}
Assuming the wave equations \eq{wave_X0} and \eq{wave_Y0} for $X$ and $Y$ as well as $\ol A_{\mu\nu}=0$ this induces
 on the initial surface the relation
\begin{eqnarray}
 \ol{\nabla_{\nu} B_{\mu}{}^{\nu}} - \frac{1}{2}\ol{\nabla_{\mu}B_{\nu}{}^{\nu}} &= & 0
 \;.
\end{eqnarray}
As for $A_{\mu\nu}$, equation \eq{wave_X_comp}, we deduce that in adapted coordinates we have
\begin{eqnarray}
0&=&
  \nu^0( 2\ol{\nabla_{(0}B_{1)\mu }} - \ol{\nabla_{\mu}B_{01}})
  +  \ol g^{11}( \nabla_{1}\ol B_{\mu 1} -\frac{1}{2}\ol{\nabla_{\mu}B_{11}} )
  \nonumber
  \\
  && + \ol g^{1A}(2 \nabla_{(1}\ol B_{A)\mu}- \ol{\nabla_{\mu}B_{1A}})
  +  \ol g^{AB} (\nabla_{A}\ol B_{\mu B}  -\frac{1}{2} \ol{\nabla_{\mu}B_{AB}})
   \;.
   \label{wave_X_comp2}
\end{eqnarray}
Recall that \eq{wave_X0}, \eq{wave_Y0} and the conformal field equations imply a
 wave equation \eq{waveeqn_B} which is satisfied by $B_{\mu\nu}$.
Assuming $\ol{A_{\mu\nu}}=0$, $\ol{\nabla_0 A_{ij}}=0$ and $\ol B_{ij}=0$, evaluation on the initial surface yields
\begin{eqnarray}
\ol{\Box_g B_{ij}} &=&
 2(\ol g_{ij}\ol L^{\alpha \beta}-   \ol R_{i}{}^{\alpha}{}_{j}{}^{\beta})\ol B_{\alpha\beta}
 - 2 \ol R_{(i}{}^{\kappa} \ol B_{j)\kappa}
\nonumber
\\
&& + 2\ol L^{\alpha\beta}(\ol {\nabla_{\beta}\nabla_{[\alpha}A_{j]i}}  -  \ol {\nabla_{i}\nabla_{[\alpha}A_{j]\beta}})
\label{wave_B_cone}
\;.
\end{eqnarray}
In adapted null coordinates we have, as for the corresponding expression \eq{identity_wave_Agen} for $A_{\mu\nu}$,
\begin{eqnarray}
 \ol{\Box_g B_{ij}} &=& 2\nu^0(\ol{\nabla_1\nabla_0B_{ij}} + \ol R_{01(i}{}^{\alpha}\ol  B_{j)\alpha})
  + \ol g^{11} \ol{\nabla_1\nabla_1B_{ij}}
   \nonumber
  \\
&&  + 2\ol g^{1A} (\ol{\nabla_1\nabla_AB_{ij}}
 + \ol R_{A1(i}{}^{\alpha} \ol B_{j)\alpha})
  + \ol g^{AB}\ol{\nabla_A\nabla_BB_{ij}}
  \;.
  \label{identity_wave_B}
\end{eqnarray}
Moreover, we have seen that \eq{wave_X0} and \eq{wave_Y0} imply the wave equation \eq{waveeqn_nabla_A}
satisfied by $\nabla_{\sigma}A_{\mu\nu}$.
Assuming $\ol{A_{\mu\nu}}=0$ and $\ol{\nabla_0 A_{ij}}=0$ we compute its trace on the initial
\begin{eqnarray}
\ol{\Box_g\nabla_{0}A_{ij}}  &=&
  2\nu^0 \ol R_{1(i}\ol{\nabla_{0}A_{j)0}}
 - 2\ol R_{i}{}^{\alpha}{}_{j}{}^{\kappa}\ol{\nabla_{0}A_{\alpha\kappa} }
\nonumber
\\
 &&
  +4(\nu^0)^2 \ol R_{011(i}\ol{\nabla_{|0|}A_{j)0}}
  - 4\ol{\nabla_{0}B_{ij}}
\;.
\label{wave_nabla_A_cone}
\end{eqnarray}
In adapted null coordinates and with the current assumptions the left-hand side becomes
\begin{eqnarray}
\ol{\Box_g\nabla_{0}A_{ij}}  &=&
2\nu^0 \ol{\nabla_1\nabla_0 \nabla_{0}A_{ij}} +
2\nu^0\ol R_{01(i}{}^{\mu} \ol{\nabla_{|0|}A_{j)\mu}}
 +2\ol g^{1A}\ol{ \nabla_1\nabla_A \nabla_{0}A_{ij}}
\nonumber
\\
&&
+2\ol g^{1A}\ol  R_{A1(i}{}^{\mu} \ol{\nabla_{|0|}A_{j)\mu}}
+\ol g^{AB} \ol{\nabla_A\nabla_B \nabla_{0}A_{ij}}
 \;.
\label{identity_wave_nabla_A}
\end{eqnarray}

Recall that $\ol A_{\mu\nu}=0$ suffices to establish $\ol B_{\alpha}{}^{\alpha}=0$. In that case  $\ol B_{ij}=0$ implies
\begin{equation}
 \ol B_{01} \,=\, 0
\;,
\end{equation}
and, by \eq{ODE_system_A_4},
\begin{eqnarray*}
 \ol{\nabla_0 A_{01}} \,=\, 0
\;.
\end{eqnarray*}
As for $A_{\mu\nu}$, the $\mu=1$-component of \eq{wave_X_comp2} yields
\begin{equation}
 \tau \ol B_{01} + \ol{\nabla_0 B_{11}} \,=\,0 \quad \Longrightarrow \quad  \ol{\nabla_0 B_{11}} \,=\,0
\;.
\end{equation}
The $(ij)=(11)$-component of \eq{wave_nabla_A_cone} reads
\begin{eqnarray*}
\ol{\Box_g\nabla_{0}A_{11}}  \,=\, 0 & \overset{\eq{identity_wave_nabla_A}}{\Longrightarrow }& (\partial_1 + \frac{1}{2}\tau -2\nu^0\partial_1\nu_0)\ol{\nabla_0\nabla_0 A_{11}} =0
\\
& \Longrightarrow &  \ol{\nabla_0\nabla_0 A_{11}} =0
\end{eqnarray*}
by regularity.

At this stage we can and will assume $\ol A_{\mu\nu} = \ol{\nabla_0 A_{ij}} = \ol{\nabla_0 A_{01}} =\ol{\nabla_0\nabla_0 A_{11}}
= \ol B_{ij} = \ol B_{01} = \ol{\nabla_0 B_{11}}=0$.
Then, with \eq{ODE_system_A_5}, we find for the $(ij)=(1A)$-components of \eq{wave_B_cone}
\begin{eqnarray*}
\ol{\Box_g B_{1A}} &=&
 2  \nu^0 \ol R_{1A1}{}^{B}\ol B_{0B}
+ \frac{1}{2}(\nu^0)^2\ol R_{11}\ol {\nabla_{0}\nabla_{0}A_{1A}}
\\
&& +  \frac{1}{2}\tau (\nu^0)^2\ol R_{11}\ol{\nabla_{0}A_{0A}}
+ \frac{1}{2}(\nu^0)^2\ol R_{11}\sigma_A{}^B \ol{\nabla_{0}A_{0 B}}
\;.
\end{eqnarray*}
The $(ij)=(1A)$-components of \eq{identity_wave_B} read
\begin{eqnarray*}
    \ol{\Box_g B_{1A}}
    &=&
    2\nu^0(\partial_1 -\nu^0\partial_1\nu_0 )\ol{\nabla_{0}B_{1A}}
    -  2\nu^0\sigma_A{}^B\ol{\nabla_0B_{1B}}
    \\
    &&
    - \nu^0(  |\chi|^2\ol B_{ 0A}
     + 2\chi_{A}{}^C\chi_C{}^B\ol B_{0B})
      \;.
\end{eqnarray*}
Equating both expressions for $ \ol{\Box_g B_{1A}}$  and using \eq{ident_Riem_comp} we deduce that
\begin{eqnarray}
 &&\hspace{-3em}   2(\partial_1 -\nu^0\partial_1\nu_0 )\ol{\nabla_{0}B_{1A}}
    -  2\sigma_A{}^B\ol{\nabla_0B_{1B}}
 \nonumber
\\
 &=& -2 \ol B_{0B} (\partial_1-\ol \Gamma^1_{11})\chi_A{}^B
+ |\chi|^2\ol B_{ 0A}
+ \frac{1}{2}\nu^0\ol R_{11}\ol {\nabla_{0}\nabla_{0}A_{1A}}
\nonumber
\\
&& +  \frac{1}{2}\tau \nu^0\ol R_{11}\ol{\nabla_{0}A_{0A}}
+ \frac{1}{2}\nu^0\ol R_{11}\sigma_A{}^B \ol{\nabla_{0}A_{0 B}}
      \;.
\label{Fuchsian_A1}
\end{eqnarray}
Evaluation of \eq{wave_nabla_A_cone} for $(ij)=(1A)$ leads to
\begin{eqnarray*}
\ol{\Box_g\nabla_{0}A_{1A}}  &=&
  \nu^0 \ol R_{11}\ol{\nabla_{0}A_{0A}}
 + 2\nu^0\ol R_{1A1}{}^{B}\ol{\nabla_{0}A_{0B} }
  - 4\ol{\nabla_{0}B_{1A}}
 \;,
\end{eqnarray*}
while \eq{identity_wave_nabla_A} becomes
\begin{eqnarray*}
\ol{\Box_g\nabla_{0}A_{1A}}  &=&
2 \nu^0(\partial_1  + \ol \Gamma^{1}_{11}-2\nu^0\partial_1\nu_0 ) \ol{\nabla_0 \nabla_{0}A_{1 A}}  - 2\nu^0\sigma_A{}^B \ol{\nabla_0 \nabla_{0}A_{1B}}
\\
&&
 -\nu^0|\chi|^2\ol{\nabla_{0}A_{0 A}}
 -2\nu^0\chi_{A}{}^C\chi_C{}^B \ol{\nabla_{0}A_{0B}}
 \;.
\end{eqnarray*}
Using \eq{id_R11} and \eq{ident_Riem_comp} we end up with
\begin{eqnarray}
&&\hspace{-2em}2 (\partial_1  +\ol \Gamma^{1}_{11}-2\nu^0\partial_1\nu_0 )\ol{ \nabla_0 \nabla_{0}A_{1 A}}  - 2\sigma_A{}^B \ol{\nabla_0 \nabla_{0}A_{1B}}
\nonumber
\\
&=&
 -2 \ol{\nabla_{0}A_{0A}}(\partial_1-\ol\Gamma^1_{11})\tau
 - 2\ol{\nabla_{0}A_{0B} }(\partial_1-\ol\Gamma^1_{11})\sigma_A{}^B
  - 4\nu_0\ol{\nabla_{0}B_{1A}}
 \;. \hspace{1em}
\label{Fuchsian_A2}
\end{eqnarray}
The $\mu=A$-components of  \eq{wave_X_comp2} give, again in close analogy to
 the corresponding equations for $A_{\mu\nu}$,
\begin{eqnarray}
 ( \partial_1 +  \tau -\ol\Gamma^0_{01}  )  \ol B_{0A}+    \ol{\nabla_{0}B_{1A }}
   &=& 0
   \;.
\label{Fuchsian_A3}
\end{eqnarray}
Recall that by \eq{ODE_system_A_5} we have
\begin{eqnarray}
 (\partial_1-2\ol\Gamma^0_{01})\ol {\nabla_0 A_{0A}} - \sigma_A{}^B\ol{\nabla_0A_{0B}} = -2\nu_0 \ol B_{0A}
 \;.
\label{Fuchsian_A4}
\end{eqnarray}
Taking the behaviour of the metric components at the vertex into account, cf.\ \cite[Section~4.5]{CCM2},
we observe that the ODE-system \eq{Fuchsian_A1}-\eq{Fuchsian_A4}
for $\ol B_{0A}$, $\ol{\nabla_0 B_{1A}}$, $\ol{\nabla_0 A_{0A}}$ and $\ol{\nabla_0\nabla_0 A_{1A}}$
is of the form
\begin{eqnarray*}
\left[ \partial_1
  +
\begin{pmatrix}   2r^{-1} +O(r)&1 &0 &0 \\ -2r^{-2}+ O(1) &O(r) &O(r^{-1}) &O(1) \\ 2+O(r^2)& 0 &O(r) &0\\ 0&2+O(r^2)&  -2r^{-2}+O(1)&O(r)     \end{pmatrix}
\right]
 \begin{pmatrix} \ol B_{0A} \\ \ol{\nabla_{0}B_{1A}} \\ \ol {\nabla_0 A_{0A}}  \\ \ol{ \nabla_0 \nabla_{0}A_{1 A}}
\end{pmatrix} =0
 \;,
\end{eqnarray*}
where each matrix entry is actually a $2\times 2$-matrix.
Regularity requires
\begin{equation*}
  \ol B_{0A} \;,\,  \ol{\nabla_{0}B_{1A}} \;, \,   \ol {\nabla_0 A_{0A}}  \;, \,   \ol{ \nabla_0 \nabla_{0}A_{1 A}} = O(r)
\;.
\end{equation*}
But then a necessary condition for \eq{Fuchsian_A3}
 to be satisfied is
\begin{equation}
  \ol B_{0A} =O(r^2)
\;.
\end{equation}
whence it follows from  \eq{Fuchsian_A4} and   \eq{Fuchsian_A2} that
\begin{equation}
   \ol {\nabla_0 A_{0A}} = O(r^3)\;, \quad \ol{ \nabla_0 \nabla_{0}A_{1 A}} =O(r^2)\;,
\end{equation}
Setting  $\tilde{\ol B}_{0A}:=r^{-2}\ol B_{0A}$, $ \tilde{ \ol{\nabla_{0}B_{1A}}}=r^{-1}\ol{\nabla_{0}B_{1A}}$,  $\tilde{\ol {\nabla_0 A_{0A}} }:=  r^{-3}\ol {\nabla_0 A_{0A}}$
and $  \tilde{\ol{ \nabla_0 \nabla_{0}A_{1 A}}} = r^{-2}\ol{ \nabla_0 \nabla_{0}A_{1 A}} $ the ODE-system adopts the form
\begin{eqnarray}
\left[ \partial_1
  +r^{-1} \begin{pmatrix}   4&1 &0 &0 \\ -2 &1 &0 &0 \\ 2& 0 & 3 &0\\ 0& 2&  -2& 2     \end{pmatrix}
+M \right] v=0
 \;,
\end{eqnarray}
where $M=O(r)$
%
%
is some matrix and
\begin{equation}
 v := \begin{pmatrix}  \tilde{\ol B}_{0A} \\ \tilde{\ol{\nabla_{0}B_{1A}}} \\  \tilde{\ol {\nabla_0 A_{0A}} }  \\ \tilde{\ol{ \nabla_0 \nabla_{0}A_{1 A}}}
\end{pmatrix}
=O(1)
\end{equation}
a smooth vector field. Setting
\begin{eqnarray}
 \tilde v:=T^{-1}v = O(1)
\;,
\label{smooth_tilde_v}
\end{eqnarray}
where
\begin{equation*}
 T:=  \begin{pmatrix} 0 & -1/2 &  -1/3 & 0 \\ 0 & 1/2 & 2/3 & 0 \\ -1 & -1/2&  2/3 & 0 \\ 2& 0 & 0 &1
\end{pmatrix}
\end{equation*}
is the change of basis matrix which transforms
the leading order matrix to Jordan normal form, we end up with the Fuchsian ODE-system
\begin{eqnarray}
 \partial_1\tilde v
  +r^{-1} \begin{pmatrix}   3&1 &0 &0 \\ 0 &3 &0 &0 \\ 0& 0 & 2 &0\\ 0& 0&  0& 2     \end{pmatrix}
\tilde v
+ \tilde M \tilde v=0\;, \quad \tilde M:=T^{-1}MT= O(r)
 \;.
\label{Fuchsian_system_1}
\end{eqnarray}
In Appendix~\ref{app_Fuchsian} it is shown that any solution of \eq{Fuchsian_system_1} which is $O(1)$ needs to vanish identically (take, in the notation used there, $\lambda=-1$).
%
%
Hence  $\ol B_{0A}=\ol{\nabla_0 B_{1A}}=\ol{\nabla_0 A_{0A}}=\ol{\nabla_0\nabla_0 A_{1A}}=0$, which we can and will
assume in the subsequent computations.

The $\ol g_{AB}$-trace of the $(ij)=(AB)$-component of \eq{wave_B_cone} reads
\begin{eqnarray*}
\ol g^{AB}\ol{\Box_g B_{AB}} &=&
\frac{1}{2}(\nu^0)^2\ol R_{11} \ol g^{AB}\ol {\nabla_{0}\nabla_{0}A_{AB}}
  - \frac{1}{2}\tau (\nu^0)^3 \ol R_{11} \ol{\nabla_{0}A_{00}}
 \;,
\end{eqnarray*}
for the corresponding component of \eq{identity_wave_B} we find
\begin{eqnarray*}
 \ol g^{AB} \ol{\Box_g B_{AB}} &\equiv&
2\nu^0  (\partial_1 + \frac{1}{2}\tau -\ol \Gamma^{0}_{01})(\ol g^{AB}\ol{ \nabla_0B_{AB}})
  + 2(\nu^0)^2|\chi|^2\ol B_{00}
  \;,
\end{eqnarray*}
and thus
\begin{eqnarray}
&&\hspace{-3em}   (\partial_1 + \frac{1}{2}\tau -\ol \Gamma^{0}_{01})(\ol g^{AB}\ol{ \nabla_0B_{AB}})
  + \nu^0|\chi|^2\ol B_{00}
\nonumber
\\
&=& \frac{1}{4}\nu^0 \ol R_{11}\ol g^{AB}\ol {\nabla_{0}\nabla_{0}A_{AB}}
  - \frac{1}{4}\tau (\nu^0)^2 \ol R_{11} \ol{\nabla_{0}A_{00}}
\label{Fuchsian_B1}
 \;.
\end{eqnarray}
From \eq{wave_nabla_A_cone} we deduce
\begin{eqnarray*}
\ol g^{AB}\ol{\Box_g\nabla_{0}A_{AB}}  &=&
 - 2(\nu^0)^2\ol R_{11}\ol{\nabla_{0}A_{00} }  - 4\ol g^{AB}\ol{\nabla_{0}B_{AB}}
\;,
\end{eqnarray*}
while
from \eq{identity_wave_nabla_A} we obtain
\begin{eqnarray*}
\ol g^{AB} \ol{\Box_g\nabla_{0}A_{AB}}  =
2\nu^0  (\partial_1+ \frac{1}{2}\tau -2\ol \Gamma^{0}_{01} )(\ol g^{AB}\ol{\nabla_{0} \nabla_{0}A_{AB}})
+ 2(\nu^0)^2|\chi|^2\ol{\nabla_{0}A_{0 0}}
 \;.
\end{eqnarray*}
Invoking \eq{id_R11} this leads us to the equation
\begin{eqnarray}
&&\hspace{-3em}  (\partial_1+ \frac{1}{2}\tau -2 \ol\Gamma^{0}_{01} )(\ol g^{AB}\ol{\nabla_{0} \nabla_{0}A_{AB}})
\nonumber
\\
 &=&  \nu^0 \ol{\nabla_{0}A_{00} } (\partial_1 - \ol \Gamma^1_{11})\tau  - 2\nu_0\ol g^{AB}\ol{\nabla_{0}B_{AB}}
 \;.
\label{Fuchsian_B2}
\end{eqnarray}
The $\mu=0$-component of \eq{wave_X_comp2} reads
\begin{eqnarray}
   (\partial_{1}+\tau  - 2\ol \Gamma^0_{01})\ol B_{00 }   -\frac{1}{2}\nu_0  \ol g^{AB} \ol{\nabla_{0}B_{AB}}
&=&0
   \;.
\label{Fuchsian_B3}
\end{eqnarray}
Recall that by \eq{ODE_system_A_6} we have
\begin{eqnarray}
(\partial_1 + \frac{1}{2}\tau -3 \ol \Gamma^0_{01})\ol{\nabla_0A_{00}}
+ 2\nu_0\ol B_{00} &= &0
\;.
\label{Fuchsian_B4}
\end{eqnarray}
Using again the results of \cite[Section~4.5]{CCM2} we find that the ODE-system \eq{Fuchsian_B1}-\eq{Fuchsian_B4}
for $\ol B_{00 } $, $\ol g^{AB}\ol{ \nabla_0B_{AB}}$, $\ol{\nabla_0A_{00}}$ and $\ol g^{AB}\ol{\nabla_{0} \nabla_{0}A_{AB}}$
is of the form
\begin{eqnarray*}
 \left[\partial_1
  +  \begin{pmatrix}   2r^{-1} +O(r)&-\frac{1}{2}+O(r^2) &0 &0 \\ 2r^{-2}+ O(1) &r^{-1} + O(r) &O(r^{-1}) &O(1) \\ 2+O(r^2)& 0 &r^{-1} + O(r) &0\\ 0&2+O(r^2)&  2r^{-2}+O(1)&r^{-1} + O(r)     \end{pmatrix}
\right]
 \begin{pmatrix} \ol B_{00 } \\ \ol g^{AB}\ol{ \nabla_0B_{AB}} \\ \ol {\nabla_0 A_{00}}  \\ \ol g^{AB}\ol{\nabla_{0} \nabla_{0}A_{AB}}
\end{pmatrix} =0
 \;.
\end{eqnarray*}
Due to regularity we have
\begin{equation*}
 \ol B_{00 } \:,\, \ol g^{AB}\ol{ \nabla_0B_{AB}} \;,\, \ol {\nabla_0 A_{00}}  \;,\, \ol g^{AB}\ol{\nabla_{0} \nabla_{0}A_{AB}}=O(1)
\;.
\end{equation*}
Even more, from \eq{Fuchsian_B3} we conclude that
\begin{equation}
\ol B_{00 } = O(r)
\;.
\end{equation}
From \eq{Fuchsian_B4} and  \eq{Fuchsian_B2} we  then deduce
\begin{equation}
 \ol {\nabla_0 A_{00}} = O(r^2)
\;, \quad
\ol g^{AB}\ol{\nabla_{0} \nabla_{0}A_{AB}}=O(r)
\;.
\end{equation}
In terms of  the rescaled fields $\tilde{\ol B}_{00}:=r^{-1}\ol B_{00}$,  $\tilde{\ol {\nabla_0 A_{00}} }:=  r^{-2}\ol {\nabla_0 A_{00}}$ and
$\tilde {\ol {g^{AB}\nabla_{0} \nabla_{0}A_{AB}}}= r^{-1}\ol g^{AB}\ol{\nabla_{0} \nabla_{0}A_{AB}}$  the ODE-system takes the form
%
%
\begin{eqnarray*}
 \left[ \partial_1
  + r^{-1} \begin{pmatrix}   3&-\frac{1}{2} &0 &0 \\ 2 & 1 &0 &0 \\ 2 & 0 &3 &0\\ 0&2&  2& 2     \end{pmatrix}
+ M
\right]
v
 =0
 \;,
\end{eqnarray*}
with $M=O(r)$ being some matrix,
and
\begin{equation}
 v:=  \begin{pmatrix}  \tilde{\ol B}_{00} \\ \ol g^{AB}\ol{ \nabla_0B_{AB}} \\ \tilde{\ol {\nabla_0 A_{00}} } \\ \tilde{\ol{ g^{AB}\nabla_{0} \nabla_{0}A_{AB}}}
\end{pmatrix}
=O(1)
\;.
\end{equation}
The change of basis matrix
\begin{equation*}
T:= \begin{pmatrix}
0 & -1/3 & -1/3 &0 \\ 0  & -2/3 &0 &0\\ 1/\sqrt{5} &2/3 & 4/3 & 0 \\ 2/\sqrt{5} & 8/3 & 0 &1
\end{pmatrix}
\;,
\end{equation*}
transforms the indicial matrix to Jordan normal form, and we end up with another Fuchsian ODE-system,
\begin{eqnarray}
 \partial_1\tilde v
  +r^{-1} \begin{pmatrix}   3&0 &0 &0 \\ 0 &2&1 &0 \\ 0& 0 & 2 &0\\ 0& 0&  0& 2     \end{pmatrix}
\tilde v
+ \tilde M \tilde v=0\;,
\label{Fuchsian_system_2}
\end{eqnarray}
where  $ \tilde v:= T^{-1}v=O(1)$ and $\tilde M:=T^{-1}MT= O(r)$.
Again, Lemma~\ref{lemma_Fuchsian} in Appendix~\ref{app_Fuchsian} (with $\lambda=-1$) implies
\begin{equation*}
 \tilde v= 0
\;,
\end{equation*}
and  thus $\ol B_{00}=\ol g^{AB}\ol{\nabla_0 B_{AB}}=\ol{\nabla_0 A_{00}}=\ol g^{AB}\ol{\nabla_0\nabla_0 A_{AB}}=0$.

In this section we have proved:
\begin{lemma}
\label{lemma_B0mu}
 Assume that \eq{wave_X0} and \eq{wave_Y0} hold, and that $\ol A_{\mu\nu} = 0 = \ol B_{ij}=\ol{\nabla_0 A_{ij}}$.
Then $\ol B_{0\mu}=0$, $ \ol{\nabla_0B_{1i}} = \ol g^{AB} \ol{ \nabla_{0} B_{AB}}=0$,  $\ol{\nabla_0 A_{0\mu}} =0$ and $\ol{\nabla_0\nabla_0 A_{1i}} = \ol g^{AB}\ol{\nabla_0\nabla_0 A_{AB}}=0$.
\end{lemma}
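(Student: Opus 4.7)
The plan is to combine the two wave equations forced by (i)--(ii), namely \eq{waveeqn_B} for $B_{\mu\nu}$ and \eq{waveeqn_nabla_A} for $\nabla_\sigma A_{\mu\nu}$, with the divergence identity $\nabla_\nu B_\mu{}^\nu-\tfrac12\nabla_\mu B_\nu{}^\nu=0$ that follows from the second Bianchi identity once the hypothesis $\ol A_{\mu\nu}=0$ is used. On the cone this divergence identity takes the pointwise form \eq{wave_X_comp2}, in exact parallel to what \eq{wave_X_comp} did for $A_{\mu\nu}$. As in the preceding lemma, the strategy is to equate the cone-restriction of each wave equation, expanded in adapted null coordinates via \eq{identity_wave_B} or \eq{identity_wave_nabla_A}, with its right-hand side, and to treat the resulting transverse-derivative combinations as unknowns of a coupled first-order system along the generators.

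First I would dispatch the ``scalar'' components. Since $\ol B_\alpha{}^\alpha$ vanishes whenever $\ol A_{\mu\nu}=0$, the assumption $\ol B_{ij}=0$ gives $\ol B_{01}=0$ immediately; equation \eq{ODE_system_A_4} then yields $\ol{\nabla_0 A_{01}}=0$, the $\mu=1$-component of \eq{wave_X_comp2} yields $\ol{\nabla_0 B_{11}}=0$, and the $(ij)=(11)$-trace of \eq{wave_nabla_A_cone} combined with \eq{identity_wave_nabla_A} reduces to a single Fuchsian ODE in $\ol{\nabla_0\nabla_0 A_{11}}$ whose only regular solution is zero.

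The substantive work lies in two coupled ``vectorial'' subsystems. For the angular sector the unknowns $\ol B_{0A}$, $\ol{\nabla_0 B_{1A}}$, $\ol{\nabla_0 A_{0A}}$, $\ol{\nabla_0\nabla_0 A_{1A}}$ satisfy a closed $4\times 4$ system obtained from the $(1A)$-components of \eq{wave_B_cone} and \eq{wave_nabla_A_cone} (each equated with the corresponding expression from \eq{identity_wave_B} and \eq{identity_wave_nabla_A}), the $\mu=A$-component of \eq{wave_X_comp2}, and the already-established \eq{ODE_system_A_5}; the curvature identities \eq{id_R11} and \eq{ident_Riem_comp} are used to eliminate Riemann components. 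Regularity alone gives only $O(1)$ bounds, but feeding these back into \eq{Fuchsian_A3} bootstraps $\ol B_{0A}=O(r^2)$, with correspondingly sharper decay for the other three unknowns. Rescaling by the appropriate power of $r$ and diagonalising the indicial matrix to Jordan form produces \eq{Fuchsian_system_1}, at which point Lemma~\ref{lemma_Fuchsian} in Appendix~\ref{app_Fuchsian} (with $\lambda=-1$) forces the rescaled vector to vanish. The ``trace'' sector, with unknowns $\ol B_{00}$, $\ol g^{AB}\ol{\nabla_0 B_{AB}}$, $\ol{\nabla_0 A_{00}}$ and $\ol g^{AB}\ol{\nabla_0\nabla_0 A_{AB}}$, is treated identically, the input equations now coming from the $\ol g^{AB}$-traces of \eq{wave_B_cone} and \eq{wave_nabla_A_cone}, the $\mu=0$-component of \eq{wave_X_comp2}, and \eq{ODE_system_A_6}; a one-step bootstrap via \eq{Fuchsian_B3} brings the system to the rescaled form \eq{Fuchsian_system_2} and the Appendix lemma closes the argument.

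The main obstacle, and the reason for the two separate rescalings and diagonalisations, is that the Fuchsian lemma requires the indicial matrix, after the bootstrap in decay rates, to have only eigenvalues in the admissible range and to be put in a Jordan form with blocks of controlled size. This is not automatic: it depends both on the precise cancellations in \eq{waveeqn_B} that are supplied by the conformal field equations and on the specific coupling structure of the divergence identity \eq{wave_X_comp2}, and it must be read off from a careful expansion of the metric components at the vertex in the spirit of \cite[Section~4.5]{CCM2}. Once the Jordan-form structure is exposed, the remainder of the argument is bookkeeping.
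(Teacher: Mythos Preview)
Your proposal is correct and follows essentially the same route as the paper: the same divergence identity \eq{wave_X_comp2}, the same pairing of \eq{wave_B_cone}/\eq{identity_wave_B} with \eq{wave_nabla_A_cone}/\eq{identity_wave_nabla_A}, the same scalar preliminaries, the same two coupled $4\times4$ Fuchsian systems with the same unknowns, the same bootstrap in decay rates, and the same reduction to Jordan form before invoking Lemma~\ref{lemma_Fuchsian} with $\lambda=-1$. The only thing the paper does that you leave implicit is to display the explicit indicial matrices and their Jordan forms (eigenvalues $\{3,3,2,2\}$ and $\{3,2,2,2\}$ respectively), confirming that the Appendix lemma applies; you should be prepared to carry out that computation rather than merely assert that it works.
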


\subsubsection{Stronger version}

As a straightforward consequence of Theorem~\ref{KID_eqns_main} and the preceding considerations which led us to Lemma~\ref{lemma_A0mu} and \ref{lemma_B0mu}
we end up with the following result:
\begin{theorem}
 \label{KID_eqns_main_cone}
Assume that we have been given a $3+1$-dimensional spacetime $(\mcM, g, \Theta)$, with ($g, \Theta$) a smooth solution of the conformal field
equations~\eq{conf1}-\eq{conf6}.
Let $C_O\subset  \mcM$ be a light-cone.
 Then there exists a vector field $\hat X$
satisfying the unphysical Killing equations \eq{2_conditions} on $\mathrm{D}^+(C_O)$
if and only if there exists a pair $(X,Y)$, $X$ a vector field and $Y$ a function, which fulfills the following conditions:
 \begin{enumerate}
  \item[(i)] $\Box_g X_{\mu} + R_{\mu}{}^{\nu}X_{\nu}  + 2\nabla_{\mu}Y= 0$,
  \item[(ii)] $ \Box_g Y+ \frac{1}{6}X^{\mu}\nabla_{\mu}R+ \frac{1}{3} R Y=0$,
   \item[(iii)] $ \ol\phi = 0$ with $ \phi \equiv   X^{\mu}\nabla_{\mu}\Theta - \Theta Y  $,
  \item[(iv)] $\ol\psi=0$ with  $\psi \equiv  X^{\mu}\nabla_{\mu}s    +sY -\nabla_{\mu}\Theta \nabla^{\mu}Y $,
  \item[(v)] $\ol A_{ij} =0$  with
 $A_{\mu\nu} \equiv \nabla_{\mu}X_{\nu} + \nabla_{\nu} X_{\mu} - 2Y g_{\mu\nu}$,
  \item[(vi)] $\ol A_{01}=0$,
  \item[(vii)] $\ol B_{ij}=0$  with
$B_{\mu\nu}\equiv \mcL_XL_{\mu\nu} +   \nabla_{\mu}\nabla_{\nu}Y$.
 \end{enumerate}
Moreover, $\hat X=X$ and $\nabla_{\kappa}\hat X^{\kappa}= 4 Y$.
 The condition (vi) is not needed on the closure of those sets where $\tau$ is non-zero.
\end{theorem}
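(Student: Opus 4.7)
The plan is to derive Theorem~\ref{KID_eqns_main_cone} as a corollary of Theorem~\ref{KID_eqns_main} combined with Lemmas~\ref{lemma_A0mu} and~\ref{lemma_B0mu}. The ``$\Longrightarrow$'' implication is immediate: if $\hat X$ satisfies the unphysical Killing equations, then setting $X=\hat X$ and $Y=\frac{1}{4}\nabla_\kappa \hat X^\kappa$ verifies all six hypotheses of Theorem~\ref{KID_eqns_main}, and conditions (i)--(v) here are just a subset of those, while (vi) and (vii) follow from the full vanishing of $\ol A_{\mu\nu}$ and $\breve{\ol B}_{\mu\nu}$.

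For the converse direction, given (i)--(vii) I would first apply Lemma~\ref{lemma_A0mu}, whose hypotheses are precisely (i), (ii), $\ol A_{ij}=0$, $\ol A_{01}=0$ and $\ol B_{ij}=0$, i.e.\ our (i), (ii), (v), (vi) and (vii). This yields $\ol A_{0\mu}=0$ and $\ol{\nabla_0 A_{ij}}=0$, so that altogether $\ol A_{\mu\nu}=0$ on $C_O$, which is condition (v) of Theorem~\ref{KID_eqns_main}. Feeding this enhanced vanishing, together with (i), (ii) and (vii), into Lemma~\ref{lemma_B0mu} then yields $\ol B_{0\mu}=0$; combined with $\ol B_{ij}=0$ this produces $\ol B_{\mu\nu}=0$ on $C_O$, whose trace-free part in particular vanishes, i.e.\ condition (vi) of Theorem~\ref{KID_eqns_main} holds. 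With (iii) and (iv) carried over verbatim, all six hypotheses of Theorem~\ref{KID_eqns_main} are met, and that theorem supplies the desired $\hat X$ together with the identifications $\hat X=X$ and $\nabla_\kappa \hat X^\kappa=4Y$.

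The only genuinely new point, and the one requiring a small extra argument, is the final claim that condition (vi) is superfluous wherever $\tau\neq 0$. The relevant observation is already embedded in the derivation of~\eq{ODE_A01}: under (i), (ii), (v) and (vii) alone, $\ol A_{01}$ satisfies the homogeneous Fuchsian equation
\begin{equation*}
\tau\bigl(\partial_1+\tfrac{1}{2}\tau-\nu^0\partial_1\nu_0\bigr)\ol A_{01}=0.
\end{equation*}
Since $\tau$ is non-zero in a punctured neighbourhood of the vertex, this reduces there to a regular first-order ODE whose unique bounded solution is $\ol A_{01}\equiv 0$; this vanishing then propagates by continuity to the closure of any open set on which $\tau\neq 0$. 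Hence on such a set Lemma~\ref{lemma_A0mu} can be invoked without presupposing (vi), and the rest of the chain of implications goes through unchanged. The only non-routine aspect of the proof is thus keeping track of which tensor components vanish at each stage; the analytic content has been fully absorbed into the two lemmas.
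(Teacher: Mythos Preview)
Your proposal is correct and follows precisely the route the paper takes: the theorem is stated there as ``a straightforward consequence of Theorem~\ref{KID_eqns_main} and the preceding considerations which led us to Lemma~\ref{lemma_A0mu} and~\ref{lemma_B0mu}'', and you have unpacked that chain of implications accurately. The point about dropping (vi) where $\tau\neq 0$ is likewise already built into Lemma~\ref{lemma_A0mu} itself (as you note), so even your ``genuinely new'' paragraph is just making explicit what the lemma already records.
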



\subsubsection{The (proper) KID equations}

The conditions (iv), (vi) and (vii) in Theorem~\ref{KID_eqns_main_cone} are not intrinsic in the sense that they involve
transverse derivatives of $X$ and $Y$ which are not part of the initial data for the wave equations (i) and (ii). However, they can be eliminated via these wave equations.
In fact, this is useful if one wants to check for a certain candidate field defined only on the initial surface whether it extends to a vector field satisfying the
unphysical Killing equations or not.
In essence this is what we will do next.

We have
\begin{eqnarray}
\ol{ \Box_g Y}  &\equiv& 2\nu^0( \nabla_1+\frac{1}{2}\tau )\ol{\nabla_0 Y}  + \ol g^{ij} \coneD_i\coneD_j \ol Y
\;,
\label{wave_coneD}
\end{eqnarray}
where $D_{i}$ is the derivative operator introduced in \cite{CP1},
\begin{eqnarray*}
 \coneD_i \ol Y &:=& \nabla_i \ol Y  \;,
\\
\coneD_i \ol X_{\mu} &:=& \nabla_i \ol X_{\mu}\;,
\\
 \coneD_{i} \coneD_j \oY
  &:= & \partial_i  \coneD_j \oY - \ol\Gamma^k_{ij}\coneD_k Y
 \;,
\\
 \coneD_{i} \coneD_j \ol X_ \mu
 &: = &  \partial_ i \coneD_j \ol X_\mu - \ol\Gamma^k_{ij} \coneD_k \ol Y_\mu - \ol\Gamma^\nu_{i\mu} \coneD_j \ol Y_\nu
 \;,
\end{eqnarray*}
i.e.\ one simply removes the transverse derivatives which would appear in the corresponding expressions with covariant derivatives.
Since the action of $\nabla_i$ and $\coneD_i$ coincides in many cases relevant to us one may often use them interchangeably.
Nevertheless, we shall use $\coneD_i$ consistently whenever derivatives of $X$ or $Y$ appear
in order to stress that no transverse derivatives of these fields are involved.

By (ii) and \eq{wave_coneD} the function $\Upsilon:= \ol{\partial_0 Y}$ (note that $\Upsilon$ is not a scalar) satisfies the ODE
\begin{eqnarray}
 ( \partial_1 +\frac{\tau }{2}-\ol \Gamma^0_{01})\Upsilon
-\ol \Gamma^{i}_{01}\nabla_{i}\ol Y
  + \frac{1}{2}\nu_0\big(\ol g^{ij} \coneD_i\coneD_j \ol Y + \frac{1}{6}\ol X^{\mu}\ol{\nabla_{\mu}R}+ \frac{1}{3} \ol R \ol Y\big) =0
\;.
\label{ODE_rho}
\end{eqnarray}
Regularity requires $\Upsilon =O(1)$.

It is useful to make the following definition
\begin{equation}
 S_{\mu\nu\sigma}:= \nabla_{\mu}\nabla_{\nu} X_{\sigma}  -  R_{\sigma\nu\mu}{}^{\kappa} X_{\kappa}  - 2\nabla_{(\mu}Y g_{\nu)\sigma} +  \nabla_{\sigma}Y g_{\mu\nu}
 \;.
\label{dfn_tensor_S}
\end{equation}
It follows from the identity \eq{relation_nabla_A} that
\begin{equation}
  S_{\mu\nu\sigma}=  \nabla_{(\mu}A_{\nu)\sigma} - \frac{1}{2}\nabla_{\sigma}A_{\mu\nu}
\;.
\label{rln_tensor_S}
\end{equation}
Note that this implies the useful relations
\begin{eqnarray}
 2S_{\mu(\nu\sigma)} &=& \nabla_{\mu}A_{\nu\sigma}
\;,
\label{useful_rln_S_1}
\\
S_{[\mu\nu]\sigma} &=& 0
\;.
\label{useful_rln_S_2}
\end{eqnarray}
Recall that \eq{relation_for_A01} is a consequence of (i) and (v). Hence
\begin{eqnarray}
 \ol S_{110} &=&  \nabla_{1}\ol A_{01} - \frac{1}{2}\ol{\nabla_{0}A_{11}}
\nonumber
\\
 &=& (\partial_{1}+\frac{1}{2}\tau  - \nu^0\partial_1\nu_0) \ol A_{01}
 \;.
\label{S110_A01}
\end{eqnarray}
We conclude that due to regularity  we have, assuming (i) and (v),
\begin{equation*}
 \ol A_{01}=0 \quad \Longleftrightarrow \quad \ol S_{110}=0
\;.
\end{equation*}
This leads us to the following stronger version of Theorem~\ref{KID_eqns_main_cone}:
\begin{theorem}
\label{KID_eqns_main_cone2}
Assume that we have been given a $3+1$-dimensional spacetime $(\mcM, g, \Theta)$, with ($g, \Theta$) being a smooth solution of the conformal field
equations~\eq{conf1}-\eq{conf6}.
Let $\mathring X$ be a vector field and $\mathring Y$ a function defined on a light-cone $C_O\subset \mcM$.
Then there exists a smooth vector field $X$ with $\ol X=\mathring X$ and $\ol{\nabla_{\kappa}X^{\kappa}} = 4\mathring Y$ satisfying the unphysical Killing equations  \eq{2_conditions} on $\mathrm{D}^+(C_O)$
(i.e.\ representing a Killing field of the physical spacetime)
if and only if
 \begin{enumerate}
   \item[(a)] the conditions (iii) and (v) in Theorem~\ref{KID_eqns_main_cone} hold,
  \item[(b)] $\ol{\psi}^{\mathrm{intr}} :=  \mathring X^{\mu}\ol{\nabla_{\mu}s  }  +\ol s\mathring Y -  \ol{\nabla^{i}\Theta} \coneD_{i}\mathring Y  - \nu^0 \Upsilon\nabla_{1}\ol \Theta=0$,
  \item[(c)] $\ol S_{110} \equiv \coneD_{1}\coneD_{1}\mathring X_{0}  - \ol  R_{011}{}^{\kappa} \mathring X_{\kappa}  - 2\nu_0\coneD_{1}\mathring Y =0$,
  \item[(d)] $\ol B_{1i} \equiv  \mathring X^{\kappa}\nabla_{\kappa}\ol L_{1i} +2 \ol L_{\kappa(1}\coneD_{i)}\mathring X^{\kappa}+   \coneD_{1}\coneD_{i}\mathring Y=0$,
  \item[(e)] $\ol { B}^{\mathrm{intr}}_{AB} := \mathring X^{\kappa}\nabla_{\kappa}\ol L_{AB} +2 \ol L_{\kappa(A}\coneD_{B)}\mathring X^{\kappa}+   \coneD_{A}\coneD_{B}\mathring Y +\nu^0\Upsilon\chi_{AB}=0$,
\item[(f)] $\mathring X$ and $\mathring Y$ are restrictions to the light-cone of smooth spacetime fields.
 \end{enumerate}
The function $\Upsilon$ is the unique solution of
\begin{eqnarray}
 ( \partial_1 +\frac{\tau }{2}-\ol \Gamma^0_{01})\Upsilon
-\ol \Gamma^{i}_{01}\coneD_{i}\mathring Y
  + \frac{1}{2}\nu_0\big(\ol g^{ij} \coneD_i\coneD_j \mathring Y + \frac{1}{6}\mathring X^{\mu}\ol{\nabla_{\mu}R}+ \frac{1}{3} \ol R \mathring Y\big) =0
\label{thm_rho_eqn}
\end{eqnarray}
which is bounded near the tip of the cone.
 The condition (c) is not needed on the closure of those sets on which the expansion $\tau$ is non-zero.
\end{theorem}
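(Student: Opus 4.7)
\noindent\emph{Plan of proof.}
The plan is to derive Theorem~\ref{KID_eqns_main_cone2} from Theorem~\ref{KID_eqns_main_cone} by reformulating the three non-intrinsic conditions (iv), (vi), (vii) there in terms of $\mathring X$, $\mathring Y$ and the auxiliary function $\Upsilon$. Given smooth cone data $(\mathring X,\mathring Y)$---which exists as a spacetime restriction by (f)---the standard characteristic initial value problem (cf.\ \cite{CCM2}) produces unique smooth solutions $X,Y$ of the wave equations (i), (ii) in $D^+(C_O)$ with $\ol X=\mathring X$ and $\ol Y=\mathring Y$. I would then show, in the presence of (i) and (ii), that conditions (iii) and (v) of Theorem~\ref{KID_eqns_main_cone} are already packaged in (a), and that each of (iv), (vi), (vii) is equivalent to its intrinsic counterpart (b), (c), (d)+(e).

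The first key step is to identify $\Upsilon=\ol{\partial_0 Y}$. Restricted to $C_O$, the wave equation (ii) reduces via \eq{wave_coneD} to the Fuchsian first-order ODE \eq{ODE_rho}, which coincides with the equation \eq{thm_rho_eqn} appearing in the theorem; regularity at the vertex singles out a unique bounded solution. With $\Upsilon$ so identified, the splitting of $\ol{\nabla^\mu\Theta\nabla_\mu Y}$ into tangential and transverse parts in adapted null coordinates---the only transverse contribution being $\nu^0\nabla_1\ol\Theta\,\Upsilon$---yields $\ol\psi=\ol\psi^{\mathrm{intr}}$, proving the equivalence of (iv) and (b).

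Next I would expand $\ol B_{ij}$ in adapted null coordinates. The Lie-derivative piece $\mathring X^\kappa\ol{\nabla_\kappa L_{ij}}+2\ol L_{\kappa(i}\coneD_{j)}\mathring X^\kappa$ is manifestly intrinsic since $\ol{\nabla_i X^\kappa}$ only involves the tangential derivation $\partial_i$; for the Hessian of $Y$ one checks $\Gamma^0_{1i}\equiv 0$ in the adapted gauge, so $\ol{\nabla_1\nabla_i Y}=\coneD_1\coneD_i\mathring Y$, giving (d). For the $(AB)$-block a single transverse contribution enters through $\Gamma^0_{AB}$, whose $\chi_{AB}$-part produces the extra $+\nu^0\chi_{AB}\Upsilon$ appearing in $\ol B^{\mathrm{intr}}_{AB}$, so that (e) is equivalent to $\ol B_{AB}=0$. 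For condition (vi) I would use identity \eq{S110_A01}, valid under (i) and (v), which represents $\ol S_{110}$ as a Fuchsian first-order operator applied to $\ol A_{01}$; condition (c) then forces $\ol A_{01}$ to satisfy a homogeneous ODE whose unique bounded solution at the vertex (where $\tau/2\sim r^{-1}$) is zero, hence $\ol A_{01}=0$. The caveat about sets where $\tau\neq 0$ is inherited from Theorem~\ref{KID_eqns_main_cone}, where $\ol A_{01}$ is already pinned down by \eq{ODE_A01} alone.

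The converse direction is immediate: a Killing field $\hat X$ produces via Theorem~\ref{KID_eqns_main_cone} a pair $(X,Y=\tfrac14\nabla_\kappa\hat X^\kappa)$ satisfying (i)--(vii), whose restrictions to $C_O$, together with the identification of $\ol{\partial_0 Y}$ as the solution of \eq{thm_rho_eqn}, deliver (a)--(f). The main obstacle is the algebraic bookkeeping behind (e): one must carefully separate the $\chi_{AB}$-piece of $\Gamma^0_{AB}$ from the remaining tangential contributions (which are to be absorbed into $\coneD_A\coneD_B\mathring Y$), so that the single transverse term $\nu^0\chi_{AB}\Upsilon$ emerges unambiguously, and one must verify that the Fuchsian structure underlying the determination of $\Upsilon$ and the vanishing of $\ol A_{01}$ is genuinely regular in the sense required to extract a unique solution from the prescribed vertex behaviour.
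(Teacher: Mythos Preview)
Your proposal is correct and follows essentially the same route as the paper. The paper's own proof is a short paragraph that simply points back to the preparatory computations (the identification $\Upsilon=\ol{\partial_0 Y}$ via \eq{ODE_rho}, the equivalence $\ol A_{01}=0\Leftrightarrow\ol S_{110}=0$ from \eq{S110_A01}, and the intrinsic rewriting of $\ol\psi$ and $\ol B_{ij}$) and then invokes an existence result for the wave equations (i)--(ii). One small remark: for the characteristic Cauchy problem on a light-cone the paper appeals to Dossa~\cite{dossa} rather than \cite{CCM2}; condition (f) is precisely what is needed to feed the cone data into that result. Your worry about ``separating'' pieces of $\ol\Gamma^0_{AB}$ is unwarranted: in adapted null coordinates one has exactly $\ol\Gamma^0_{AB}=-\nu^0\chi_{AB}$, so the sole transverse contribution to $\ol{\nabla_A\nabla_B Y}$ is the $\nu^0\chi_{AB}\Upsilon$ term, with nothing left over to absorb.
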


\begin{proof}
It needs to be shown that $(\mathring X,\mathring Y)$ extends to a pair $(X,Y)$ satisfying  (i)-(vii) in Theorem~\ref{KID_eqns_main_cone}. From the considerations
above it becomes clear that (a)-(e)  do imply (i)-(vii) in Theorem~\ref{KID_eqns_main_cone} if $\mathring X$ and $\mathring Y$ can be extended
to smooth solutions of the wave equations \eq{wave_X0} and \eq{wave_Y0} for $X$ and $Y$. However, this follows from \cite{dossa} due to (f).
\qed
\end{proof}

\begin{remark}
{\rm
The conditions (a)-(e) will be called  \textit{(proper)%
\footnote{in the sense of ``intrinsic'', they do not involve transverse derivatives of $X$ or $Y$}
 Killing Initial Data (KID) equations}
(cf.\ Proposition~\ref{KID_eqns_main_cone3} below which shows that condition (f) is not needed).
}
\end{remark}

\begin{remark}
{\rm
Theorem~\ref{KID_eqns_main_cone2} can e.g.\ be applied to a light-cone with vertex at past timelike infinity for vanishing cosmological constant
(this is done in Section~\ref{sec_special_cone}),
or to light-cones with vertex on $\scri^-$ for vanishing or positive cosmological constant.
}
\end{remark}

\subsubsection{Extendability of the candidate fields}
\label{extendability}

A drawback of Theorem~\ref{KID_eqns_main_cone2} is the condition (f): Usually it is a non-trivial issue to make sure that the candidate fields $\mathring X$ and $\mathring Y$ which are constructed from (a subset of) (a)-(e)  are  restrictions to the light-cone of smooth spacetime fields.
(Nonetheless we shall see in Section~\ref{sec_special_cone} that (f) becomes trivial on the $C_{i^-}$-cone.)
We therefore aim to prove that (f) follows
 directly and without any restrictions from the KID equations (a)-(e).


Since the validity of (f) is only non-trivial in some neighbourhood of the vertex of the cone, we can and will assume in this section that the
expansion $\tau$ has no zeros.

The proceeding will be in close analogy to  \cite[Section~2.5]{CP1}.
First we want to compute the divergence $\nabla^{\alpha}S_{\alpha\beta\gamma}$ which contains certain transverse derivatives
of $\mathring X$ and $\mathring Y$ (which eventually drop out from the relevant formulae).
For these expressions to make sense let $X$ and $Y$ be any smooth extensions of  $\mathring X$ and $\mathring Y$  from the cone $C_O$ to a punctured
neighbourhood of $O$. We stress that no assumptions are made concerning the behaviour of $X$ and $Y$ as the tip of the cone is approached.

By \eq{dfn_tensor_S} and the second Bianchi identity we have
\begin{eqnarray}
 \nabla^{\sigma}S_{\mu\nu\sigma} \equiv \frac{1}{2} \nabla_{\mu} \nabla_{\nu}A_{\sigma}{}^{\sigma} + 2 B_{\mu\nu}
+ \frac{1}{6} RA_{\mu\nu}
+(B_{\sigma}{}^{\sigma}-L^{\alpha\beta}A_{\alpha\beta}) g_{\mu\nu}
\;.
\label{rln_div_S}
\end{eqnarray}
In adapted null coordinates the trace of the left-hand side of \eq{rln_div_S} on the cone reads,
\begin{eqnarray}
 \ol{\nabla^{\sigma}S_{\mu\nu\sigma}} &=& \nu^0(\ol{\nabla_0 S_{\mu\nu 1}} + \nabla_1 \ol S_{\mu\nu 0} )
+ \ol g^{1A}(\nabla_1 \ol S_{\mu\nu A} +  \nabla_A \ol S_{\mu\nu 1})
\nonumber
\\
&& + \ol g^{11}\nabla_1\ol S_{\mu\nu 1} + \ol g^{AB} \nabla_A\ol S_{\mu\nu B}
\;.
\label{rln_div_S2}
\end{eqnarray}
The undesirable transverse derivatives which appear in $\ol{\nabla_0 S_{\mu\nu 1}}$ can be eliminated via
\begin{eqnarray}
 \nabla_0 \nabla_{\mu}\nabla_{\nu}X_{\sigma} &=&
\nabla_{\mu}\nabla_{\nu}A_{0\sigma}
- \nabla_{\mu}\nabla_{\nu}\nabla_{\sigma} X_{0}
+2g_{0\sigma}\nabla_{\mu}\nabla_{\nu} Y
+ \nabla_{\mu}(R_{0\nu\sigma}{}^{\kappa}X_{\kappa})
\nonumber
\\
&&
 + R_{0\mu \nu}{}^{\kappa}\nabla_{\kappa}X_{\sigma} + R_{0\mu \sigma}{}^{\kappa}\nabla_{\nu}X_{\kappa}
\;.
\label{nabla3_1}
\end{eqnarray}

\begin{lemma}
\label{lemma_S110}
 Assume $\ol A_{ij}=0$. Then
$$
 2 \ol B_{11}
=  \tau\nu^0 \ol S_{11 0}
\;.
$$
\end{lemma}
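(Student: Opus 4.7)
The plan is to read the claim as the $(\mu,\nu)=(1,1)$-component of the divergence identity \eqref{rln_div_S}. After setting $\mu=\nu=1$ and using the hypothesis $\ol A_{ij}=0$ together with $\ol g_{11}=0$ (which holds on a light-cone in adapted null coordinates), \eqref{rln_div_S} collapses to
$$
\ol{\nabla^\sigma S_{11\sigma}} \,=\, \frac{1}{2}\ol{\nabla_1\nabla_1 A_\sigma{}^\sigma} + 2\ol B_{11}.
$$
It remains to evaluate both sides in adapted null coordinates and verify that the terms other than $2\ol B_{11}$ combine precisely into $\tau\nu^0 \ol S_{110}$.

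For the right-hand side, $A_\sigma{}^\sigma$ is a scalar, so $\nabla_1\nabla_1 A_\sigma{}^\sigma = g^{\mu\nu}\nabla_1\nabla_1 A_{\mu\nu}$. The vanishings $\ol\Gamma^0_{11}=\ol\Gamma^A_{11}=\ol\Gamma^0_{1A}=0$ that characterize adapted null coordinates, combined with $\ol A_{ij}=0$, imply $\ol{\nabla_1 A_{ij}}=0$ and hence $\ol{\nabla_1\nabla_1 A_{ij}}=0$ for all intrinsic indices $i,j$; the trace therefore reduces to $\frac{1}{2}\ol{\nabla_1\nabla_1 A_\sigma{}^\sigma}=\nu^0 \ol{\nabla_1\nabla_1 A_{01}}$.

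For the left-hand side I expand $\ol{\nabla^\sigma S_{11\sigma}}$ via \eqref{rln_div_S2}. Using \eqref{rln_tensor_S} to rewrite components of $S$ as derivatives of $A$, one verifies $\ol S_{111}=\ol S_{11A}=\ol S_{1CB}=0$ and consequently $\ol{\nabla_1 S_{111}}=\ol{\nabla_A S_{111}}=\ol{\nabla_1 S_{11A}}=0$. The only Christoffel symbol that couples $\ol S_{110}$ back into $\ol{\nabla_A S_{11B}}$ is $\ol\Gamma^0_{AB}=-\nu^0\chi_{AB}$, which yields $\ol g^{AB}\ol{\nabla_A S_{11B}}=\nu^0\tau\ol S_{110}$. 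Collecting,
$$
\ol{\nabla^\sigma S_{11\sigma}} \,=\, \nu^0\bigl(\ol{\nabla_0 S_{111}} + \ol{\nabla_1 S_{110}}\bigr) + \nu^0\tau\ol S_{110}.
$$

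It remains to show that $\nu^0(\ol{\nabla_0 S_{111}} + \ol{\nabla_1 S_{110}})$ matches the right-hand side reduction above. Using $S_{111}=\frac{1}{2}\nabla_1 A_{11}$ and $S_{110}=\nabla_1 A_{10}-\frac{1}{2}\nabla_0 A_{11}$ obtained from \eqref{rln_tensor_S},
$$
\nabla_0 S_{111} + \nabla_1 S_{110} \,=\, \frac{1}{2}[\nabla_0,\nabla_1]A_{11} + \nabla_1\nabla_1 A_{10}.
$$
The commutator produces a curvature term proportional to $R_{011}{}^\kappa A_{\kappa 1}$, and this vanishes on the cone: the contributions with $\kappa\in\{1,A\}$ die because $\ol A_{\kappa 1}=0$, while the $\kappa=0$ contribution dies because $\ol R_{011}{}^0=\nu^0\ol R_{0111}=0$ by the antisymmetry of the Riemann tensor in its last two indices. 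Exploiting the symmetry $A_{10}=A_{01}$, the surviving term is $\ol{\nabla_1\nabla_1 A_{01}}$, exactly matching $\frac{1}{2\nu^0}\ol{\nabla_1\nabla_1 A_\sigma{}^\sigma}$. Substituting back yields $2\ol B_{11}=\tau\nu^0\ol S_{110}$. The main obstacle is bookkeeping rather than anything conceptual: one must carefully track which Christoffel components survive in adapted null coordinates under $\ol A_{ij}=0$ and recognize that the transverse-derivative term $\ol{\nabla_0 S_{111}}$ combines with $\ol{\nabla_1 S_{110}}$ to reconstitute $\ol{\nabla_1\nabla_1 A_{01}}$ modulo curvature terms that vanish by Riemann symmetry.
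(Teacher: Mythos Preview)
Your proof is correct and follows essentially the same route as the paper: both start from the $(\mu,\nu)=(1,1)$ component of \eqref{rln_div_S}, reduce the right-hand side to $\nu^0\ol{\nabla_1\nabla_1 A_{01}}+2\ol B_{11}$, expand the left-hand side via \eqref{rln_div_S2} to $\nu^0(\ol{\nabla_0 S_{111}}+\ol{\nabla_1 S_{110}})+\tau\nu^0\ol S_{110}$, and then show that the transverse-derivative combination equals $\ol{\nabla_1\nabla_1 A_{01}}$. The only cosmetic difference is in this last step: the paper invokes the prepared identity \eqref{nabla3_1} for $\nabla_0\nabla_\mu\nabla_\nu X_\sigma$, whereas you use \eqref{rln_tensor_S} to write $S_{111}=\tfrac12\nabla_1 A_{11}$ and $S_{110}=\nabla_1 A_{10}-\tfrac12\nabla_0 A_{11}$ and then appeal directly to the commutator $[\nabla_0,\nabla_1]A_{11}$---but these two computations are equivalent and yield the same vanishing curvature term $R_{011}{}^\kappa A_{\kappa 1}$.
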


\begin{proof}
Equation \eq{rln_div_S} with $(\mu\nu)=(11)$ yields
\begin{eqnarray}
 \ol{\nabla^{\sigma}S_{11\sigma}} = \nu^0 \nabla_{1} \nabla_{1}\ol A_{01} + 2 \ol B_{11}
\;.
\label{div_S_11_1}
\end{eqnarray}
Note that it follows from \eq{rln_tensor_S} that the vanishing of $\ol A_{1i}$ implies the vanishing
of  $\ol S_{11i}$ as well as all permutations thereof.
Due to \eq{rln_div_S2} we further have
\begin{eqnarray*}
\ol S_{1A B} = \ol S_{A1B}=0
\;.
\end{eqnarray*}
From \eq{rln_div_S2} we then obtain with $(\mu\nu)=(11)$
\begin{eqnarray}
 \ol{\nabla^{\sigma}S_{11\sigma}} &=& \nu^0\ol{\nabla_0 S_{11 1}} + \nu^0\nabla_1 \ol S_{11 0}
 - 2\chi^{AB}\ol S_{1A B}+ \tau\nu^0 \ol S_{11 0}
\;,
\label{div_S_11_2}
\end{eqnarray}
while \eq{nabla3_1} gives
\begin{eqnarray*}
 \ol{\nabla_0 S_{111}} = \ol{\nabla_0 \nabla_{1}\nabla_{1} X_{1} }
= \nabla_{1}\nabla_{1}\ol A_{01}
- \nabla_{1}\ol S_{110}
\;.
\end{eqnarray*}
Equating \eq{div_S_11_1} with \eq{div_S_11_2} yields the desired result.
\qed
\end{proof}

\begin{lemma}
\label{lemma_SA10}
 Assume $\ol A_{ij}=0$ and $\ol S_{110}=0$. Then
$$
 2 \nu_0\ol B_{1A}
= (\partial_1  +\tau
- \nu^0\partial_1\nu_0) \ol S_{A1 0}
\;.
$$
\end{lemma}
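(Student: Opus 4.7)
The plan is to follow the same template as the proof of Lemma~\ref{lemma_S110}, one notch up in complexity. First I would specialize the identity \eqref{rln_div_S} to $(\mu\nu)=(A,1)$: since $\ol A_{A1}=0$ by hypothesis and $g_{A1}=0$ in adapted null coordinates, this collapses to
\begin{equation*}
\ol{\nabla^{\sigma}S_{A1\sigma}} \,=\, \tfrac{1}{2}\ol{\nabla_{A}\nabla_{1}A_{\sigma}{}^{\sigma}} + 2\ol B_{A1}\,.
\end{equation*}
The trace term can be rewritten using the contraction of \eqref{rln_tensor_S}, which gives the identity $\nabla_{\mu}A_{\sigma}{}^{\sigma}=2S_{\mu\nu}{}^{\nu}$, replacing the second covariant derivative of $A_{\sigma}{}^{\sigma}$ by a single tangential derivative of a component of $S$.

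Next I would expand $\ol{\nabla^{\sigma}S_{A1\sigma}}$ through the adapted-coordinate decomposition \eqref{rln_div_S2}. The only contribution involving a truly transverse derivative is $\nu^{0}\ol{\nabla_{0}S_{A11}}$, which I would eliminate via \eqref{nabla3_1} specialized to $(\mu,\nu,\sigma)=(A,1,1)$; after this substitution every surviving expression is intrinsic to the cone and can be written in terms of $\ol A_{\mu\nu}$, $\ol S_{\mu\nu\sigma}$, $\ol B_{\mu\nu}$ and the Christoffel components in the adapted gauge. Using the hypothesis $\ol A_{ij}=0$ together with \eqref{useful_rln_S_1}--\eqref{useful_rln_S_2}, exactly as in the proof of Lemma~\ref{lemma_S110}, the components $\ol S_{11i}$ (with all permutations) and $\ol S_{1AB}=\ol S_{A1B}$ drop out of the computation. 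The extra input $\ol S_{110}=0$ then knocks out the last obstructing term, and the remaining combination consolidates into the transport operator $\partial_{1}+\tau-\nu^{0}\partial_{1}\nu_{0}$ acting on $\ol S_{A10}$, delivering $2\nu_{0}\ol B_{1A}=(\partial_{1}+\tau-\nu^{0}\partial_{1}\nu_{0})\ol S_{A10}$.

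The main obstacle will be the bookkeeping: carrying a sizeable collection of Christoffel-symbol contributions through the two substitutions and verifying that they assemble into precisely the coefficient $\tau-\nu^{0}\partial_{1}\nu_{0}$ in front of $\ol S_{A10}$, with no undesired remainders. A subsidiary delicate point is the fate of the scalar $A_{\sigma}{}^{\sigma}$: on the cone it reduces to $2\nu^{0}\ol A_{01}$, so \emph{a priori} it introduces a coupling to $\ol A_{01}$ (which is \emph{not} assumed to vanish here). The rewriting $\nabla_{\mu}A_{\sigma}{}^{\sigma}=2S_{\mu\nu}{}^{\nu}$ is what replaces this coupling by a term controlled by $\ol S_{110}$ alone, and it is essentially because of this bookkeeping trick that the hypothesis $\ol S_{110}=0$ is both necessary and sufficient to close the identity.
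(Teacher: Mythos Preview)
Your proposal is correct and follows essentially the same route as the paper: specialize \eqref{rln_div_S} to $(\mu\nu)=(A1)$, expand the divergence via \eqref{rln_div_S2}, eliminate the single transverse piece $\nu^{0}\ol{\nabla_{0}S_{A11}}$ using \eqref{nabla3_1}, and then let the vanishing of $\ol S_{11i}$, $\ol S_{1AB}$ (from $\ol A_{ij}=0$) together with $\ol S_{110}=0$ reduce everything to the transport operator acting on $\ol S_{A10}$.

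One small point of divergence worth flagging: the paper does \emph{not} use your contraction identity $\nabla_{\mu}A_{\sigma}{}^{\sigma}=2S_{\mu\nu}{}^{\nu}$ to handle the trace term. Instead, the term $\tfrac{1}{2}\nabla_{A}\nabla_{1}\ol A_{\sigma}{}^{\sigma}$ (which on the cone is $\nu^{0}\nabla_{A}\nabla_{1}\ol A_{01}$, since $\ol A_{ij}=0$) cancels \emph{directly} against the $\nabla_{A}\nabla_{1}\ol A_{01}$ that falls out of the substitution \eqref{nabla3_1} for $\ol{\nabla_{0}S_{A11}}$. Your proposed rewriting would instead produce $\nabla_{A}\ol S_{1\nu}{}^{\nu}$, and on the cone $\ol S_{1\nu}{}^{\nu}=\nu^{0}(\ol S_{110}+\ol S_{101})$ involves $\ol S_{101}$ as well as $\ol S_{110}$; the former is not killed by the hypotheses and still carries the $\ol A_{01}$ coupling via $\ol S_{110}+\ol S_{101}=\nabla_{1}\ol A_{01}$. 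So the bookkeeping trick you single out is not quite the one that closes the argument---the actual mechanism is the direct cancellation of the two $\nabla_{A}\nabla_{1}\ol A_{01}$ contributions. Apart from this cosmetic detour, the plan is on target.
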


\begin{proof}
From the $(\mu\nu)=(A1)$-components of \eq{rln_div_S} we deduce
\begin{eqnarray}
 \ol{\nabla^{\sigma}S_{A1\sigma}} &=&
 \frac{1}{2} \nabla_{A}\nabla_{1} \ol A_{\sigma}{}^{\sigma}
+ 2 \ol B_{1A}
\;.
\label{div_S_A1_1}
\end{eqnarray}
It follows from \eq{nabla3_1} that
\begin{eqnarray*}
 \ol{\nabla_0 S_{A11}} &=&
\ol{\nabla_0 \nabla_{A}\nabla_{1} X_{1}}
\,=\, \nabla_{A}\nabla_{1}\ol A_{01}
- \nabla_{A}\ol S_{110}
\\
&=& \nabla_{A}\nabla_{1}\ol A_{01}
+ 2\chi_A{}^B\ol S_{B10}
\;.
\end{eqnarray*}
Recall that  $\ol S_{11i}$ as well as all permutations thereof vanish, and that $\ol S_{A1B}=\ol S_{1AB}=0$.
Equation \eq{rln_div_S2} then yields with $(\mu\nu)=(A1)$
\begin{eqnarray}
 \ol{\nabla^{\sigma}S_{A1\sigma}} &=&
\nu^0\ol{\nabla_0 S_{A1 1}} + \nu^0\nabla_1 \ol S_{A1 0} + \ol g^{1B}\nabla_B\ol S_{A11}
 + \ol g^{BC} \nabla_C\ol S_{A1 B}
\nonumber
\\
&=&  \nu^0\nabla_{A}\nabla_{1}\ol A_{01}
+ 2\nu^0\chi_A{}^B\ol S_{B10} + \nu^0\nabla_1 \ol S_{A1 0}
 + \frac{1}{2}\ol g^{1B}\ol{\nabla_B\nabla_A A_{11}}
\nonumber
\\
&&
 + \ol g^{BC} \nabla_C\ol S_{A1 B}
\;,
\label{div_S_A1_2}
\end{eqnarray}
where we also used \eq{useful_rln_S_1}.
Combining \eq{div_S_A1_1} and \eq{div_S_A1_2} and invoking again \eq{useful_rln_S_1} we obtain
\begin{eqnarray}
 2 \ol B_{1A}
&=& \nu^0\nabla_1 \ol S_{A1 0}+ 2\nu^0\chi_A{}^B\ol S_{B10}
 + \ol g^{1B}(\frac{1}{2}\ol{\nabla_B\nabla_A A_{11}}- \nabla_{A}\nabla_{1} \ol A_{1B})
\nonumber
\\
&&
 + 2\ol g^{BC}\nabla_{[C}\ol S_{A]1 B}
\;.
\end{eqnarray}
Since
\begin{eqnarray*}
 \frac{1}{2}\ol{\nabla_B\nabla_A A_{11}}- \nabla_{A}\nabla_{1} \ol A_{1B} \,=\, -\nabla_A\ol S_{11B} \,=\, 0
\;,
\end{eqnarray*}
and
\begin{eqnarray*}
 2\ol g^{BC}\nabla_{[C}\ol S_{A]1 B} \,=\,  \tau \nu^0\ol S_{A1 0} - \nu^0\chi_{A}{}^B\ol S_{B1 0}
 +\underbrace{2\ol g^{BC}\chi_{[A}{}^D\ol S_{C]D B}}_{=0 \text{ by \eq{rln_tensor_S}}}
\end{eqnarray*}
the lemma is proved.
\qed
\end{proof}

As in \cite{CP1} one checks via the formulae in \cite[Section~4.5]{CCM2}
which hold in any sufficiently regular gauge,
and assuming
\begin{eqnarray*}
 &\mathring X_1,\partial_i\mathring X =O(1)\;, \quad \mathring X_0,\partial_i\mathring X_0,\partial_A\partial_1\mathring X_0 =O(1)\;,&
\\
&\mathring X_A, \partial_B\mathring X_A = O(r), \quad \partial_1\mathring X_A=O(1),\quad
\mathring Y, \partial_i \mathring Y = O(1)
\;,&
\end{eqnarray*}
which is necessarily satisfied by any pair $(\mathring X,\mathring Y)= (\ol X, \frac{1}{4}\,\ol{\mathrm{div} X})$ with $X$ a smooth vector field,
that $\ol S_{A10}$ needs to exhibit the following behaviour near the tip of the cone:
\begin{equation}
 \ol S_{A10} = O(r^{-1})
\;.
\end{equation}
It thus follows immediately from Lemma~\ref{lemma_S110} and \ref{lemma_SA10} that for any vector field $\mathring X$
and any function $\mathring Y$
which satisfy $\ol  A_{ij}=0$ and $\ol B_{1i}=0$ the equations
\begin{equation}
 \ol S_{i10}=0
\end{equation}
hold sufficiently close to the vertex of the cone where $\tau$ has no zeros.

Let us define the antisymmetric tensor field $\mathring F_{\mu\nu}$ via
\begin{eqnarray}
 \mathring F_{ij} &:=& \nabla_{[i}\mathring X_{j]}
\;,
\\
\mathring F_{i0} &:=& \nabla_i\mathring X_0 - \ol g_{0i} \mathring Y
\;.
\end{eqnarray}
We also define the covector field $\mathring H_{\mu}$,
\begin{eqnarray}
 \mathring H_i &:=& \nabla_i \mathring Y
\;,
\\
\mathring H_0 &:=& 0
\;.
\end{eqnarray}
%

For the following computations we assume
\begin{equation}
 \ol S_{i10} = 0 = \ol A_{ij} = \ol B_{1i}
\;.
\end{equation}
Then, due to the first Bianchi identity,
\begin{eqnarray*}
 \mathring F_{1i} &\equiv& \nabla_1 \mathring X_i - \frac{1}{2}\ol A_{1i} -\mathring Y \ol g_{1i} \,=\,  \nabla_1 \mathring X_i
\;,
\\
\nabla_1 \mathring F_{ij} &\equiv & \nabla_{[i}\ol A_{j]1} - 2\ol g_{1[i}\mathring H_{j]}
  - \ol R_{ij1}{}^{\alpha}\mathring X_{\alpha}  \,=\, - \ol R_{ij1}{}^{\alpha}\mathring X_{\alpha}
\;,
\\
 \nabla_1\ol F_{i0} &\equiv & \ol S_{i10} - \ol R_{i01}{}^{\alpha}\mathring X_{\alpha}
+ \nu_0\mathring H_{i}  -\ol g_{1i} \ol {\nabla_0 Y}
\,=\,\nu_0\mathring H_{i}   - \ol R_{i01}{}^{\alpha}\mathring X_{\alpha}
\;.
\end{eqnarray*}
Moreover,
\begin{eqnarray*}
 \nabla_1 \mathring H_i &\equiv &\ol B_{1i}
 - \ol{\mcL_X L_{1i}}
 \,\equiv \,
\ol B_{1i} - \ol L_{(1}{}^{j}\ol A_{i)j}
 -\mathring X^{\alpha}\ol{\nabla_{\alpha} L_{1i}}
-2\ol L_{(1}{}^{\alpha}\mathring F_{i)\alpha}
-2\ol L_{1i}\mathring Y
\\
&=&  -\mathring X^{\alpha}\ol{\nabla_{\alpha} L_{1i}}
-2\ol L_{(1}{}^{\alpha}(\mathring F_{i)\alpha}
+\ol g_{i)\alpha}\mathring Y )
\;,
\\
\nabla_1 \mathring H_0 &\equiv &  -\ol \Gamma^{i}_{01}\mathring H_{i}
\;.
\end{eqnarray*}

Therefore the candidate fields $\mathring X$ and $\mathring Y$ solving (a)-(e) in Theorem~\ref{KID_eqns_main_cone2} form a solution of the following problem
on $C_O$,
%
\begin{eqnarray}
 \left\{
   \begin{array}{l}
     \nabla_1\mathring  X_{\mu} = \mathring  F_{1\mu} + \ol  g_{1\mu}\mathring  Y  ,  
\\
    \nabla_1\mathring  F_{\mu\nu} =  2\ol  g_{1[\nu}\mathring  H_{\mu]} - \ol R_{\mu\nu 1}{}^{\alpha}\mathring  X_{\alpha}  , 
 \\
    \nabla_1 \mathring  Y = \mathring  H_1  ,
\\
 \nabla_1\mathring  H_{\mu} =
 -\mathring  X^{\alpha}\ol{\nabla_{\alpha} L_{1\mu}}
-2\ol L_{(1}{}^{\alpha}(\mathring  F_{\mu)\alpha}
+\ol g_{\mu)\alpha}\mathring  Y ) 
\\
\phantom{\nabla_1\mathring  H_{\mu} =}-\ol g_{1\mu}\nu^0[\ol\Gamma^{i}_{01}\mathring  H_{i}  -\mathring  X^{\alpha}\ol{\nabla_{\alpha} L_{01}}
-2\ol L_{(1}{}^{\alpha}(\mathring  F_{0)\alpha}
+\ol g_{0)\alpha}\mathring  Y )] , 
   \end{array}
 \right.
\label{system1}
\end{eqnarray}
which is uniquely defined by the values of $\mathring X_{\mu}$, $\mathring F_{\mu\nu}$, $\mathring Y$ and $\mathring H_{\mu}$ at the vertex of the cone.

We want to show that the fields which solve \eq{system1} are restrictions to the cone of smooth spacetime fields:
Given  any
vector  $\ell^{\mu}$  in the tangent space
at $O$ define $(x^{\mu}(s), X^{\mu}(s), F_{\mu\nu}(s),Y(s),H_{\mu}(s))$ as the unique solution of the problem
\begin{eqnarray}
\left\{
   \begin{array}{l}
\frac{\mathrm{d}^2x^{\mu}}{\mathrm{d}s^2} + \Gamma^{\mu}_{\alpha\beta}\frac{\mathrm{d}x^{\alpha}}{\mathrm{d}s}\frac{\mathrm{d}x^{\beta}}{\mathrm{d}s} =0\;,
\\
\frac{\mathrm{d}X_{\mu}}{\mathrm{d}s} - \Gamma^{\alpha}_{\mu\beta}X_{\alpha}\frac{\mathrm{d}x^{\beta}}{\mathrm{d}s} =
 F_{\alpha\mu}\frac{\mathrm{d}x^{\alpha}}{\mathrm{d}s}+ g_{\alpha\mu}Y\frac{\mathrm{d}x^{\alpha}}{\mathrm{d}s}
\;,
\\
\frac{\mathrm{d}F_{\mu\nu}}{\mathrm{d}s} - \Gamma^{\alpha}_{\mu\gamma}F_{\alpha\nu}\frac{\mathrm{d}x^{\gamma}}{\mathrm{d}s}
- \Gamma^{\alpha}_{\nu\gamma}F_{\mu\alpha}\frac{\mathrm{d}x^{\gamma}}{\mathrm{d}s} =
2 g_{\gamma[\nu}  H_{\mu]} \frac{\mathrm{d}x^{\gamma}}{\mathrm{d}s}
-R_{\mu\nu\gamma}{}^{\alpha}X_{\alpha}\frac{\mathrm{d}x^{\gamma}}{\mathrm{d}s}
\;,
\\
\frac{\mathrm{d}Y}{\mathrm{d}s} = H_{\alpha}\frac{\mathrm{d}x^{\alpha}}{\mathrm{d}s}
\;,
\\
\frac{\mathrm{d}H_{\mu}}{\mathrm{d}s} - \Gamma^{\alpha}_{\mu\beta}H_{\alpha}\frac{\mathrm{d}x^{\beta}}{\mathrm{d}s} =
\{ -  X^{\alpha}\nabla_{\alpha} L_{\gamma\mu}
-2 L_{(\gamma}{}^{\alpha}(  F_{\mu)\alpha}
+ g_{\mu)\alpha}  Y ) 
\\
\qquad - g_{\gamma\mu}\nu^0[\Gamma^{i}_{01}  H_{i}  -  X^{\alpha}\nabla_{\alpha} L_{01}
-2 L_{(1}{}^{\alpha}(  F_{0)\alpha}
+ g_{0)\alpha}  Y )] \}\frac{\mathrm{d}x^{\gamma}}{\mathrm{d}s}
\;,
\\
x^{\mu}(0)=0\;, \quad \frac{\mathrm{d}x^{\mu}}{\mathrm{d}s} (0) =\ell^{\mu}\;,
   \end{array}
 \right.
\label{system2}
\end{eqnarray}
for given initial data $(X^{\mu}(0), F_{\mu\nu}(0),Y(0),H_{\mu}(0))$.
As in \cite[Section~2.4]{CP1}  the system \eq{system2}, together with the property that solutions of ODEs depend smoothly upon initial data, and that the trace of solutions of \eq{system2} on $C_O$ solve \eq{system1}, can be used to show that the fields solving \eq{system1}  are restrictions to the cone of smooth spacetime fields.
We have proved:

\begin{proposition}
\label{KID_eqns_main_cone3}
The condition (f) in Theorem~\ref{KID_eqns_main_cone2} can be removed.
\end{proposition}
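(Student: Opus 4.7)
The strategy I would follow is the one sketched immediately before the statement: solve an ODE system along the null generators of $C_O$ to reconstruct $(\mathring X,\mathring Y)$ from vertex data, and then realise this system as the restriction of a smooth spacetime ODE system obtained by propagating along \emph{all} geodesics emanating from $O$. Smoothness of the spacetime fields will then be a direct consequence of smooth dependence of ODE solutions on initial data combined with the exponential map being a local diffeomorphism at $O$.

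First, I would observe that the calculation carried out in the text preceding the statement already shows that the quadruple $(\mathring X_\mu,\mathring F_{\mu\nu},\mathring Y,\mathring H_\mu)$ built from a pair $(\mathring X,\mathring Y)$ satisfying the KID equations (a)--(e) solves the transport system \eqref{system1} along the null generators, once one knows $\ol A_{ij}=0$, $\ol B_{1i}=0$ and $\ol S_{i10}=0$; the last identity follows from Lemmas~\ref{lemma_S110} and \ref{lemma_SA10} (combined with the regularity bound $\ol S_{A10}=O(r^{-1})$) on the closure of the set where $\tau$ has no zero, which we may assume since extendability is only nontrivial near the vertex.

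Next, given arbitrary $\ell^\mu\in T_O\mcM$, I would solve the spacetime ODE system \eqref{system2} along $s\mapsto \exp_O(s\ell)$ with a fixed choice of initial data $(X^\mu(0),F_{\mu\nu}(0),Y(0),H_\mu(0))$ determined by the values of $(\mathring X,\mathring F,\mathring Y,\mathring H)$ at the vertex. Because \eqref{system2} is a smooth linear system of ODEs with smooth coefficients (the exponential map is smooth, and the coefficients involve only the spacetime metric, its connection, the Riemann and Schouten tensors, all smooth), its solutions depend smoothly on $\ell^\mu$. Composing with $\exp_O^{-1}$ on a normal neighbourhood $\mcU$ of $O$ then produces smooth spacetime fields $X^\mu,F_{\mu\nu},Y,H_\mu$ on $\mcU$.

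Finally, I would check that the restriction of these spacetime fields to $C_O\cap\mcU$ solves the cone system \eqref{system1}: this is immediate by reparameterising \eqref{system2} along a null generator, since $\tfrac{\mathrm{d}x^\mu}{\mathrm{d}s}$ is then tangent to the generator, and the right-hand sides of \eqref{system2} become those of \eqref{system1}. Uniqueness of solutions of \eqref{system1} along each null generator, combined with the matching initial values at $O$, then forces the restriction of $X^\mu$ and $Y$ to $C_O$ to coincide with $\mathring X$ and $\mathring Y$, proving condition (f). The main obstacle will be the vertex analysis: one has to verify that the initial conditions at $O$ are chosen consistently so that the two constructions (along cone generators versus along all geodesics) agree, and that no further regularity assumption on $\mathring X,\mathring Y$ is needed beyond what (a)--(e) already provide; this is exactly the point at which the argument parallels \cite[Section~2.4]{CP1}, and the bound $\ol S_{A10}=O(r^{-1})$ together with the behaviour of the metric components at the tip of the cone recalled in \cite[Section~4.5]{CCM2} is what makes the matching work.
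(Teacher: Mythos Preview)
Your proposal is correct and follows essentially the same approach as the paper's own argument in Section~\ref{extendability}: derive the transport system \eqref{system1} on the cone from (a)--(e) via Lemmas~\ref{lemma_S110}--\ref{lemma_SA10}, lift it to the spacetime system \eqref{system2} along all geodesics through $O$, invoke smooth dependence of ODE solutions on initial data together with the exponential map, and conclude by uniqueness that the cone fields are restrictions of the resulting smooth spacetime fields. Both your sketch and the paper defer the delicate vertex matching (existence and direction-independence of the limits that serve as initial data at $O$) to \cite[Section~2.4]{CP1}.
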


\subsection{A stronger version of Theorem~\ref{KID_eqns_main} for two transversally intersecting null hypersurfaces}

\subsubsection{Stronger version}

We want to establish the analogues of Lemma~\ref{lemma_A0mu} and \ref{lemma_B0mu} for two transversally intersecting null hypersurfaces.
\begin{lemma}
\label{lemma_A0mu_hyper}
 Assume that the wave equations \eq{wave_X0} and \eq{wave_Y0} for $X$ and $Y$ hold, and that, on $N_1$,  $\ol A_{2\mu} =  \ol A_{AB} = 0 = \ol B_{22} = \ol B_{2A} = \ol B_{AB}$, similarly on $N_2$.
Furthermore, we assume that $\nabla_{[1}A_{2]A}|_S=0$.
Then, on $N_1$, $\ol A_{11}=\ol A_{1A} = 0 $ and $\ol{\nabla_1 A_{22}} = \ol{\nabla_1 A_{2A}}= \ol{\nabla_1 A_{AB}}=0 $, and a corresponding statement holds on $N_2$.
On the closure of those sets where $\tau$ is non-zero the assumption $\ol A_{12}=0$ is not needed
 but follows from the remaining assumptions,
supposing that $ A_{12}|_S=0$.
\end{lemma}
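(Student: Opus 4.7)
The strategy is to mirror the proof of Lemma~\ref{lemma_A0mu}, working on $N_1$ in adapted null coordinates. The roles played on the light-cone by the transverse coordinate $x^0$ and the generator parameter $x^1$ are now taken by $x^1$ and $x^2$ respectively, and the divergence identity \eq{wave_X_comp} together with the hierarchy \eq{ODE_system_A_1}--\eq{ODE_system_A_6} translate verbatim under the relabeling $0\mapsto 1$, $1\mapsto 2$. Absent a vertex, the resulting transport equations along $\partial_2$ are ordinary regular equations rather than Fuchsian ones, so in place of vertex regularity one must supply vanishing initial data at $S=N_1\cap N_2$; the point of the lemma is that these are furnished by the $N_2$-hypotheses together with the single additional assumption $\nabla_{[1}A_{2]A}|_S=0$.

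The preparatory observation is that combining the $N_1$- and $N_2$-hypotheses yields $A_{\mu\nu}|_S=0$, so that at $S$ any covariant derivative of $A$ reduces to a partial derivative. Tangential differentiation of the $N_2$-hypotheses along $\partial_1,\partial_A$ gives $\partial_1 A_{1\mu}|_S=\partial_A A_{1\mu}|_S=\partial_1 A_{AB}|_S=\partial_A A_{AB}|_S=0$, and tangential differentiation along $N_1$ yields the analogous relations with the indices $1,2$ interchanged. The only combination not determined in this way is the pair $(\partial_1 A_{2A}|_S,\partial_2 A_{1A}|_S)$. These two are linked by $\partial_1 A_{2A}|_S-\partial_2 A_{1A}|_S=2\nabla_{[1}A_{2]A}|_S$, while the translated \eq{rln_nabla0_A1A} evaluated at $S$ (using $\ol A_{1A}|_S=0$) gives $\partial_1 A_{2A}|_S+\partial_2 A_{1A}|_S=0$; together with the hypothesis $\nabla_{[1}A_{2]A}|_S=0$ this forces both of them to vanish, and in particular $\ol{\nabla_1 A_{2A}}|_S=0$ and $\partial_2\ol A_{1A}|_S=0$.

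With these initial data I would then run the hierarchy on $N_1$ in direct analogy with the light-cone argument. The translated \eq{relation_for_A01} gives $\ol{\nabla_1 A_{22}}=-\tau\ol A_{12}=0$ algebraically. The translated \eq{ODE_A0A} for $\ol A_{1A}$ is a homogeneous regular second-order equation in $\partial_2$; its two initial data $\ol A_{1A}|_S=0$ and $\partial_2\ol A_{1A}|_S=0$ force $\ol A_{1A}\equiv 0$ on $N_1$, whence $\ol{\nabla_1 A_{2A}}\equiv 0$ by \eq{rln_nabla0_A1A}. Next, the $\mu=1$-component of the translated \eq{wave_X_comp} coupled with the $\ol g_{AB}$-trace of the translated \eq{ODE_system_A_3} forms a regular first-order system in $\partial_2$ for the pair $(\ol A_{11},\ol g^{AB}\ol{\nabla_1 A_{AB}})$ with vanishing initial data at $S$ (the latter because $\partial_1$ is tangent to $N_2$ and $A_{AB}|_{N_2}=0$), so $\ol A_{11}\equiv 0$; the translated \eq{ODE_system_A_3}, now a homogeneous regular transport for $\ol{\nabla_1 A_{AB}}$ with vanishing datum $\nabla_1 A_{AB}|_S=\partial_1 A_{AB}|_S=0$, then yields $\ol{\nabla_1 A_{AB}}\equiv 0$. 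The corresponding conclusions on $N_2$ follow by interchanging the roles of $x^1$ and $x^2$. For the closing assertion, if one drops the hypothesis $\ol A_{12}|_{N_1}=0$, combining the $\mu=2$-component of the translated \eq{wave_X_comp} with the translated \eq{ODE_system_A_1} produces, in exact parallel with \eq{ODE_A01}, the transport equation $\tau(\partial_2+\tau/2-\nu^0\partial_2\nu_0)\ol A_{12}=0$ on $N_1$; on the closure of any set on which $\tau\neq 0$ the initial datum $A_{12}|_S=0$ forces $\ol A_{12}=0$, and the previous argument proceeds unchanged there. The main bookkeeping difficulty is to verify that every initial datum required to start the transport hierarchy is a consequence of the $N_2$-hypotheses together with the single extra relation at $S$, using the first Bianchi identity to dispose of the mixed $(1,2)$-derivative of $A_{2A}$ at $S$.
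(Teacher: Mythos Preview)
Your proposal is correct and follows essentially the same route as the paper: translate the light-cone hierarchy \eq{ODE_system_A_1}--\eq{ODE_system_A_3} and the divergence identity \eq{wave_X_comp} to $N_1$ under $0\mapsto 1$, $1\mapsto 2$, observe that the resulting transport equations along $\partial_2$ are regular rather than Fuchsian, and supply the vanishing initial data at $S$ by combining the $N_2$-hypotheses with the relation $\ol{\nabla_{(1}A_{2)A}}|_S=0$ (from the translated \eq{rln_nabla0_A1A}) and the assumed $\nabla_{[1}A_{2]A}|_S=0$. The paper's proof says exactly this, only more tersely; your more explicit bookkeeping of which initial datum at $S$ comes from which hypothesis is a faithful unpacking of its claim that the list of required vanishing data ``is an obvious consequence of the hypotheses made above.''
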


\begin{proof}
We can repeat most of the steps which were necessary to prove Lemma~\ref{lemma_A0mu}. The only difference is that the ODEs are not of Fuchsian type anymore, but regular ones. To make sure that all the fields involved vanish on $N_1\cup N_2$ we therefore need to make sure that we have vanishing initial data on $S$. This is the case if, \underline{on $S$},
\begin{eqnarray*}
 A_{11} = A_{22} = A_{1A} = A_{2A} = \nabla_1 A_{2A} = \nabla_2 A_{1A} = g^{AB}\nabla_1 A_{AB} = g^{AB} \nabla_2 A_{AB}=0
 \;.
\end{eqnarray*}
Observing that the analogue of \eq{rln_nabla0_A1A} for light-cones holds, i.e.\
\begin{eqnarray*}
 \ol{\nabla_{(1}A_{2)A}}=0
\;.
\end{eqnarray*}
this is an obvious consequence of the hypotheses made above.
 \qed
\end{proof}

In analogy to Lemma~\ref{lemma_B0mu} we have
\begin{lemma}
\label{lemma_B0mu_hyper}
 Assume that \eq{wave_X0} and \eq{wave_Y0} hold, and that $\ol A_{\mu\nu} = 0$.
Moreover, assume that, on $N_1$,  $\ol B_{22}= \ol B_{2A}= \ol B_{AB}=0$ and $\ol{\nabla_1 A_{22}}= \ol{\nabla_1 A_{2A}} = \ol{\nabla_1 A_{AB}} =0$,
similarly on $N_2$.
Then, on $N_1$,
$\ol B_{1\mu}=0$, $ \ol{\nabla_1B_{22}}= \ol{\nabla_1B_{2A}}= \ol g^{AB} \ol{ \nabla_{1} B_{AB}}=0$,  $\ol{\nabla_1 A_{1\mu}} =0$ and $\ol{\nabla_1\nabla_1 A_{22}} = \ol{\nabla_1\nabla_1 A_{2A}} = \ol g^{AB}\ol{\nabla_1\nabla_1 A_{AB}}=0$, and  similar conclusions can be drawn on $N_2$.
\end{lemma}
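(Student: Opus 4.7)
The plan is to run the argument of Lemma~\ref{lemma_B0mu} almost verbatim on each of $N_1$ and $N_2$ separately, exploiting the fact that on a transverse pair of null hypersurfaces the transport equations along the null generators are regular rather than Fuchsian, so uniqueness reduces to checking vanishing Cauchy data on the intersection $S=N_1\cap N_2$. Concretely, working on $N_1$ with $x^2$ parametrising the null generators (the role previously played by $x^1=r$ on the cone), the divergence identity \eqref{wave_X_comp2} combined with \eqref{waveeqn_B} and \eqref{waveeqn_nabla_A} and the hypotheses $\ol A_{\mu\nu}=0$, $\ol B_{22}=\ol B_{2A}=\ol B_{AB}=0$, $\ol{\nabla_1 A_{22}}=\ol{\nabla_1 A_{2A}}=\ol{\nabla_1 A_{AB}}=0$ produces the exact same coupled hierarchy of ODEs in $\partial_2$ for the unknowns $\ol B_{1\mu}$, $\ol{\nabla_2 B_{2\mu}}$, $\ol g^{AB}\ol{\nabla_2 B_{AB}}$, $\ol{\nabla_2 A_{1\mu}}$, $\ol{\nabla_2\nabla_2 A_{1\mu}}$ and $\ol g^{AB}\ol{\nabla_2\nabla_2 A_{AB}}$ as in \eqref{Fuchsian_A1}--\eqref{Fuchsian_A4} and \eqref{Fuchsian_B1}--\eqref{Fuchsian_B4}, the only difference being that the coefficient matrices no longer carry the $1/r$ singularity at a vertex.

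The proof then amounts to verifying that all these unknowns vanish on $S$. The trace relation $\ol B_\alpha{}^{\alpha}=\ol L^{\mu\nu}\ol A_{\mu\nu}=0$ (a consequence of (ii) and $\ol A_{\mu\nu}=0$) combined with the assumed vanishing of $\ol B_{22},\ol B_{2A},\ol B_{AB}$ on $N_1$ and the symmetric conditions on $N_2$ yields $\ol B_{12}|_S=0$ and hence $\ol B_{1\mu}|_S=\ol B_{2\mu}|_S=0$. The hypothesis $\ol A_{\mu\nu}=0$ on $N_1\cup N_2$ makes all tangential derivatives of $A$ vanish on $S$, which together with the explicit assumptions on the transverse derivatives $\ol{\nabla_1 A_{2\bullet}}$ and $\ol{\nabla_2 A_{1\bullet}}$ forces $\ol{\nabla_1 A_{\mu\nu}}|_S=\ol{\nabla_2 A_{\mu\nu}}|_S=0$. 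A further tangential differentiation along $S$ of these relations, coupled with \eqref{wave_X_comp2} restricted to $S$ and the vanishing $\ol B_{1\mu}|_S$ just obtained, gives the vanishing of the second transverse derivatives $\ol{\nabla_2\nabla_2 A_{\mu\nu}}|_S$ that are needed as initial data.

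Standard ODE uniqueness then propagates these zeros from $S$ along the null generators to all of $N_1$, yielding the conclusions for $N_1$; the argument on $N_2$ is obtained by swapping the roles of $x^1$ and $x^2$. The only delicate step is the bookkeeping described in the previous paragraph: checking that the seemingly incomplete set of hypotheses on mixed transverse derivatives actually suffices, through the constraint \eqref{wave_X_comp2} and tangential differentiation along $S$, to pin down every component of $\ol{\nabla_a A_{\mu\nu}}|_S$ and $\ol B_{a\mu}|_S$ required to start the transport on each $N_a$. Once this is established, the rest is a mechanical transcription of the light-cone computation with the Fuchsian invocations of Appendix~\ref{app_Fuchsian} replaced by the standard uniqueness theorem for regular linear ODEs.
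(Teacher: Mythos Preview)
Your overall strategy is exactly the one the paper uses: transcribe the light-cone ODE hierarchy of Lemma~\ref{lemma_B0mu} to each $N_a$, observe that it becomes a regular first-order system along the null generators, and reduce the problem to checking that the unknowns vanish on $S$. So the architecture is right.

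Two points need attention. First, there is a systematic index slip in your list of unknowns. On $N_1=\{x^1=0\}$ the transverse direction is $\partial_1$ and the generator direction is $\partial_2$; the correct analogues of the cone quantities $\ol{\nabla_0 B_{1A}}$, $\ol{\nabla_0 A_{0\mu}}$, $\ol{\nabla_0\nabla_0 A_{1i}}$ are therefore $\ol{\nabla_1 B_{2A}}$, $\ol{\nabla_1 A_{1\mu}}$, $\ol{\nabla_1\nabla_1 A_{2i}}$ (and so on), not the $\nabla_2$-versions you wrote --- compare with the components appearing in the statement of the lemma itself.

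Second, your verification of the initial data is different from the paper's and not quite complete. The paper does not go through the trace relation, \eqref{wave_X_comp2}, and tangential differentiation separately. Instead it makes two observations that dispose of all initial data at once: (i) every $A$-derivative unknown on $N_1$ --- namely $\ol{\nabla_1 A_{1\mu}}|_S$, $\ol{\nabla_1\nabla_1 A_{2i}}|_S$, $\ol g^{AB}\ol{\nabla_1\nabla_1 A_{AB}}|_S$ --- is a repeated $\nabla_1$-derivative and hence tangent to $N_2$, so it vanishes on $S$ simply because $A|_{N_2}=0$; and (ii) by \eqref{waveeqn_A} the $B$-unknowns and their transverse derivatives are algebraic combinations of $A_{\mu\nu}$ and its covariant derivatives, and those in turn vanish on $S$ by the same reasoning. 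Your hands-on route correctly recovers $\ol B_{1\mu}|_S=0$, but you never address the initial values $\ol{\nabla_1 B_{22}}|_S$, $\ol{\nabla_1 B_{2A}}|_S$, $\ol g^{AB}\ol{\nabla_1 B_{AB}}|_S$, which are genuine unknowns of the coupled system and must be shown to vanish on $S$ before ODE uniqueness can be invoked. The paper's appeal to \eqref{waveeqn_A} is precisely what covers these in one stroke.
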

\begin{proof}
 Again, we just need to make sure that all the initial data for the ODEs vanish on $S$. For all the field components involving covariant derivatives of $A_{\mu\nu}$ this follows directly from the vanishing of $\ol A_{\mu\nu}$.
The vanishing of those field components involving (covariant derivatives of) $B_{\mu\nu}$ follows from the same fact, since, by \eq{waveeqn_A}, they can be expressed in terms of $A_{\mu\nu}$ and covariant derivatives thereof.
\qed
\end{proof}

Altogether we have proved
\begin{theorem}
 \label{KID_eqns_main_hyper}
Assume that we have been given a $3+1$ dimensional spacetime $(\mcM, g, \Theta)$, with ($g, \Theta$) a smooth solution of the conformal field equations.
Let $N_a\subset \mcM$, $a=1,2$, be two transversally intersecting
null hypersurfaces  with transverse intersection along a smooth 2-dimensional submanifold $S$.
Then there exists a vector field $\hat X$ satisfying the unphysical Killing equations \eq{2_conditions} on $\mathrm{D}^+(N_1\cup N_2)$ if and only if there exists a pair $(X,Y)$, $X$ a vector field and $Y$ a function, which fulfills the following conditions:
 \begin{enumerate}
  \item[(a)] the conditions (i)-(iv) in Theorem~\ref{KID_eqns_main} hold,
  \item[(b)] $\ol A_{AB} =0 = \ol A_{22}|_{N_1}= \ol A_{2A}|_{N_1}= \ol A_{11}|_{N_2}= \ol A_{1A}|_{N_2}$
  with
 $A_{\mu\nu} \equiv \nabla_{\mu}X_{\nu} + \nabla_{\nu} X_{\mu} - 2Y g_{\mu\nu}$,
  \item[(c)] $\ol A_{12}=0$,
\item[(d)] $\ol {\nabla_{[1}A_{2]A}}|_S=0$,
  \item[(e)] $\ol B_{AB} =0 = \ol B_{22}|_{N_1}= \ol B_{2A}|_{N_1}= \ol B_{11}|_{N_2}= \ol B_{1A}|_{N_2}$   with
$B_{\mu\nu}\equiv \mcL_XL_{\mu\nu} +   \nabla_{\mu}\nabla_{\nu}Y$.
 \end{enumerate}
Moreover, $\hat X=X$ and $\nabla_{\kappa}\hat X^{\kappa}= 4 Y$.
 The condition (c) suffices to be fulfilled  on $S$ and on the closure of those sets where $\tau$ is non-zero.
\end{theorem}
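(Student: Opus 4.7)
The plan is to mimic the strategy used for the light-cone case, replacing Lemmas~\ref{lemma_A0mu} and~\ref{lemma_B0mu} by the variants for two transverse hypersurfaces, Lemmas~\ref{lemma_A0mu_hyper} and~\ref{lemma_B0mu_hyper}, that have just been established. The ``$\Longrightarrow$'' direction is immediate: setting $X:=\hat X$ and $Y:=\frac{1}{4}\nabla_\kappa\hat X^\kappa$, the unphysical Killing equations give $A_{\mu\nu}\equiv 0$ and $\phi\equiv 0$ on all of $\mathrm{D}^+(N_1\cup N_2)$, which by the computations in Section~\ref{sec_KID_unphys} forces (i)--(iv) as spacetime identities and $B_{\mu\nu}\equiv 0$; restriction to $N_1\cup N_2$ then yields (a)--(e).

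For ``$\Longleftarrow$'', I would first extend $(X,Y)$ from $N_1\cup N_2$ into $\mathrm{D}^+(N_1\cup N_2)$ by solving the linear wave equations (i) and (ii), invoking the Rendall-type well-posedness result for the characteristic Cauchy problem on two transversally intersecting null hypersurfaces (thus covered by the same existence framework as in \cite{CP1}). With $X,Y$ so defined, the computations in the proof of Theorem~\ref{KID_eqns_main} yield the closed homogeneous system of wave equations \eq{waveeqn_A}, \eq{waveeqn_XnablaTheta}, \eq{waveeqn_Xnabla_s}, \eq{waveeqn_B}, \eq{waveeqn_nabla_A} for the quintuple $(A_{\mu\nu},\phi,\psi,B_{\mu\nu},\nabla_\sigma A_{\mu\nu})$. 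By standard uniqueness for linear wave equations, it then suffices to show that all five fields vanish on $N_1\cup N_2$.

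Vanishing of $\ol\phi$ and $\ol\psi$ is (iii), (iv). Conditions (b) and (c) supply all components of $\ol A_{\mu\nu}$ tangent to each $N_a$ except the ``doubly transverse'' ones $\ol A_{11}|_{N_1}$, $\ol A_{1A}|_{N_1}$ and their counterparts on $N_2$; these are obtained from the regular ODE hierarchy underlying Lemma~\ref{lemma_A0mu_hyper}, whose initial data on $S$ are rendered zero by (b), (c) together with the corner compatibility condition (d). Thus $\ol A_{\mu\nu}=0$ on $N_1\cup N_2$. The algebraic identity $\ol B_{\alpha}{}^{\alpha}=\ol L^{\mu\nu}\ol A_{\mu\nu}$ (derived as in the proof of Theorem~\ref{KID_eqns_main}) combined with (e) then gives $\ol B_{\mu\nu}=0$ on each hypersurface. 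Finally Lemma~\ref{lemma_B0mu_hyper} delivers $\ol{\nabla_\sigma A_{\mu\nu}}=0$ on $N_1\cup N_2$, the required ODE data at $S$ vanishing because $\ol A_{\mu\nu}$ itself vanishes there. Uniqueness now forces $A_{\mu\nu}\equiv\phi\equiv\psi\equiv B_{\mu\nu}\equiv 0$ throughout $\mathrm{D}^+(N_1\cup N_2)$, and taking the trace of $A_{\mu\nu}=0$ yields $Y=\frac{1}{4}\nabla_\kappa X^\kappa$, so that $\hat X:=X$ satisfies the unphysical Killing equations \eq{2_conditions}.

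The main obstacle is the bookkeeping at the intersection submanifold $S$: since the generators of $N_1$ and $N_2$ meet transversally at $S$ rather than at a vertex, the ODE systems appearing in Lemmas~\ref{lemma_A0mu_hyper} and~\ref{lemma_B0mu_hyper} are regular ODEs, not Fuchsian ones, and therefore require genuine initial data at $S$ to vanish for their unique solutions to be identically zero. Identifying precisely which data on $S$ must be prescribed---namely $\ol A_{12}|_S=0$ and $\ol{\nabla_{[1}A_{2]A}}|_S=0$, which is the content of the ``extra'' clauses (c) and (d)---and verifying that together with (b) and (e) they exactly match every piece of ODE initial data needed on $S$, is the only delicate point; once this is sorted, the proof reduces to a direct reference to the two lemmas and to Theorem~\ref{KID_eqns_main}.
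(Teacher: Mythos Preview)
Your overall strategy is exactly the paper's: combine Lemmas~\ref{lemma_A0mu_hyper} and~\ref{lemma_B0mu_hyper} to upgrade the partial initial assumptions (b)--(e) to the full conditions (v)--(vi) of Theorem~\ref{KID_eqns_main}, then invoke the closed homogeneous wave system and uniqueness.

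There is, however, one genuine misattribution in your middle paragraph. The trace identity $\ol B_{\alpha}{}^{\alpha}=\ol L^{\mu\nu}\ol A_{\mu\nu}=0$ together with (e) does \emph{not} give $\ol B_{\mu\nu}=0$ on each hypersurface. On $N_1$, condition (e) supplies $\ol B_{22}$, $\ol B_{2A}$, $\ol B_{AB}$, and the trace identity then gives only $\ol B_{12}=0$; the components $\ol B_{11}$, $\ol B_{1A}$ remain undetermined at that stage. These transverse $B$-components are in fact part of the \emph{output} of Lemma~\ref{lemma_B0mu_hyper} (``Then, on $N_1$, $\ol B_{1\mu}=0$,\dots''), not something you already have before invoking it. Likewise, the full $\ol{\nabla_\sigma A_{\mu\nu}}=0$ is not delivered by Lemma~\ref{lemma_B0mu_hyper} alone: that lemma gives $\ol{\nabla_1 A_{1\mu}}=0$ on $N_1$, while the pieces $\ol{\nabla_1 A_{22}}$, $\ol{\nabla_1 A_{2A}}$, $\ol{\nabla_1 A_{AB}}$ were already obtained in Lemma~\ref{lemma_A0mu_hyper}. (Alternatively, once both lemmas have produced $\ol A_{\mu\nu}=0=\ol B_{\mu\nu}$ on $N_1\cup N_2$, you can simply cite Lemma~\ref{vanishing_nabla_A}, which covers the transversally intersecting case as well.)

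With this reordering---first use Lemma~\ref{lemma_A0mu_hyper} to complete $\ol A_{\mu\nu}=0$ and obtain the tangential $\ol{\nabla_0 A_{ij}}$ pieces, then use Lemma~\ref{lemma_B0mu_hyper} to complete $\ol B_{\mu\nu}=0$ and the remaining $\ol{\nabla_\sigma A_{\mu\nu}}$ pieces---your argument is correct and coincides with the paper's.
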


\subsubsection{The (proper) KID equations}

Again, we would like to replace the non-intrinsic conditions (c), (e) and $\ol \psi=0$ by conditions
which do not involve transverse derivatives of $X$ and $Y$. For the latter two this can be done as in
the light-cone case. We just note that  the ODEs  for
$\Upsilon_{N_a} $, $a=1,2$, corresponding to \eq{ODE_rho}, need to be supplemented by the boundary condition $\Upsilon_{N_a}|_S =  \ol {\partial_a Y}$.
To replace (c) one needs to take into account that, due to \eq{S110_A01},
we have
\begin{equation*}
 \ol A_{12}=0 \quad \Longleftrightarrow \quad  \ol A_{12}|_S = 0 = \ol S_{221}|_{N_1}  = \ol S_{112}|_{N_2} 
\;.
\end{equation*}
Furthermore, (b) and (c)  imply
\begin{eqnarray*}
 S_{A12}|_S &=& 2\nabla_{(A}A_{1)2} - \nabla_2 A_{1A} \,= \,  2 \nabla_{[1} A_{2]A}
\;,
\end{eqnarray*}
i.e.\ (d) can be replaced by the condition
\begin{eqnarray*}
 0\,=\, S_{A12}|_S &\equiv&   2\nabla_{A}\nabla_{1} X_{2}  - 2 R_{2 1 A}{}^{\kappa} X_{\kappa}  - 4\nabla_{(A}Y g_{1)2} + 2 \nabla_{2}Y g_{A1}
\\
 &=&  2\nabla_{A}\nabla_{1} X_{2}  - 2 R_{2 1 A}{}^{\kappa} X_{\kappa}  - 2g_{12}\nabla_{A}Y
\;.
\end{eqnarray*}

As a direct consequence of Theorem~\eq{KID_eqns_main_hyper} we end up with the following result:
\begin{theorem}
\label{KID_eqns_main_hzpe2}
Assume we have been given a $3+1$-dimensional spacetime $(\mcM, g, \Theta)$, with ($g, \Theta$) a smooth solution of the conformal field equations.
Let $\mathring X$ be a vector field and $\mathring Y$ a function defined on two transversally intersecting
null hypersurfaces $N_a \subset \mcM$, $a=1,2$, with transverse intersection along a smooth 2-dimensional submanifold $S$.
Then there exists a smooth vector field $X$ with $\ol X=\mathring X$ and $\ol{\nabla_{\kappa}X^{\kappa}} = 4\mathring Y$ satisfying the unphysical Killing equations  \eq{2_conditions} on $\mathrm{D}^+(N_1\cup N_2)$
(i.e.\ representing a Killing field of the physical spacetime)
 if and only if the KID equations are fulfilled
(we suppress the dependence of $\coneD_{i}$ on $N_a$):
 \begin{enumerate}
   \item[(i)] $  \mathring X^{\mu}\ol{\nabla_{\mu}\Theta} - \ol \Theta \mathring Y=0$,
  \item[(ii)] $ \mathring X^{\mu}\ol{\nabla_{\mu}s  }  +\ol s\mathring Y -  \ol{\nabla^{2}\Theta} \coneD_{2}\mathring Y
  -  \ol{\nabla^{A}\Theta} \coneD_{A}\mathring Y  -  \Upsilon_{N_1} \ol g^{12} \nabla_{1}\ol \Theta |_{N_1}=0$,
\\
 $ \mathring X^{\mu}\ol{\nabla_{\mu}s  }  +\ol s\mathring Y -  \ol{\nabla^{1}\Theta} \coneD_{1}\mathring Y
  -  \ol{\nabla^{A}\Theta} \coneD_{A}\mathring Y  -  \Upsilon_{N_2} \ol g^{12} \nabla_{2}\ol \Theta |_{N_2}=0$,
  \item[(iii)] $ \coneD_{(A}\mathring X_{B)}  - \mathring Y \ol g_{AB}=0$,
\\
$ \coneD_{2}\mathring X_{2}|_{N_1}  =  \coneD_{(2}\mathring X_{A)}|_{N_1}=0$,
\\
$ \coneD_{1}\mathring X_{1}|_{N_2}  =  \coneD_{(1}\mathring X_{A)}|_{N_2}=0$,
  \item[(iv)] $\coneD_{2}\coneD_{2}\mathring X_{1}  - \ol  R_{122}{}^{\kappa} \mathring X_{\kappa}  - 2\ol g_{12} \coneD_{2}\mathring Y|_{N_1} =0$,
\\
$\coneD_{1}\coneD_{1}\mathring X_{2}  - \ol  R_{211}{}^{\kappa} \mathring X_{\kappa}  - 2\ol g_{12} \coneD_{1}\mathring Y|_{N_2} =0$,
  \item[(v)] $\mathring X^{\kappa}\nabla_{\kappa}\ol L_{2i} +2 \ol L_{\kappa(2}\coneD_{i)}\mathring X^{\kappa}+   \coneD_{2}\coneD_{i}\mathring Y|_{N_1}=0$, $i=2,A$,
\\
$\mathring X^{\kappa}\nabla_{\kappa}\ol L_{1i} +2 \ol L_{\kappa(1}\coneD_{i)}\mathring X^{\kappa}+   \coneD_{1}\coneD_{i}\mathring Y|_{N_2}=0$, $i=1,A$,
  \item[(vi)] $\mathring X^{\kappa}\nabla_{\kappa}\ol L_{AB} +2 \ol L_{\kappa(A}\coneD_{B)}\mathring X^{\kappa}+   \coneD_{A}\coneD_{B}\mathring Y +\Upsilon_{N_a} \ol g^{12}\chi^{N_a}_{AB}|_{N_a}=0$, $a=1,2$,
\item[(vii)] $ \coneD_{(1}\mathring X_{2)}  - \mathring Y \ol g_{12}|_S=0$,
\item[(viii)] $2\coneD_{A}\coneD_{1} \mathring X_{2}  - 2 \ol R_{2 1 A}{}^{\kappa} \mathring X_{\kappa}  - 2\ol g_{12}\coneD_{A}\mathring Y |_S=0$,
 \end{enumerate}
where $\Upsilon_{N_1}$ is given by $\Upsilon_{N_1}|_S = \coneD_1\mathring Y$ and
\begin{eqnarray*}
 ( \partial_2 +\frac{\tau_{N_1} }{2}-\ol \Gamma^1_{12})\Upsilon_{N_1}
-\ol \Gamma^{2}_{12}\coneD_{2}\mathring Y -\ol \Gamma^{A}_{12}\coneD_{A}\mathring Y
 + \frac{1}{2}\ol g_{12}\big(\ol g^{22} \coneD_2\coneD_2 \mathring Y &&
\\
  + 2\ol g^{2A} \coneD_2\coneD_A \mathring Y+ \ol g^{AB} \coneD_A\coneD_B \mathring Y + \frac{1}{6}\mathring X^{\mu}\ol{\nabla_{\mu}R}+ \frac{1}{3} \ol R \mathring Y\big) &=&0
\;,
\end{eqnarray*}
similarly on $N_2$.

The condition (vi) is not needed on the closure of those sets on which the expansion $\tau$ is non-zero.
\end{theorem}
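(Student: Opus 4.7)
The plan is to deduce Theorem~\ref{KID_eqns_main_hzpe2} directly from Theorem~\ref{KID_eqns_main_hyper} by trading each non-intrinsic condition of the latter (those involving a transverse derivative of $X$ or $Y$) for an intrinsic one, in direct analogy with the passage from Theorem~\ref{KID_eqns_main_cone} to Theorem~\ref{KID_eqns_main_cone2} in the light-cone case. The only genuinely new feature is that two hypersurfaces must be handled simultaneously with consistent matching along $S$.

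First I would dispense with the $\ol A$-conditions. The requirement $\ol A_{\mu\nu}=0$ of Theorem~\ref{KID_eqns_main_hyper}(b) splits into an intrinsic part---$\ol A_{AB}=0$ together with the null--null and mixed-null components on each $N_a$, which is precisely condition~(iii)---and the non-intrinsic part $\ol A_{12}=0$. By the identity \eq{S110_A01} adapted to the present coordinates, the latter becomes, on the closure of $\{\tau_{N_a}\neq 0\}$, the transport ODE $\ol S_{221}|_{N_1}=0$ (and $\ol S_{112}|_{N_2}=0$) together with the boundary value $\ol A_{12}|_S=0$; these are conditions~(iv) and~(vii). The remaining non-intrinsic statement $\ol{\nabla_{[1}A_{2]A}}|_S=0$ is handled via the identity $S_{A12}|_S=2\nabla_{[1}A_{2]A}$ derived immediately before the theorem, yielding~(viii).

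Next I would process $\ol\psi$ and the $\ol B$-conditions. In $\psi$ the sole transverse derivative enters through $\nabla^\mu\Theta\,\nabla_\mu Y$; expanding in the inverse null metric isolates on $N_1$ the contribution $\ol g^{12}(\partial_1 Y)\nabla_1\ol\Theta$. Setting $\Upsilon_{N_1}:=\ol{\partial_1 Y}$ (transverse to $N_1$ but tangential to $N_2$), the wave equation~(ii) of Theorem~\ref{KID_eqns_main} restricted to $N_1$ reduces to the stated transport ODE for $\Upsilon_{N_1}$ once tangential second derivatives of $Y$ are rewritten as $\coneD\coneD\mathring Y$; the boundary value at $S$ must be $\Upsilon_{N_1}|_S=\coneD_1\mathring Y|_S$, which is intrinsic to $N_2$, and symmetrically on $N_2$. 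This gives condition~(ii), and rewriting $\ol B_{AB}=\ol{\mcL_X L_{AB}}+\ol{\nabla_A\nabla_B Y}$ in $\coneD$-form produces the extra $\Upsilon_{N_a}\ol g^{12}\chi^{N_a}_{AB}$ term from the Christoffel piece of $\nabla_A\nabla_B Y$, yielding~(vi); the components $\ol B_{ai}|_{N_a}$ with $i\neq a$ are intrinsic as they stand, yielding~(v).

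The converse direction, showing that a solution of (i)--(viii) extends to a vector field satisfying the unphysical Killing equations on $\mathrm{D}^+(N_1\cup N_2)$, reduces by Theorem~\ref{KID_eqns_main_hyper} to extending $(\mathring X,\mathring Y)$ to smooth spacetime fields solving the wave equations \eq{wave_X0}--\eq{wave_Y0}; this follows from standard existence theory for the characteristic Cauchy problem on two transversally intersecting null hypersurfaces, so the intricate ODE analysis of Section~\ref{extendability} is not needed here. The principal subtlety---and the main obstacle---is consistency of $\Upsilon_{N_a}|_S$: since $\Upsilon_{N_1}|_S=\coneD_1\mathring Y|_S$ and $\Upsilon_{N_2}|_S=\coneD_2\mathring Y|_S$ agree with the corresponding transverse derivatives of any common smooth extension, the two ODEs determine a single $\Upsilon$-field on $N_1\cup N_2$ matching the actual $\ol{\partial_a Y}$ of any spacetime extension that obeys~(ii) of Theorem~\ref{KID_eqns_main}.
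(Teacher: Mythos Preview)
Your proposal is correct and follows essentially the same route as the paper: reduce to Theorem~\ref{KID_eqns_main_hyper}, then replace each non-intrinsic condition by its intrinsic counterpart exactly as in the passage from Theorem~\ref{KID_eqns_main_cone} to Theorem~\ref{KID_eqns_main_cone2}---using \eq{S110_A01} for $\ol A_{12}=0$, the $S_{A12}|_S$ identity for $\ol{\nabla_{[1}A_{2]A}}|_S=0$, and the transport ODE coming from \eq{wave_Y0} to define $\Upsilon_{N_a}$---and close with Rendall's existence result for the characteristic problem on two transversally intersecting null hypersurfaces. Your remark that the extendability analysis of Section~\ref{extendability} is unnecessary here is also exactly the paper's position: there is no vertex, so smoothness of the data and Rendall's theorem suffice.
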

\begin{proof}
  Once  (i)-(viii) have been solved one uses the solutions $\mathring X$ and $\mathring Y$ as initial data for the
wave equations \eq{wave_X0} and \eq{wave_Y0}. A solution exists due to \cite{rendall}, and the rest follows from
the considerations above.
\qed
\end{proof}

\begin{remark}
{\rm
As in \cite{CP1} one could replace the condition  $ \ol g^{AB}\coneD_{(A}\mathring X_{B)}  - 2\mathring Y =0$ of (iii) by certain conditions on $S$ if one makes sure that (vi) holds regardless of the (non-)vanishing  of $\tau$.
}
\end{remark}

\begin{remark}
{\rm
Theorem~\ref{KID_eqns_main_hzpe2} can e.g.\ be applied to two null hypersurfaces  intersecting transversally  with one of them being part of $\scri^-$.
}
\end{remark}

\section{KID equations on the light-cone $\mathbf{C_{i^-}}$}
\label{sec_special_cone}

Let us  analyse now in detail the case where the initial surface is the light-cone $C_{i^-}$ with vertex at past timelike infinity $i^-$
in $3+1$-spacetime dimensions (note that this requires a vanishing cosmological constant $\lambda$).
 In particular that means %
\begin{equation}
\ol \Theta\,=\, 0
\;.
\end{equation}
That the corresponding initial value problem is well-posed for suitably prescribed data has been shown in \cite{CP2}.
Our aim is to apply Theorem~\ref{KID_eqns_main_cone2} and analyse the KID equations in this special situation.

\subsection{Gauge freedom and constraint equations}

\subsubsection{Gauge degrees of freedom}
To make computations as easy as possible it is useful to impose a convenient gauge condition. We will adopt the gauge scheme
described and justified  in \cite[Section~2.2 \& 4.1]{ttp}, where the reader is referred to for further details.
Let us start with a brief overview over the relevant gauge degrees of freedom.

The freedom to choose the conformal factor $\Theta$, regarded as an unknown in the conformal field equations \eq{conf1}-\eq{conf6},
is comprised in the freedom to prescribe the Ricci scalar $R$ and the function $\overline s$, where the latter one needs to be the restriction to $C_{i^-}$
of a smooth function, non-vanishing at $i^-$ (which ensures $\mathrm{d}\Theta|_{\scri^-}\ne 0$).

As above, we will choose adapted null coordinates $(x^0=u,x^1=r,x^A)$, $A=2,3$, on $C_{i^-}$. The freedom to choose coordinates off the cone
is reflected in the freedom to prescribe an arbitrary vector field $W^{\sigma}$ for  the \textit{$\hat g$-generalized wave-map gauge condition}
\begin{equation*}
 H^{\sigma}:=g^{\alpha\beta}(\Gamma^{\sigma}_{\alpha\beta} - \hat\Gamma^{\sigma}_{\alpha\beta}) - W^{\sigma} =0\;,
\end{equation*}
where $\hat g$ denotes some target metric.
The choice $W^{\sigma}=0$ is called \textit{wave-map gauge}.

This still leaves the freedom to parameterize the null geodesics generating $C_{i^-}$, due to which it is possible to additionally prescribe the function
\begin{equation*}
 \kappa:= \nu^0\partial_1\nu_0 - \frac{1}{2}\tau - \frac{1}{2}\nu_0(\ol g^{\mu\nu}\ol{\hat\Gamma}{}^0_{\mu\nu} + \ol W{}^0)
\;.
\end{equation*}
The choice $\kappa=0$ corresponds to an affine parameterization.
Moreover, when  $H^{\sigma}=0$ it holds that
\begin{equation*}
 \kappa = \ol \Gamma^1_{11}
\;.
\end{equation*}

\subsubsection{Constraint equations in the $(R=0, \overline s=-2, \kappa=0, \hat g=\eta)$-wave-map gauge}

Henceforth  we choose as in \cite{CP2,ttp}
\begin{equation}
 R=0\;, \quad \overline s=-2\;,\quad  W^{\sigma}=0\;, \quad \kappa=0\;, \quad \hat g=\eta
\;,
\end{equation}
where
\begin{equation*}
 \eta := -(\mathrm{d}u)^2 + 2\mathrm{d}u\mathrm{d}r + r^2 s_{AB}\mathrm{d}x^A\mathrm{d}x^B
\end{equation*}
denotes the Minkowski metric in adapted null coordinates.

Let us assume we have been given a smooth solution $(g,\Theta)$ of the conformal field equations \eq{conf1}-\eq{conf6} in the
 $(R=0, \overline s=-2, \kappa=0, \hat g=\eta)$-wave-map gauge.%
\footnote{In fact it is not necessary here to require the rescaled Weyl tensor to be regular at $i^-$.}
It is shown in \cite[Section~4]{ttp} that  then the following equations are valid on $C_{i^-}$,
\begin{eqnarray}
 &&\hspace{-3em}\overline g_{\mu\nu} = \eta_{\mu\nu} \;, \quad\overline L_{1\mu}=0\;, \quad \overline L_{AB}=\omega_{AB} \;, \quad
 \overline L_{0A} = \frac{1}{2}\tilde\nabla^B\lambda_{AB}
\;,
\label{gauge1}
\\
 &&\hspace{-3em}  \overline{\partial_0\Theta}=-2r\;, \quad  \overline{\partial_0 g_{1\mu}}=0
 \;,
\\
 &&\hspace{-3em} \tau=2/r \;, \quad  \xi_A:= -2\ol\Gamma^1_{1A}=0\;, \quad \zeta :=  2\ol g^{AB}\ol\Gamma^1_{AB} + \tau =-2/r\;,
\\
&&\hspace{-3em}  (\partial_1- r^{-1})\lambda_{AB} = - 2 \omega_{AB} \:, \quad \overline g^{AB} \lambda_{AB}= \overline g^{AB} \omega_{AB}=0
 \;,
\label{gauge4}
\end{eqnarray}
where  $\lambda_{AB}:= \overline{\partial_0 g_{AB}}=O(r^3)$.
The operator $\tilde \nabla$ denotes the Levi-Civita connection of $\tilde g :=\ol g_{AB}\mathrm{d}x^A\mathrm{d}x^B$.
The $s_{AB}$-trace-free tensor $\omega_{AB}=O(r^2)$
with $s=s_{AB}\mathrm{d}x^A\mathrm{d}x^B$ being the standard metric on $S^2$,
may be regarded as representing the free initial data in the corresponding characteristic initial value problem \cite{CP2,ttp}.

For convenience we give a list of the Christoffel symbols in adapted null coordinates on $C_{i^-}$,
which are easily obtained from \eq{gauge1}-\eq{gauge4} and the formulae in \cite[Appendix~A]{CCM2},
\begin{eqnarray*}
 &\ol \Gamma^{0}_{00}= \ol \Gamma^{\mu}_{01}= \ol \Gamma^{\mu}_{11}
= \ol \Gamma^{0}_{0A}= \ol \Gamma^{0}_{1A}= \ol \Gamma^{1}_{0A}= \ol \Gamma^{1}_{1A}
=0\;,&
  \\
&\ol \Gamma^{1}_{00} =\frac{1}{2}\ol{\partial_0 g_{00}}\;, \quad \ol \Gamma^{C}_{00} =\ol g^{CD}\ol{\partial_0 g_{0D}} \;, \quad \ol \Gamma^0_{AB} = -r^{-1}\ol g_{AB} \;, \quad  \ol \Gamma^C_{1A} = r^{-1}\delta_A{}^C
\;,&
\\
&\ol\Gamma^C_{0A} = \frac{1}{2}\lambda_A{}^C
\;, \quad   \ol \Gamma^1_{AB} = -r^{-1}\ol g_{AB} - \frac{1}{2}\lambda_{AB}  \;, \quad \ol\Gamma^C_{AB} = \tilde\Gamma^C_{AB} = S^C_{AB}
\;.&
\end{eqnarray*}


\subsection{Analysis of the KID equations}

\subsubsection{The conditions $\ol\phi=0$, $\ol{\psi}^{\mathrm{intr}}=0$, $\ol A_{ij}=0$ and $\ol S_{110}=0$}

With $\ol\Theta=0$ and $ \overline{\partial_0\Theta}=-2r$ it immediately follows that
\begin{eqnarray}
 \ol\phi=0 \quad \Longleftrightarrow \quad \mathring X^0=0
 \;,
\end{eqnarray}
i.e.\ any vector field satisfying the unphysical Killing equations necessarily needs to be tangent to $C_{i^-}$.

Taking further into account that $\ol s=-2$ and $\nu_0=1$ we obtain
(recall that  $\ol{\psi}^{\mathrm{intr}}$ has been defined in Theorem~\ref{KID_eqns_main_cone2})
\begin{eqnarray}
  \ol{ \psi}^{\mathrm{intr}}=0 \quad \Longleftrightarrow \quad (\partial_1-r^{-1})\mathring Y = 0   \quad \Longleftrightarrow \quad  \mathring Y=c(x^A)r
 \;,
\end{eqnarray}
for some angle-dependent function $c$.
The condition  $\ol A_{11}=0$ is then automatically fulfilled.
Furthermore, one readily checks that (we denote by $\mcD$ the Levi-Civita connection associated to the standard metric on $S^2$)
\begin{eqnarray}
\ol A_{1A} =0 & \Longleftrightarrow &  \partial_1 \mathring X^A =0\quad \Longleftrightarrow \quad \mathring  X^A= d^A(x^B)
 \;,
\\
 \ol g^{AB} \ol A_{AB} =0 & \Longleftrightarrow &  \mathring X^1 =  - \frac{1}{2}r\mcD_Ad^A +  cr^2
 \;,
\\
 \breve{ \ol A}_{AB} =0 & \Longleftrightarrow &  \text{$d^A$ is a conformal Killing field on $(S^2,s_{AB})$}
 \;.
\end{eqnarray}
Here and in what follows $\breve{.}$ denotes the $s_{AB}$- (equivalently the $\ol g_{AB}$-) trace-free part of the corresponding rank-2 tensor field.

Since $\tau = 2/r>0$ the condition $\ol S_{110}=0$ holds automatically for all $r>0$.

\subsubsection{The conditions $\ol B_{1i}=0$ and $\ol{ B}^{\mathrm{intr}}_{AB}=0$}

First we solve \eq{thm_rho_eqn} for $\Upsilon$, which in our gauge becomes 
\begin{eqnarray}
 ( \partial_1 + r^{-1})\Upsilon
  + \frac{1}{2}r^{-2}\Delta_s \mathring Y  +r^{-1}\partial_1 \mathring Y   =0
\;,
\label{special_Upsilon}
\end{eqnarray}
where we have set $\Delta_s := s^{AB}\mcD_A\mcD_B$.
With $\mathring Y = cr$ and $\Upsilon=O(1)$ we obtain as the unique solution of  \eq{special_Upsilon}
\begin{equation}
 \Upsilon =  - \frac{1}{2}(\Delta_s +2)c
 \;.
\end{equation}
For $\ol B_{1i}$ we find
\begin{eqnarray*}
 \ol B_{11} &\equiv &  \mathring X^{\kappa}\nabla_{\kappa}\ol L_{11} +2 \ol L_{\kappa(1}\coneD_{1)}\mathring X^{\kappa}+   \coneD_{1}\coneD_{1}\mathring Y
\\
 &=& 0
\;,
\\
\ol B_{1A} &\equiv &  \mathring X^{\kappa}\nabla_{\kappa}\ol L_{1A} +2 \ol L_{\kappa(1}\coneD_{A)}\mathring X^{\kappa}+   \coneD_{1}\coneD_{A}\mathring Y
\\
&=& \omega_{ AB}\partial_{1}\mathring X^{B} +  \partial_{A}  (\partial_1 -r^{-1})\mathring Y
\\
&=& 0
\;.
\end{eqnarray*}
without any further restrictions on $\mathring X$, $\mathring Y$ or the initial data $\omega_{AB}$.
It remains to determine  $\ol{B}^{\mathrm{intr}}_{AB}$,
\begin{eqnarray*}
 \ol{ B}^{\mathrm{intr}}_{AB} &=&   \mathring X^{1}\nabla_{1}\ol L_{AB}  +  \mathring X^{C}\nabla_{C}\ol L_{AB}
+2 \ol L_{0(A}\coneD_{B)}\mathring X^{0} +2 \ol L_{C(A}\coneD_{B)}\mathring X^{C}
\\
&&
+   \coneD_{A}\coneD_{B}\mathring Y + r^{-1}  \ol g_{AB} \Upsilon
\\
&=&    \mathring X^{1}\partial_1\omega_{AB}  +  \mathring X^{C}\tilde \nabla_{C}\omega_{AB}  +2 \omega_{C(A}\tilde\nabla_{B)}\mathring X^{C}
\\
&&
+  \tilde\nabla_A\tilde\nabla_B\mathring Y   + \frac{1}{2}\lambda_{AB}\partial_1\mathring Y     + r^{-1} \ol g_{AB}(\partial_1\mathring Y + \Upsilon)
\;.
\end{eqnarray*}
We first compute its trace,
\begin{eqnarray*}
 \ol g^{AB}\ol{ B}^{\mathrm{intr}}_{AB}
&=&  2 \omega^{AB}(\tilde\nabla_{A}\mathring X_B)\breve{}  + \Delta_{\tilde g}\mathring Y     + 2r^{-1} \partial_1\mathring Y + 2r^{-1} \Upsilon
\\
 &=& 0
\;,
\end{eqnarray*}
again without any further restrictions.
For its traceless part we find
\begin{eqnarray*}
 \breve{\ol{ B}}{}^{\mathrm{intr}}_{AB} &=& \mathring X^{1}\partial_1\omega_{AB}  +  \mathring X^{C}\tilde \nabla_{C}\omega_{AB}
  +2 \omega_{C(A}\tilde\nabla_{B)}\mathring X^{C}
 - \ol g_{AB} \omega^{CD}(\tilde\nabla_{C}\mathring X_{D})\breve{}
\\
&&
+ ( \tilde\nabla_A\tilde\nabla_B\mathring Y )\breve{}  + \frac{1}{2}\lambda_{AB}\partial_1\mathring Y
\\
&=&   \mcL_d\omega_{AB} - \frac{1}{2}r\partial_1\omega_{AB}\mcD_Cd^C   +  cr^2\partial_1\omega_{AB}
  + \frac{1}{2}c\lambda_{AB} + r( \mcD_A\mcD_Bc )\breve{}
\;.
\end{eqnarray*}
Recall that regularity of the metric requires $\omega_{AB}=O(r^2)$ and $\lambda_{AB}=O(r^3)$, in particular $\mcL_d\omega_{AB}=O(r^2)$.
Hence $\breve{\ol{B}}{}^{\mathrm{intr}}_{AB}=0$ if and only if
\begin{eqnarray}
 &\text{$\mathring\nabla_A c$ is a conformal Killing field on $(S^2,s_{AB})$},&
\\
 & \mcL_d\omega_{AB} - \frac{1}{2}r\partial_1\omega_{AB}\mathring \nabla_Cd^C   +  cr^2\partial_1\omega_{AB}
  + \frac{1}{2}c\lambda_{AB}  =0
\;.&
\end{eqnarray}

\subsubsection{Summary}

By way of summary the conditions (i)-(vi) in Theorem~\ref{KID_eqns_main_cone2} hold if and only if
\begin{eqnarray}
 \mathring X^0 &=& 0\;,
\label{cond_X0}
\\
  \mathring X^A &=& d^A\;,
\label{cond_XA}
\\
 \mathring X^1 &=& -\frac{1}{2} r\mcD_A d^A + cr^2\;,
\label{cond_X1}
\\
 \mathring Y &=& cr \;,
\label{cond_Y}
\end{eqnarray}
such that
\begin{eqnarray}
 & \text{$\mcD_A c$ and $d_A$ are conformal Killing fields on $(S^2,s_{AB})$},&
\label{cond_conf}
\\
  &\mcL_d\omega_{AB} - \frac{1}{2}r\mcD_Cd^C\partial_1\omega_{AB}   +  cr^2\partial_1\omega_{AB}
  + \frac{1}{2}c\lambda_{AB}  =0
\;.&
\label{reduced_KID}
\end{eqnarray}
%
%

In Section~\ref{extendability} we have shown that solutions of the KID equations are restrictions to the light-cone of smooth spacetime fields.
On $C_{i^-}$ this turns out to be a trivial issue anyway:
The candidate fields satisfying \eq{cond_X0}-\eq{cond_conf} are explicitly known%
\footnote{The function $c$ satisfies the equation $\mcD_A(\Delta_s+2) c=0$ and can thus be written as  linear combination of $\ell=0,1$ spherical harmonics.
Conformal Killing vector fields on the round 2-sphere are discussed in  Appendix~\ref{app_conformal}.
}
and coincide
\textit{independently of the choice of initial data} $\omega_{AB}$, with the restriction to $C_{i^-}$ of the Minkowskian Killing vector fields.

While in the Minkowski case $\omega_{AB}=0$ every candidate field does extend to a Killing vector field, equation \eq{reduced_KID} provides an obstruction equation for non-flat data.
We call \eq{reduced_KID} the \textit{reduced KID equations}.

As a corollary of Theorem~\ref{KID_eqns_main_cone2} we obtain:
\begin{theorem}
\label{KID_eqns_main_cone_infinity}
Assume that  we have been given a $3+1$-dimensional ``unphysical'' spacetime $(\mcM, g, \Theta)$
which contains a regular $C_{i^-}$-cone (the cosmological constant $\lambda$ thus needs to vanish)
and where ($g, \Theta$) is a smooth solution of the conformal field equations  in the $(R=0, \overline s=-2, \kappa=0, \hat g=\eta)$-wave-map gauge.
Then there exists a smooth vector field $X$
satisfying the unphysical Killing equations \eq{2_conditions} on $\mathrm{D}^+(C_{i^-})$
(i.e.\  representing a Killing field of the physical spacetime)
if and only if
there exist a function $c$ and a vector field $d^A$ on $S^2$ with $\mcD_A c$ and $d_A$
conformal Killing fields on $(S^2,s_{AB})$ such that the reduced KID equations
\begin{equation}
   \mcL_d\omega_{AB} - \frac{1}{2}r\partial_1\omega_{AB}\mcD_Cd^C   +  cr^2\partial_1\omega_{AB}
  + \frac{1}{2}c\lambda_{AB}  =0
\label{thm_reduced_KID}
\end{equation}
 are satisfied on $C_{i^-}$ (recall that $\lambda_{AB}$ is the unique solution of $(\partial_1-r^{-1})\lambda_{AB}=-2\omega_{AB}$
with $\lambda_{AB}=O(r^3)$).

The Killing field satisfies
\begin{eqnarray}
 \ol X^0=0\;, \quad \ol X^A=d^A\;, \quad \ol X^1=-\frac{1}{2}r\mcD_Ad^A +cr^2\;,\quad \ol{\nabla_{\mu}X^{\mu}} = 4cr\;.
\end{eqnarray}
\end{theorem}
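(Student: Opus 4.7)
The plan is to deduce Theorem~\ref{KID_eqns_main_cone_infinity} as a direct corollary of Theorem~\ref{KID_eqns_main_cone2}, combined with Proposition~\ref{KID_eqns_main_cone3} which removes the smooth-extendability hypothesis (f). Thus the task reduces to translating the KID equations (a)--(e) of Theorem~\ref{KID_eqns_main_cone2} into explicit constraints on $C_{i^-}$, systematically exploiting the gauge identities \eq{gauge1}--\eq{gauge4} and the list of Christoffel symbols displayed just below them.

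First I would treat the lower-order conditions. Since $\ol\Theta=0$ identically on $C_{i^-}$ and $\ol{\partial_0\Theta}=-2r$, condition (iii) of Theorem~\ref{KID_eqns_main_cone} collapses to $\mathring X^0=0$. The intrinsic $\psi$-condition (b) then becomes the radial ODE $(\partial_1-r^{-1})\mathring Y=0$, whose only regular solution is $\mathring Y=c(x^A)r$. Inserting $\mathring X^0=0$ into the components $\ol A_{ij}=0$ and using $\mathring X_\mu=g_{\mu\nu}\mathring X^\nu$ shows: the $(11)$-component is trivial since $\mathring X_1=\nu_0\mathring X^0=0$; the $(1A)$-component forces $\partial_1\mathring X^A=0$, hence $\mathring X^A=d^A(x^B)$; the $\ol g^{AB}$-trace yields $\mathring X^1=-\tfrac{1}{2}r\mcD_A d^A+cr^2$; and the trace-free part is the conformal Killing equation for $d^A$ on $(S^2,s_{AB})$. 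Condition (c) is automatic since $\tau=2/r>0$ throughout $C_{i^-}\setminus\{i^-\}$.

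Turning to (d)--(e), I would first solve \eq{thm_rho_eqn} for $\Upsilon$, which in this gauge reduces to $(\partial_1+r^{-1})\Upsilon+\tfrac{1}{2}r^{-2}\Delta_s\mathring Y+r^{-1}\partial_1\mathring Y=0$ with the unique bounded solution $\Upsilon=-\tfrac{1}{2}(\Delta_s+2)c$. A direct calculation, using $\ol L_{1\mu}=0$, $\ol L_{AB}=\omega_{AB}$ and the Christoffel symbols, shows that $\ol B_{11}$, $\ol B_{1A}$ and the $\ol g^{AB}$-trace of $\ol B^{\mathrm{intr}}_{AB}$ all vanish identically modulo the conditions already imposed. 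The only non-trivial residue is $\breve{\ol B}{}^{\mathrm{intr}}_{AB}=0$, which after simplification takes the form
\begin{equation*}
\mcL_d\omega_{AB}-\tfrac{1}{2}r\mcD_C d^C\partial_1\omega_{AB}+cr^2\partial_1\omega_{AB}+\tfrac{1}{2}c\lambda_{AB}+r(\mcD_A\mcD_B c)\breve{}=0\,.
\end{equation*}
Since $\omega_{AB}=O(r^2)$ and $\lambda_{AB}=O(r^3)$, the first four terms are $O(r^2)$ near the vertex while the last is $O(r)$; matching the leading order at $r\to 0$ therefore forces $(\mcD_A\mcD_B c)\breve{}=0$, equivalently $\mcD_A c$ being a conformal Killing field on $(S^2,s_{AB})$, and what remains is precisely \eq{thm_reduced_KID}.

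The main obstacle I anticipate is the algebraic book-keeping in expanding $\breve{\ol B}{}^{\mathrm{intr}}_{AB}$: numerous potential terms appear (involving $\mathring X^1\partial_1\omega_{AB}$, $\mathring X^C\tilde\nabla_C\omega_{AB}$, $2\omega_{C(A}\tilde\nabla_{B)}\mathring X^C$, the gauge-dependent contribution from $\ol L_{0A}=\tfrac{1}{2}\tilde\nabla^B\lambda_{AB}$, and the $r^{-1}\ol g_{AB}(\partial_1\mathring Y+\Upsilon)$ piece), which must be carefully combined using the structural equation $(\partial_1-r^{-1})\lambda_{AB}=-2\omega_{AB}$ and the identity $\mcL_d s_{AB}=(\mcD_C d^C)s_{AB}$ for a CKV $d^A$ to arrive at the clean final form. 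Once this calculation is carried through, the theorem follows immediately; the explicit expression $\ol{\nabla_\mu X^\mu}=4\mathring Y=4cr$ is just the contraction of $\ol A_{ij}=0$ with $\ol g^{\mu\nu}$ together with $\mathring X^0=0$, so no further work is needed.
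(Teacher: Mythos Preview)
Your proposal is correct and mirrors the paper's own argument essentially step by step: the paper likewise specializes Theorem~\ref{KID_eqns_main_cone2} to $C_{i^-}$ in the stated gauge, derives $\mathring X^0=0$, $\mathring Y=cr$, $\mathring X^A=d^A$, $\mathring X^1=-\tfrac12 r\mcD_A d^A+cr^2$ and the conformal Killing condition on $d^A$ from (a)--(c), solves for $\Upsilon=-\tfrac12(\Delta_s+2)c$, checks that $\ol B_{1i}$ and the trace of $\ol B^{\mathrm{intr}}_{AB}$ vanish automatically, and then splits the trace-free part by powers of $r$ to obtain the conformal Killing condition on $\mcD_A c$ together with~\eq{thm_reduced_KID}. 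The only minor difference is that the paper also observes that condition~(f) is \emph{trivially} satisfied on $C_{i^-}$ because the candidate fields coincide with the restrictions of Minkowskian Killing fields, so invoking Proposition~\ref{KID_eqns_main_cone3} is not strictly necessary here (though it does the job).
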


\begin{remark}
{\rm
The reduced KID equations \eq{thm_reduced_KID} can be replaced by one of their equivalents (i)-(iii) in Lemma~\ref{lemma_red_KIDs}.
}
\end{remark}

\subsection{Analysis of the reduced KID equations}

\subsubsection{Equivalent representations of the reduced KID equations}

We provide some alternative formulations of the reduced KID equations.
\begin{lemma}
\label{lemma_red_KIDs}
 The reduced KID equations  \eq{thm_reduced_KID} are equivalent to each of the following equations:
\begin{enumerate}
\item[(i)] $ \mcL_d \lambda_{AB} - ( \frac{1}{2}r\mcD_C d^C - cr^2)\partial_1\lambda_{AB}
+(\frac{1}{2} \mcD_C d^C  - 2cr)\lambda_{AB}=0$,
\item[(ii)]  $(\partial_1-r^{-1})\mcL_d\omega_{AB} - \frac{1}{2}r\partial^2_{11}\omega_{AB}\mcD_Cd^C
 + cr^2\partial_1(\partial_1 + r^{-1})\omega_{AB}=0$,
\item[(iii)] $2 \mcL_d \ol L_{0A}
  + (1  -r\partial_1)\ol L_{0A} \mcD_{B}d^B
 +r \omega_{A}{}^C\mcD_C \mcD_B d^B
+  2cr^2 \partial_1\ol L_{0A}
-(2\omega_{AB}+r^{-1}\lambda_{AB}) \mcD^B c = 0$ (recall that $\ol L_{0A} = \frac{1}{2}\tilde\nabla_B\lambda_A{}^B$).
\end{enumerate}
\end{lemma}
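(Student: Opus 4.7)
The plan is to verify the three equivalences by direct but systematic manipulation, relying on four core facts: the ODE $(\partial_1-r^{-1})\lambda_{AB}=-2\omega_{AB}$ together with the asymptotics $\omega_{AB}=O(r^2)$, $\lambda_{AB}=O(r^3)$; the identity $\ol L_{0A}=\frac{1}{2}\tilde\nabla^B\lambda_{AB}=\frac{1}{2r^2}\mcD^B\lambda_{AB}$ (which uses that $\tilde g_{AB}=r^2 s_{AB}$ and $\tilde\Gamma^C_{AB}=S^C_{AB}$, so that $\tilde\nabla$ and $\mcD$ agree on $S^2$-tensors); the conformal-Killing identities $\mcL_d s_{AB}=(\mcD_Cd^C)s_{AB}$ for $d$ and $\mcD_A\mcD_Bc=\tfrac{1}{2}(\Delta_s c)s_{AB}$ for $c$; and the fact that $\partial_1$ commutes with $\mcL_d$ and $\mcD$ because $d$ and $c$ are $r$-independent. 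Throughout, let $R_{AB}$, $\Psi_{AB}^{(i)}$, $\Psi_{AB}^{(ii)}$, $\Psi_{AB}^{(iii)}$ denote the left-hand sides of \eq{thm_reduced_KID} and of (i)--(iii), respectively.

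For (ii), I would apply the operator $(\partial_1-r^{-1})$ to $R_{AB}$. The crucial simplification is that the $\lambda$-term generates $\tfrac{1}{2}c(\partial_1-r^{-1})\lambda_{AB}=-c\omega_{AB}$, while the $r$ and $r^2$ factors shuffle so that
\[
(\partial_1-r^{-1})R_{AB}\,=\,\Psi_{AB}^{(ii)}\,.
\]
Hence $R_{AB}=0 \Rightarrow \Psi_{AB}^{(ii)}=0$. For the converse one notes that the homogeneous equation $(\partial_1-r^{-1})\Phi_{AB}=0$ has general solution $\Phi_{AB}=r\,f_{AB}(x^C)$; since $R_{AB}=O(r^2)$ by the regularity of $\omega_{AB},\lambda_{AB}$, the function $f_{AB}$ must vanish, giving $R_{AB}=0$.

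For (i), the idea is dual: substitute $\omega_{AB}=-\tfrac{1}{2}(\partial_1-r^{-1})\lambda_{AB}$ into $R_{AB}$, commute $\mcL_d$ and $\partial_1$ through, and observe that all terms collapse into $-\tfrac{1}{2}(\partial_1-r^{-1})\Psi_{AB}^{(i)}$. Thus \eq{thm_reduced_KID} is the same equation as $(\partial_1-r^{-1})\Psi_{AB}^{(i)}=0$. Again the general solution of this homogeneous ODE is $r\,f_{AB}(x^C)$, but direct inspection of $\Psi_{AB}^{(i)}$ using $\lambda_{AB}=O(r^3)$ gives $\Psi_{AB}^{(i)}=O(r^3)$, hence $f_{AB}=0$ and $\Psi_{AB}^{(i)}=0$. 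This simultaneously proves both implications in (i).

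For (iii), the plan is to apply the sphere-divergence $\mcD^B$ to the equivalent form (i) and use $\mcD^B\lambda_{AB}=2r^2\ol L_{0A}$. The terms without $\mcL_d$ are immediate (the $\partial_1$-terms produce $\partial_1(r^2\ol L_{0A})=2r\ol L_{0A}+r^2\partial_1\ol L_{0A}$, which is where the combination $(1-r\partial_1)\ol L_{0A}$ will emerge). The real work is computing the commutator $[\mcD^B,\mcL_d]\lambda_{AB}$: using $\mcL_d\Gamma^C_{AB}=\mcD_A\mcD_Bd^C+R^C{}_{DAB}d^D$ on $(S^2,s)$, together with $R^C{}_{DAB}=\delta^C{}_As_{DB}-\delta^C{}_Bs_{DA}$ and the conformal Killing identity for $d$, one isolates the single non-trivial contribution $\lambda_A{}^C\mcD_C\mcD_Bd^B$, giving the $r\omega_A{}^C\mcD_C\mcD_Bd^B$ term in $\Psi_{AB}^{(iii)}$ after reintroducing $\omega_{AB}$ via the ODE relation. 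The $c$-contributions in (i), once differentiated, produce $\mcD^Bc\cdot\lambda_{AB}$ and $c\,\mcD^B\lambda_{AB}$ pieces that reassemble into $(2\omega_{AB}+r^{-1}\lambda_{AB})\mcD^Bc$ after using $\mcD_A\mcD_Bc=\tfrac{1}{2}(\Delta_sc)s_{AB}$ and integrating by parts on the sphere. The reverse implication, that (iii) implies \eq{thm_reduced_KID}, is the main obstacle: one needs to argue that the $\mcD^B$ operation loses no information on the relevant tensor sector. This is handled by remembering that $\lambda_{AB}$ is $\ol g_{AB}$-traceless and that, for each fixed $r$, a traceless symmetric tensor on $S^2$ is reconstructible (up to a transverse-traceless obstruction that vanishes on the 2-sphere since TT-tensors on $S^2$ vanish) from its sphere-divergence; together with the asymptotics at $r=0$ this gives the claimed equivalence.
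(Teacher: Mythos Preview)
Your proposal is correct and follows essentially the same route as the paper: for (i) and (ii) you apply $(\partial_1-r^{-1})$ in the appropriate direction and invoke the regularity/asymptotics to rule out the homogeneous solution $r\,f_{AB}(x^C)$, and for (iii) you take the $S^2$-divergence of (i) and use that a trace-free symmetric 2-tensor on the round sphere vanishes iff its divergence does. The intermediate details you sketch for (iii) (the commutator term, the $c$-contributions) are a bit loosely worded—e.g.\ $\lambda_A{}^C\mcD_C\mcD_Bd^B$ is not literally $r\,\omega_A{}^C\mcD_C\mcD_Bd^B$, and ``integrating by parts on the sphere'' is not what is happening in a pointwise identity—but the strategy and the key structural facts you invoke are exactly those the paper uses.
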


\begin{proof}
(i) Applying $(\partial_1-r^{-1})$ to equation (i) yields \eq{thm_reduced_KID}, equivalence follows from regularity.
\\
(ii) Applying $(\partial_1-r^{-1})$ to \eq{thm_reduced_KID} yields equation (ii), equivalence follows from regularity.
\\
(iii) We use the fact that on $(S^2, s_{AB}$) the equations $w_{AB}=0$ and $\mcD^Bw_{AB}=0$ with $w_{AB}$ trace-free,
are equivalent: Taking the divergence of (i) and invoking  the conformal Killing equation for $d^A$
then completes the equivalence proof.
\qed
\end{proof}

Both $\omega_{AB}$ or $\lambda_{AB}$ may be regarded as the freely prescribable initial data.
So (i) and (ii) in Lemma~\ref{lemma_red_KIDs} provide formulations of the reduced KID equations which involve exclusively explicitly known quantities
for all admissible initial data.
In the case of an ordinary cone, treated in \cite{CP1}, this was not possible: For generic KIDs there,
neither the candidate fields
nor all the relevant metric components can be computed analytically.

\subsubsection{Some special cases}

We finish with a brief discussion of some special cases:
There exists a vector field $X$ satisfying the unphysical Killing equations \eq{2_conditions} on $\mathrm{D}^+(C_{i^-})$
with
\begin{enumerate}
\item $\ol{\nabla_{\mu}X^{\mu}}=0 
\enspace\Longleftrightarrow \enspace \exists$ a conformal Killing vector field $d^A$ on  $(S^2,s_{AB})$  with
 $ \mcL_d\omega_{AB} = \frac{1}{2}r\partial_1\omega_{AB}\mcD_Cd^C$,
\item $\ol X^1=0 \enspace\Longleftrightarrow \enspace\exists$ a  Killing vector field $d^A$ on  $(S^2,s_{AB})$  with
 $ \mcL_d\omega_{AB} = 0$,
\item $\ol X^A =0 \enspace\Longleftrightarrow \enspace
 \partial_1(\partial_1 + r^{-1})\omega_{AB}=0 \quad \overset{\omega_{AB}=O(r^2)}{\Longleftrightarrow} \quad  \omega_{AB}=0$
\\
($\omega_{AB}\equiv \breve{\ol L}_{AB}=O(r^2)$ is a necessary condition on the Schouten tensor to be regular at $i^-$).
\end{enumerate}

The third case shows that the property $\ol X^A=0$ is compatible only with the Minkowski case (supposing that $i^-$ is a regular point). In the non-flat case
any non-trivial vector field satisfying the unphysical Killing equations has a non-trivial component $\ol X^A = d^A \not\equiv 0$.
Since
\begin{equation*}
 \ol g_{\mu\nu} \ol X^{\mu}\ol X^{\nu} = \ol g_{AB} \ol X^{A}\ol X^{B} = r^2 s_{AB} d^A d^B
 \;,
\end{equation*}
we see that there are no non-trivial vector fields satisfying the unphysical Killing equations which are null on $C_{i^-}$.
To put it differently, possibly apart from certain directions determined by the zeros of $d^A$, any isometry of a non-flat, asymptotically flat vacuum spacetime is necessarily spacelike sufficiently close to $\scri^-$.
This leads to the following version of a classical result of Lichnerowicz~\cite{lich}:

\begin{theorem}
Minkowski spacetime is the only
stationary vacuum spacetime which admits a regular $C_{i^-}$-cone.
\end{theorem}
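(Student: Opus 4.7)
The plan is to combine Theorem~\ref{KID_eqns_main_cone_infinity} with the causal character of the stationary Killing vector to force the free initial data $\omega_{AB}$ to vanish identically. Let $(\tmcM,\tilde g)$ be a stationary vacuum spacetime admitting a regular $C_{i^-}$-cone (so $\lambda=0$), and let $\tilde X$ be the stationary Killing vector, timelike in a neighbourhood of $i^-$. Its push-forward $X=\phi_*\tilde X$ satisfies the unphysical Killing equations on $\mathrm{D}^+(C_{i^-})$ and extends smoothly across $C_{i^-}$ (by the symmetric hyperbolic character of the conformal Killing equation already invoked earlier in the paper). Theorem~\ref{KID_eqns_main_cone_infinity} then yields on $C_{i^-}$
\[
 \ol X^0=0,\qquad \ol X^A = d^A,\qquad \ol X^1 = -\tfrac{1}{2}r\mcD_A d^A + cr^2,
\]
with $d^A$ and $\mcD_A c$ conformal Killing vectors on $(S^2,s_{AB})$.

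Next I would exploit causality. Since $g=\Theta^2\tilde g$ and $\tilde X$ is timelike near $i^-$ in the physical spacetime, $g(X,X)=\Theta^2\tilde g(\tilde X,\tilde X)\le 0$ on $\mathrm{D}^+(C_{i^-})\cap\phi(\tmcM)$. By continuity of $g(X,X)$ up to $C_{i^-}$ this inequality extends to the cone, while a direct computation (already performed in the remark preceding the theorem) gives
\[
 \ol g_{\mu\nu}\ol X^\mu \ol X^\nu = r^2\, s_{AB}d^A d^B \ge 0.
\]
Positive-definiteness of $s_{AB}$ then forces $d^A\equiv 0$ on $S^2$.

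With $d\equiv 0$, non-triviality of $\tilde X$ requires $c\not\equiv 0$, and the equivalent form (ii) of the reduced KID equations in Lemma~\ref{lemma_red_KIDs} collapses to $c\, r^2 \partial_1(\partial_1 + r^{-1})\omega_{AB} = 0$. Because $\mcD_A c$ is a conformal Killing vector on $(S^2,s_{AB})$, $c$ lies in the span of $\ell=0,1$ spherical harmonics and hence vanishes on at most a great circle (or a single point). On the dense open set $\{c\ne 0\}$ what remains is an Euler-type ODE in $r$, whose only $O(r^2)$ solution is the trivial one; continuity in the angular variables then yields $\omega_{AB}\equiv 0$ throughout $C_{i^-}$.

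Since $\omega_{AB}\equiv 0$ are precisely the Minkowskian characteristic data in the adopted gauge, well-posedness of the $C_{i^-}$ initial value problem \cite{CP2} identifies $\mathrm{D}^+(C_{i^-})$ with the corresponding region of Minkowski spacetime. The only real technical point to watch is the smooth extendability of $X$ to $C_{i^-}$ and the continuity of $g(X,X)$ up to the cone; both follow from earlier results and require no fresh work.
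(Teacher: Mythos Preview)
Your argument is correct and follows the same route as the paper's: the causal computation $\ol g_{\mu\nu}\ol X^{\mu}\ol X^{\nu}=r^2 s_{AB}d^Ad^B\ge 0$ together with timelikeness of the stationary Killing field forces $d^A\equiv 0$, after which the reduced KID equations with $c\not\equiv 0$ give $\omega_{AB}\equiv 0$ (this is precisely the ``third special case'' the paper invokes just before stating the theorem). One cosmetic point: the zero set of a nontrivial $c$ in the span of $\ell=0,1$ harmonics is a (not necessarily great) circle or a point, but all you need is that it has empty interior, which still holds.
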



\subsubsection{Structure of the solution space}

Let $X$ and $\hat X$ be two distinct non-trivial solutions of the unphysical Killing equations \eq{2_conditions}.
Since solutions of these equations form a Lie algebra,
$\hat{\hat X}:= [X,\hat X]$ is another,  possibly trivial, solution.
We have
\begin{eqnarray*}
\ol{ \hat{\hat X}}{}^0\,=\, \ol{[X,\hat X]}{}^{0} &=& 0
\;,
\\
  \ol{\hat{\hat X}}{}^A\,=\, \ol{[X,\hat X]}{}^{A} &=& [d,\hat d]^A
\;,
\\
 \ol{\hat{\hat X}}{}^1\,=\,  \ol{[X,\hat X]}{}^{1} &=&  -\frac{1}{2} r\mcD_B[d,\hat d]^B
+r^2(d^{B}\mcD_{B} \hat c -\hat d^{B}\mcD_{B} c +\frac{1}{2} c\mcD_B \hat d^B   - \frac{1}{2}  \hat c\mcD_B d^B )
\;.
\end{eqnarray*}
Hence, by their derivation, the reduced KID equations are fulfilled  with
\begin{eqnarray*}
 \hat{\hat d}^A &=& [d,\hat d]^A
\;,
\\
\hat{\hat c} &=& d^{B}\mcD_{B} \hat c -\hat d^{B}\mcD_{B} c +\frac{1}{2} c\mcD_B \hat d^B   - \frac{1}{2}  \hat c\mcD_B d^B
\;,
\end{eqnarray*}
and $\hat{\hat d}^A$ and $\mcD_A \hat{\hat c}$ are conformal Killing fields on the standard 2-sphere.
Indeed, via the relation $ \mcL_{[d,\hat d]}\lambda_{AB} =[ \mcL_d,\mcL_{\hat d}]\lambda_{AB}$, this can be straightforwardly checked.
We refer the reader to Appendix~\ref{app_conformal} where the conformal Killing fields on the standard 2-sphere are explicitly given.

Let us consider for the moment flat initial data $\lambda_{AB}=0$ which generate Minkowski spacetime.
Then one has 10 independent isometries:
\begin{itemize}
\item The four \textit{translations} are generated by the tuples $(c,d^A=0)$ with $c$ being a spherical harmonic function of degree $\ell=0$ or 1.
\item The three \textit{rotations} are generated by the tuples $(c=0,d^A)$ with $d^A$ being a Killing field on $(S^2,s_{AB}\mathrm{d}x^A\mathrm{d}x^B)$.
\item The three \textit{boosts} are generated by the tuples $(c=0,d^A=\mcD^A f)$ with $f$ being  a spherical harmonic function of degree $\ell=1$.
\end{itemize}

We have already seen above that translations (in the above sense) cannot exists in the non-flat case $\lambda_{AB}\ne 0$ if the Schouten tensor
is assumed to be regular at $i^-$.

\begin{proposition}
 Minkowski spacetime is the only spacetime with a regular $C_{i^-}$-cone which admits translational Killing vector fields.
\end{proposition}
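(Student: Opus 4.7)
The plan is to observe that the proposition is in essence a restatement of the third ``special case'' analysed immediately before it: a nontrivial solution of the unphysical Killing equations on $\mathrm{D}^+(C_{i^-})$ is \emph{translational} precisely when its angular component on the initial surface vanishes, i.e.\ $d^A \equiv \ol X^A = 0$. The remaining component of the candidate field, the function $c$, is then automatically a nonzero linear combination of spherical harmonics of degrees $\ell = 0$ and $\ell = 1$, a consequence of the constraint $\mcD_A(\Delta_s + 2) c = 0$ recorded earlier.

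The first step is to insert $d^A = 0$ into the reduced KID equation. It is convenient to use form (ii) of Lemma~\ref{lemma_red_KIDs}, which involves only the free datum $\omega_{AB}$, so that with $d^A = 0$ the equation collapses to
\begin{equation*}
 c\, r^2 \partial_1 (\partial_1 + r^{-1}) \omega_{AB} = 0.
\end{equation*}
Since $c$ is a nonzero spherical harmonic of degree $\leq 1$, its zero set in $S^2$ is either empty or a great circle, hence of measure zero; by continuity of $\omega_{AB}$ this promotes the identity to
\begin{equation*}
 \partial_1 (\partial_1 + r^{-1}) \omega_{AB} = 0
\end{equation*}
throughout $C_{i^-}$.

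Setting $u_{AB} := (\partial_1 + r^{-1})\omega_{AB}$ one obtains $\partial_1 u_{AB} = 0$, so $u_{AB} = f_{AB}(x^C)$; integrating with integrating factor $r$ yields
\begin{equation*}
 r\, \omega_{AB} \,=\, \tfrac{1}{2} r^2 f_{AB}(x^C) + g_{AB}(x^C)
\end{equation*}
for some $r$-independent tensors $f_{AB}, g_{AB}$ on $S^2$. The regularity requirement $\omega_{AB} = O(r^2)$ near the vertex (equivalent to smoothness of the Schouten tensor at $i^-$, as recalled in the excerpt) then forces $f_{AB} = 0 = g_{AB}$, and hence $\omega_{AB} \equiv 0$. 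Invoking the well-posedness of the characteristic initial value problem at past timelike infinity established in~\cite{CP2}, the vanishing of the free data $\omega_{AB}$ uniquely determines the development in the chosen gauge as Minkowski spacetime.

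The only substantive ingredient is the short ODE analysis in $r$, and the expected subtlety is the continuity promotion in the first step; this is immediate once one notes that $c$, being a spherical harmonic of degree $\leq 1$, is real-analytic and hence cannot vanish on any open subset of $S^2$. Everything else is bookkeeping built on Theorem~\ref{KID_eqns_main_cone_infinity} and the discussion of translational KIDs in the list preceding the proposition.
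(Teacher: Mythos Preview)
Your proof is correct and follows essentially the same route as the paper: the proposition is indeed just a restatement of the third special case discussed immediately before it, and your use of form (ii) of Lemma~\ref{lemma_red_KIDs} together with the regularity condition $\omega_{AB}=O(r^2)$ is exactly how the paper obtains $\omega_{AB}=0$. Your explicit treatment of the zero set of $c$ is a welcome detail that the paper leaves implicit.
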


This is linked with another observation: Since, in the non-flat case,  any non-trivial Killing field of the physical spacetime (i.e.\ a vector field satisfying the unphysical Killing equations)
has a non-trivial $d^A$, for a given $d^A$ there can be at most one $c$ such that $(c,d^A)$ solves the reduced KID equations. Now the standard 2-sphere admits 6 independent conformal Killing vector fields $d^A$. We thus have:

\begin{proposition}
 Any non-flat spacetime with a regular $C_{i^-}$-cone admits at most 6 independent Killing vector fields.
\end{proposition}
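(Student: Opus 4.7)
The plan is to formalize the heuristic given in the paragraph immediately preceding the proposition. The strategy is to exhibit an injective linear map from the Lie algebra of Killing vector fields of the physical spacetime into the 6-dimensional space $\mathrm{Conf}(S^2,s)$ of conformal Killing fields of the round 2-sphere, and then read off the dimension bound.

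First I would appeal to Theorem~\ref{KID_eqns_main_cone_infinity}: every Killing vector field $X$ of the physical spacetime is in bijection with a pair $(c,d^A)$ on $S^2$ where $d^A$ and $\mcD_A c$ are conformal Killing fields on $(S^2,s_{AB})$, and where $(c,d^A)$ satisfies the reduced KID equation~\eq{thm_reduced_KID}. Define
\begin{equation*}
\Phi:\mathfrak{g}\longrightarrow \mathrm{Conf}(S^2,s),\qquad X\mapsto d^A,
\end{equation*}
where $\mathfrak{g}$ denotes the Lie algebra of Killing fields. Linearity of $\Phi$ is clear from the linear dependence of $(c,d^A)$ on $X$ given by \eq{cond_X0}--\eq{cond_Y}.

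The crux is to show that $\Phi$ is injective. Suppose $X,\hat X\in\mathfrak{g}$ with $\Phi(X)=\Phi(\hat X)$; then $X-\hat X$ corresponds to a pair of the form $(c_0,0)$ with $c_0:=c-\hat c$. Plugging $d^A=0$ into equivalent formulation~(ii) of Lemma~\ref{lemma_red_KIDs} reduces the KID equation to
\begin{equation*}
c_0\, r^2\,\partial_1(\partial_1+r^{-1})\omega_{AB}=0.
\end{equation*}
If $c_0\not\equiv 0$, the computation already performed in special case~3 of the preceding subsection shows $\partial_1(\partial_1+r^{-1})\omega_{AB}=0$, which together with the regularity condition $\omega_{AB}=O(r^2)$ forces $\omega_{AB}\equiv 0$ and hence $\lambda_{AB}\equiv 0$, contradicting the non-flatness assumption. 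Therefore $c_0\equiv 0$, so $X=\hat X$ and $\Phi$ is injective.

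Since $\Phi$ is injective and $\dim\mathrm{Conf}(S^2,s)=6$ (see Appendix~\ref{app_conformal}), we conclude $\dim\mathfrak{g}\leq 6$, which is the assertion. The only real content beyond bookkeeping is the use of the previously-established special case~3 to rule out pairs with vanishing angular component; everything else is linear algebra applied to Theorem~\ref{KID_eqns_main_cone_infinity}. I do not foresee a genuine obstacle, only the need to be careful that the argument which eliminates $(c_0,0)$ requires solely the non-triviality of $\omega_{AB}$ and not any further regularity beyond what is already assumed in the set-up of $C_{i^-}$.
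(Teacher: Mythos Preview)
Your proposal is correct and is precisely the formalization of the argument the paper sketches in the paragraph immediately preceding the proposition: the map $X\mapsto d^A$ into $\mathrm{Conf}(S^2,s)$ is linear and, by special case~3, injective in the non-flat situation, whence $\dim\mathfrak g\le 6$. The only point worth making explicit is that $c_0$ is an angle-dependent function (a combination of $\ell=0,1$ spherical harmonics), so $c_0\not\equiv 0$ a priori yields $\partial_1(\partial_1+r^{-1})\omega_{AB}=0$ only on the open dense set $\{c_0\ne 0\}$; continuity then extends this to all of $S^2$, and the Fuchsian argument with $\omega_{AB}=O(r^2)$ gives $\omega_{AB}\equiv 0$ as required.
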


Now let us assume that there are two distinct rotations,
i.e.\ 2 Killing fields $d^{(1)}$ and $d^{(2)}$ on $(S^2,s_{AB}\mathrm{d}x^A\mathrm{d}x^B)$
such that $(c=0,d=d^{(i)})$, $i=1,2$, solves the reduced KID equations.
Then $(c=0,d=d^{(3)})$ with $d^{(3)}=[d^{(1)},d^{(2)}]$ provides another independent, non-trivial solution of the reduced KID equations.
Altogether we have
\begin{equation*}
 \mcL_{d^{(i)}}\lambda_{AB}=0\;, \quad i=1,2,3 \quad \Longrightarrow \quad \lambda_{AB}\propto s_{AB}  \quad \Longrightarrow \quad \lambda_{AB}=0
\;,
\end{equation*}
since $\lambda_{AB}$ is trace-free. This recovers the well-known fact that two rotational symmetries imply Minkowski spacetime.

\vspace{1.2em}
\noindent {\textbf {Acknowledgements}}
It is a pleasure to thank my advisor Piotr T. Chru\'sciel for various valuable comments as well as for reading a first draft of this
article.
Supported in part by the  Austrian Science Fund (FWF): P 24170-N16.

\newpage

\appendix

\section{Fuchsian ODEs}
\label{app_Fuchsian}

As we have not been able to find an adequate reference,
we state and prove here a key result about Fuchsian ODEs which is used in our work.

\begin{lemma}
\label{lemma_Fuchsian}
Let $a>0$, and for $r\in (0,a)$
consider a first-order ODE-system of the form
\begin{equation}
 \partial_r\phi = r^{-1}A \phi + M(r)\phi\;,
\label{Fuchsian_system}
\end{equation}
for a set of fields $\phi=(\phi^I)$, $I=1,\dots,N$,
where $A$ is an
 $N\times N$-matrix,
and where $M(r)$ is a continuous
map
on $[0,a)$  with values in $N\times N$-matrices
which satisfies $r\|M(r)\|_{op}=o(1)$. Let $\lambda$   denote the smallest number so that
$$
 \braket{\phi, A \phi } \le \lambda \|\phi\|^2
 \;.
$$
Suppose that there exists $\epsilon>0$ such that
$$ \phi =O(r^{\lambda+\epsilon})
\;.
$$
Then
$$
 \phi\equiv 0
 \;.
$$
\end{lemma}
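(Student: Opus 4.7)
The plan is to run an energy estimate on $u(r):=\|\phi(r)\|^2$, rescale it by an appropriate power of $r$, and show that the decay hypothesis $\phi=O(r^{\lambda+\epsilon})$ forces the rescaled quantity to both vanish at $r=0$ and be monotonically decreasing, which together imply it must vanish identically.

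First I would differentiate $u$ along a solution of \eq{Fuchsian_system}:
\begin{equation*}
\partial_r u = 2\langle \phi,\partial_r\phi\rangle = 2r^{-1}\langle \phi,A\phi\rangle + 2\langle \phi,M(r)\phi\rangle \;\le\; \Big(\frac{2\lambda}{r}+2\|M(r)\|_{op}\Big)\,u,
\end{equation*}
using the definition of $\lambda$ and Cauchy--Schwarz. Next, fix any $\delta\in(0,2\epsilon)$ and consider the rescaled quantity $w(r):=r^{-2\lambda-\delta}u(r)$, for which a direct computation yields
\begin{equation*}
\partial_r w \;\le\; \Big(-\frac{\delta}{r}+2\|M(r)\|_{op}\Big)w.
\end{equation*}
Since $r\|M(r)\|_{op}=o(1)$, there exists $r_1\in(0,a)$ such that $2\|M(r)\|_{op}\le\delta/(2r)$ on $(0,r_1)$, giving the clean bound $\partial_r w\le -\frac{\delta}{2r}w$ on that interval.

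Now comes the key comparison argument. Suppose towards contradiction that $w(r_0)>0$ for some $r_0\in(0,r_1)$. Let $I$ be the connected component of $\{w>0\}\cap(0,r_0]$ containing $r_0$; on $I$ the inequality $\partial_r\log w\le-\delta/(2r)$ integrates to
\begin{equation*}
w(r)\;\ge\;w(r_0)\Big(\frac{r_0}{r}\Big)^{\delta/2} \qquad \text{for } r\in I,\ r\le r_0.
\end{equation*}
If $I$ did not reach $0$, its left endpoint $r_*>0$ would satisfy $w(r_*)=0$ while the right-hand side stays bounded away from zero, a contradiction; hence $I=(0,r_0]$ and the lower bound above holds on all of $(0,r_0]$. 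Letting $r\downarrow 0$ forces $w(r)\to\infty$.

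On the other hand, the hypothesis $\phi=O(r^{\lambda+\epsilon})$ gives $u=O(r^{2\lambda+2\epsilon})$, hence
\begin{equation*}
w(r)=r^{-2\lambda-\delta}u(r)=O(r^{2\epsilon-\delta})\to 0 \quad \text{as } r\to 0,
\end{equation*}
by the choice $\delta<2\epsilon$. This contradicts the lower bound, so $w\equiv 0$ on $(0,r_1)$, i.e.\ $\phi\equiv 0$ on $(0,r_1)$. Standard uniqueness for the linear ODE \eq{Fuchsian_system} with continuous coefficients on $(0,a)$ then propagates this to $\phi\equiv 0$ on all of $(0,a)$. The only delicate point is the comparison step that rules out interior zeros of $w$; this is the routine open--closed argument sketched above and is the main thing to write out carefully.
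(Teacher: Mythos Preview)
Your proof is correct and follows essentially the same energy-estimate strategy as the paper: both differentiate $r^{-2k}\|\phi\|^2$ for a suitable $k\in(\lambda,\lambda+\epsilon)$, use the definition of $\lambda$ together with $r\|M(r)\|_{op}=o(1)$ to get a favourable sign, and invoke the decay hypothesis to conclude vanishing near $r=0$, then propagate by ODE uniqueness. The only cosmetic difference is packaging: the paper integrates the differential inequality from $r_0$ to $r$, sends $r_0\to 0$ (the boundary term disappears by the decay hypothesis), and reads off $r^{-2k}\|\phi\|^2\le 0$ directly from the sign of the coefficient, whereas you run a backward Gr\"onwall/contradiction argument. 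Both arrive at the same conclusion with the same input.
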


\begin{proof}
The proof is done  by a simple energy estimate.
Set
\begin{eqnarray*}
 \braket{\phi,\psi }:= \sum_I\phi^I\psi^I\;, \quad \|\phi\|^2:=\braket{\phi,\phi}
\;,
\end{eqnarray*}
%
then for any $k\in\mathbb{R}$
\begin{eqnarray*}
 \partial_r(r^{-2k}\|\phi\|^2) &=&  2r^{-2k}\phi\partial_r\phi -2kr^{-2k-1}\|\phi\|^2
\\
&=&2  r^{-2k-1}(\braket{\phi, A \phi }+r\braket{\phi, M(r)\phi}-k\|\phi\|^2)
\\
&\leq & 2  r^{-2k-1}(\lambda  -k+r\|M(r)\|_{op} )\|\phi\|^2
\;.
\end{eqnarray*}
 Applying $\int_{r_0}^r$  yields (assume $r_0<r$)
\begin{eqnarray*}
 r^{-2k}\|\phi(r)\|^2
&\leq&  r_0^{-2k}\|\phi(r_0)\|^2 + 2 \int_{r_0}^r(\lambda -k+\tilde r\|M(\tilde r)\|_{op} ) \tilde r^{-2k-1}\|\phi\|^2\,\mathrm{d}\tilde r
\\
&\leq&  r_0^{-2k}\|\phi(r_0)\|^2 + 2\Big(\lambda-k+\sup_{0<\tilde r< r}(\tilde r\|M(\tilde r)\|_{op}) \Big) \int_{r_0}^r \tilde r^{-2k-1}\|\phi\|^2\,\mathrm{d}\tilde r
\;.
\end{eqnarray*}
Due to our assumption $\phi=O(r^{\lambda + \varepsilon})$ any $\lambda < k_0 < \lambda +\varepsilon $ satisfies $r^{-2k_0}\|\phi\|^2=O(r^{2\delta})$, where $\delta:= \lambda - k_0 +\varepsilon>0$.
We then take the limit $r_0\rightarrow 0$,
\begin{eqnarray*}
 r^{-2k_0}\|\phi(r)\|^2
&\leq&   2\Big(\lambda -k_0+\sup_{0<\tilde r< r}(\tilde r\|M(\tilde r)\|_{op}) \Big) \int_{0}^r \tilde r^{-2k_0-1}\|\phi\|^2\,\mathrm{d}\tilde r
\\
&\leq& 0 \quad \text{for sufficiently small $r$}
\;.
\end{eqnarray*}
Thus $\phi$ vanishes for small $r$, but then it needs to vanish for all $r$.
\qed
\end{proof}



\section{Conformal Killing fields on the round 2-sphere}
\label{app_conformal}

We consider the 2-sphere equipped with the standard metric
\begin{eqnarray*}
 s=s_{AB}\mathrm{d}x^A\mathrm{d}x^B = \mathrm{d}\Theta^2 + \sin^2\Theta\mathrm{d}\varphi^2
\;.
\end{eqnarray*}
It admits the maximal number of independent conformal Killing vector fields, which is 6.
There are three independent Killing vector fields,
\begin{eqnarray*}
 K_{(1)}&=&\partial_{\varphi}\;,
\\
K_{(2)} &=& \sin\varphi \partial_{\Theta} +\cot\Theta\cos\varphi\partial_{\varphi}\;,
\\
K_{(3)} &=& \cos\varphi \partial_{\Theta} -\cot\Theta\sin\varphi\partial_{\varphi}\;,
\end{eqnarray*}
and three independent conformal Killing fields which are not Killing fields,
\begin{eqnarray*}
 C_{(1)} &=& \sin\Theta \partial_{\Theta} \;,
\\
C_{(2)} &=& \cos\Theta\cos\varphi \partial_{\Theta} -\sin^{-1}\Theta\sin\varphi\partial_{\varphi}
\;,
\\
C_{(3)} &=& \cos\Theta\sin\varphi \partial_{\Theta} +\sin^{-1}\Theta\cos\varphi\partial_{\varphi}
\;.
\end{eqnarray*}
All the $C_{(i)}$'s turn out to be gradients of $\ell=1$-spherical harmonics,
\begin{eqnarray*}
 C^A_{(1)}&=&\mcD^A c_{(1)}\;, \quad \text{where} \quad c_{(1)}\,=\,\cos\Theta\;,
\\
 C^A_{(2)}&=&\mcD^A c_{(2)}\;, \quad \text{where} \quad c_{(2)}\,=\,\sin\Theta\cos\varphi\;,
\\
 C^A_{(3)}&=&\mcD^A c_{(3)}\;, \quad \text{where} \quad c_{(3)}\,=\,\sin\Theta\sin\varphi\;,
\end{eqnarray*}
Moreover,
\begin{eqnarray*}
 \mcD_AC_{(i)}^A \,=\, \mcD_A\mcD^Ac_{(i)} \,=\, -2c_{(i)}\;, \quad i=1,2,3\;.
\end{eqnarray*}
The conformal Killing fields satisfy the commutation relations
\begin{eqnarray*}
 \big[ K_{(i)}, K_{(j)}\big] &=& \varepsilon_{ijk}K_{(k)}\;,
\\
 \big[ C_{(i)}, C_{(j)} \big] &=& - \varepsilon_{ijk}K_{(k)}\;,
\\
 \big[ K_{(i)}, C_{(j)}\big]    &=&\varepsilon_{ijk}C_{(k)}\;,
\end{eqnarray*}
i.e.\ they form a Lie algebra isomorphic to the Lie algebra  $so(3,1)$  of the Lorenz group in 4 dimensions.
The Killing fields form a Lie subalgebra.

\end{document}